\documentclass[12pt]{article}
\usepackage{amsmath,amsthm,amssymb}
\usepackage{newpxtext}
\usepackage{newpxmath}
\usepackage{graphicx}
\usepackage{subcaption}
\usepackage{setspace}
\usepackage{hyperref}
\hypersetup{
    colorlinks=true,
    linkcolor=blue,
    citecolor=blue,
    filecolor=magenta,
    urlcolor=blue,
}
\usepackage{float}
\usepackage{listings}
\usepackage{color}
\usepackage{tikz}
\usepackage{pgfplots}
\pgfplotsset{compat=1.18}
\usepackage{pgfplotstable}
\usepackage{caption}
\usepackage{geometry}
\usepackage{booktabs}
\usepackage{multirow}
\usepackage{longtable}
\usepackage{tabularx}
\usepackage{array}
\usepackage{pdflscape}
\usepackage{rotating}
\usepackage{tcolorbox}
\usepackage{algorithm2e}
\usepackage{bbm}
\usepackage[authoryear,round]{natbib}

\usepackage{xcolor}

\newtheorem{definition}{Definition}
\newtheorem{lemma}{Lemma}
\newtheorem{proposition}{Proposition}
\newtheorem{theorem}{Theorem}
\newtheorem{corollary}{Corollary}
\newtheorem{remark}{Remark}
\newtheorem{assumption}{Assumption}
\newtheorem{example}{Example}

\definecolor{gammaCol}{RGB}{230,97,1}
\definecolor{expCol}{RGB}{0,119,187}
\definecolor{thetaCol}{RGB}{192,0,0}

\DeclareMathOperator*{\argmax}{arg\,max}

\newcommand{\E}{\mathbb{E}}
\renewcommand{\P}{\mathbb{P}}
\newcommand{\R}{\mathbb{R}}

\newcommand{\Law}{\mathcal{L}}
\DeclareMathOperator{\Sym}{Sym}

\begin{document}

\title{Dynamic Decision-Making under Model Misspecification: A Stochastic Stability Approach\footnote{This paper is a revised version of the working paper titled ``Dynamic Decision-Making under Model Misspecification''. We thank Pedro Dal Bó, Jack Fanning, Junnan He, Toru Kitagawa, Soonwoo Kwon, Heejun Lee, Govind Menon, Andriy Norets, Jonathan Roth, Susanne Schennach, Hui Wang and the participants of Brown Econometrics/Theory Seminar for helpful comments and suggestions. The second author is supported by  Vannevar Bush Faculty Fellowship ONR-N0014-21-1-2887.}}
\author{Xinyu Dai\footnote{Department of Economics, Brown University. Email: \href{mailto:xinyu_dai@brown.edu}{xinyu\_dai@brown.edu}} \hspace{2cm} Daniel Chen\footnote{Department of Applied Mathematics, Brown University. Email: \href{mailto:daniel_t_chen@brown.edu}{daniel\_t\_chen@brown.edu}} \hspace{2cm} Yian Qian\footnote{Department of Applied Mathematics, Brown University. Email: \href{mailto:yian_qian@brown.edu}{yian\_qian@brown.edu}}}
\newgeometry{left=1in, right=1in, top=0.5in, bottom=1in}
\maketitle
\thispagestyle{empty}
\vspace{-1.5em}

\begin{abstract}\small
    Firms, platforms, and policymakers increasingly rely on online learning algorithms to guide dynamic decisions, yet long-run consequences under model misspecification remain poorly understood. In a two-model bandit under Thompson Sampling, we provide a complete classification of posterior dynamics: beliefs may converge to a single model, or their distribution may converge to a nondegenerate stationary measure outside the Berk--Nash framework. We extend the analysis to general finite model classes through a stochastic-stability framework on the belief simplex, delivering sufficient conditions for ergodicity versus transience with recursive dimensional reduction, showing that endogenous experimentation can sustain persistent policy variation even with infinite data.
\end{abstract}

\textbf{Keywords:} Misspecified Learning, Posterior Sampling, Dynamic Decision-Making, Stochastic Stability

\medskip

\noindent\textbf{JEL Classification:} C11, C63, D83

\newpage
\restoregeometry

\section{Introduction}
Firms learn demand from sales data, platforms learn user responses from engagement metrics, and policymakers learn treatment effects from policy outcomes. In each setting, the decision maker’s actions shape the data available for learning: a firm that sets high prices observes demand only at high prices, a platform that recommends certain content observes engagement only with that content. When the decision maker’s model of the environment is also \emph{misspecified}, the interaction between endogenous data and incorrect beliefs can generate surprising long-run dynamics. This paper studies what happens to learning and decision-making in such environments.

The existing literature on misspecified learning characterizes conditions under which beliefs and actions converge to stable long-run outcomes, typically Berk-Nash equilibria or pseudo-true parameters \citep{espondaBerkNashEquilibrium2016,fudenbergLimitPointsEndogenous2021a}. These analyses generally focus on decision rules in which the agent deterministically best-responds to current beliefs. A large class of practical learning algorithms, however, use \emph{posterior sampling}: in each period the agent draws a model from her posterior and acts as if that draw were correct. The appeal is that posterior sampling automatically generates exploration---the agent occasionally tries actions that are not myopically optimal, driven by residual uncertainty---without requiring the decision maker to solve an intractable dynamic programming problem. The most prominent such algorithm is Thompson Sampling \citep{thompsonLikelihoodThatOne1933,russoTutorialThompsonSampling2020}, whose built-in randomization generates endogenous experimentation that interacts with misspecification in ways that deterministic rules cannot produce. Under correct specification, Thompson Sampling is known to perform well \citep{agrawalAnalysisThompsonSampling2012,kaufmannThompsonSamplingAsymptotically2012}. Under misspecification, the consequences of this interaction are poorly understood.

The central question of this paper is what kinds of long-run dynamics misspecified posterior-based learning can generate. We use the misspecified bandit as a clean laboratory: the decision maker chooses among actions, observes rewards, and updates beliefs over a finite set of candidate models, none of which need be correct. This setting isolates the feedback loop between beliefs, actions, and evidence while remaining tractable enough for a complete characterization. Specifically, we ask: \emph{What are the long-run belief dynamics induced by Thompson Sampling under misspecification, and what do they imply for action frequencies and regret?}

We address this question in two steps. First, we provide a complete dynamic classification of posterior evolution in the simplest nontrivial environment---a misspecified two-armed, two-model Gaussian bandit---and derive sharp predictions for limiting beliefs, action frequencies, and asymptotic regret. Second, we extend the analysis to general finite model classes via a stochastic-stability framework that treats posterior evolution as a Markov process on the belief simplex, yielding sufficient conditions for ergodic versus transient behavior and recursive dimensional reductions.

In the two-arm setting, the agent chooses $A_t\in\{1,2\}$ and observes $R_t\mid(A_t=i)\sim\mathcal N(g(i),1)$, while believing that under model $\theta\in\{\nu,\gamma\}$ the reward satisfies $R_t\mid(A_t=i)\sim\mathcal N(\theta_i,1)$. The dynamics of the posterior $\pi_t(\theta)$ are governed by two primitives: (i) whether $\nu$ and $\gamma$ recommend the same arm (agreement vs.\ disagreement), and (ii) for each arm $i$, the sign of the expected log-likelihood ratio drift $\Delta_i\equiv\mathbb E[\frac{f_\nu (R_t \mid A_t = i) }{f_\gamma (R_t \mid A_t=i)}]$, which indicates which model is closer to the true data-generating process in KL divergence when arm $i$ is pulled. When the models agree on the preferred arm, TS quickly becomes effectively deterministic and the posterior collapses to whichever model the evidence favors. The interesting case arises when the models disagree, so that TS induces a feedback loop between beliefs, actions, and evidence. We show that exactly three qualitative regimes arise:
\begin{itemize}
    \item A \emph{self-confirming} regime ($\Delta_1>0,\,\Delta_2<0$): each model is supported by data from the arm it recommends, so the posterior eventually concentrates on one model. Which model is selected is random and path-dependent.
    \item A \emph{uniform-dominance} regime ($\Delta_1$ and $\Delta_2$ have the same sign): one model is closer to the true DGP under both arms, so the posterior converges to it from any interior prior.
    \item A \emph{self-defeating} regime ($\Delta_1<0,\,\Delta_2>0$): each model is undermined by data from its recommended arm, preventing posterior convergence. Instead, $\pi_t$ evolves as an ergodic Markov process on $(0,1)$ with a nondegenerate stationary measure $\mu$, generating persistent oscillation between the two models.
\end{itemize}

As a concrete illustration, consider a firm choosing between two pricing strategies while learning demand from sales data. The firm entertains two demand models that disagree on whether a high or low price maximizes revenue. When each model's preferred price generates data that confirms it (self-confirming), the firm locks into one pricing rule forever, possibly the wrong one. When each model's preferred price instead generates refuting data (self-defeating), the firm oscillates persistently between pricing strategies, never settling down, producing stochastic price variation from a stationary demand environment. Our classification predicts exactly when each outcome arises from the primitives of the demand models.

Although our primary focus is on belief and action dynamics, we also study performance implications by linking regret to the limiting belief regime. The key determinant is whether TS eventually assigns full weight to a model whose prescribed action coincides with the true optimal arm. When the posterior concentrates on such a model, TS plays the optimal arm almost always and cumulative regret remains sublinear. Under misspecification, however, our classification shows two distinct ways in which this can fail. In the \emph{self-defeating} regime, the posterior does not concentrate and instead fluctuates persistently in the interior, implying that TS continues to sample suboptimal actions with positive long-run frequency and therefore incurs linear regret. In the \emph{self-confirming} regime, the posterior \emph{does} concentrate, but can concentrate on an incorrect model that recommends the wrong arm; in that case TS becomes asymptotically deterministic on a suboptimal action and regret is again linear. Thus, misspecification can yield either near-optimal performance or severe long-run losses, depending on the interaction between the model--action mapping and the direction of statistical evidence.

We then extend the analysis to a finite model class $\Theta=\{\theta^{(1)},\ldots,\theta^{(M)}\}$ with an arbitrary number of actions. Our starting point is that under TS the posterior belief $\{\pi_t\}$ is itself a Markov process on the simplex $\Delta_M$. We study its \emph{stochastic stability} by asking whether the law of the posterior settles down to a steady-state distribution or instead drifts toward the boundary (i.e., eliminates models). In this context, we say the posterior is \emph{ergodic} if there exists a unique stationary distribution $\mu\in\mathcal P(\mathrm{int}\,\Delta_M)$ such that $\mathcal L(\pi_t)\Rightarrow \mu$ as $t\to\infty$, or equivalently, long-run time averages of functions of $\pi_t$ converge to their $\mu$-expectations. In this case, TS exhibits persistent belief mixing: the posterior keeps fluctuating in the interior rather than concentrating on a single model.

This approach yields two sufficient conditions that are easy to interpret. 
The first is an \emph{angle condition}: relative to an interior mean-field fixed point $S^\star$, the expected drift of the log-odds points back toward $S^\star$ everywhere away from it $\langle \xi(S),\, S-S^\star\rangle<0$. Geometrically, this rules out systematic outward drift and ensures that beliefs experience a restoring force toward the interior-mixing region. 
The second is a \emph{spectral condition} based on softmax potential geometry, which implies the mean drift can be factorized as a quasi-gradient system; we show that if the symmetric part of the drift matrix $G$ is negative definite, $\mathrm{Sym}(G)\prec 0$, and stochastic fluctuations are not too large relative to the drift strength, then the interior configuration is stochastically stable and the posterior is ergodic. Conversely, if $\mathrm{Sym}(G)\succ 0$, the interior configuration is unstable and beliefs are driven toward the boundary. Both conditions are established using Lyapunov-type drift arguments for general-state Markov chains (see, e.g., \citet{meynMarkovChainsStochastic1993,khasminskiiStochasticStabilityDifferential2012}). 
The angle condition corresponds to a Lyapunov function based on distance to $S^\star$, while the spectral condition uses one tied to the softmax geometry \citep{amariMethodsInformationGeometry2000}.

In the transient case, posterior mass progressively shifts toward the boundary of the simplex: TS assigns vanishing probability to some models, so beliefs effectively concentrate on a smaller subset, which is typically a single vertex or a proper face.
Moreover, when trajectories approach a face $\Delta_I$, the induced TS dynamics restricted to the surviving models $I$ inherit the same log-odds and drift geometry, yielding an \emph{inductive reduction}: iterating the same interior-versus-boundary tests on successive faces identifies terminal vertices and, when it occurs, ergodicity on low-dimensional faces. 
Finally, we show under mild symmetry in misspecification, the geometric requirements for full-dimensional interior ergodicity become increasingly stringent as $M$ grows, so vertex selection or low-dimensional face ergodicity should dominate in large model classes.

\paragraph{Related literature.}
Our paper contributes to the economics literature on learning under model misspecification. The foundational equilibrium concepts in this area are self-confirming equilibrium \citep{fudenbergSelfConfirmingEquilibrium1993}, in which each player's beliefs are correct along the equilibrium path but may be wrong off-path, and Berk-Nash equilibrium \citep{espondaBerkNashEquilibrium2016}, in which agents optimize given beliefs that minimize KL divergence from the true data distribution. Our self-confirming regime produces outcomes that are strict Berk-Nash equilibria (Corollary~1): each limiting action is optimal under the limiting belief, and the limiting belief KL-minimizes against the stationary data that action generates. What Thompson Sampling adds is a selection mechanism: posterior sampling selects among self-confirming limits with well-defined probabilities that depend on the prior and the evidence geometry. This parallels the persistence of superstitious beliefs identified by \citet{fudenbergSuperstitionRationalLearning2006}, where false beliefs survive because the data they generate never refute them.

A growing body of work studies the dynamics of convergence to such equilibria. \citet{fudenbergLimitPointsEndogenous2021a} show that only uniform Berk-Nash equilibria can be long-run limits of misspecified Bayesian learning; \citet{fudenbergWhichMisspecificationsPersist2023a} characterize which misspecifications persist; \citet{espondaAsymptoticBehaviorBayesian2021a} characterize asymptotic behavior via ODE methods; \citet{frickBeliefConvergenceMisspecified2023} develop martingale-based convergence results; and \citet{heidhuesConvergenceMisspecifiedLearning2021} establish convergence in one-dimensional continuous misspecified learning settings using stochastic approximation. These papers all identify conditions under which beliefs converge to a point. Notably, \citet{frickBeliefConvergenceMisspecified2023} observe that their identification condition, which ensures long-run beliefs are concentrated on a single state, rules out active learning settings such as bandit problems, where the agent's actions determine which signals are observed and beliefs may remain mixed in the long run. Our paper directly addresses this open case. The self-defeating regime yields a qualitatively different outcome: the posterior does not converge to any point but instead converges in distribution to a nondegenerate invariant measure, producing persistent belief mixing with infinite data. This phenomenon, belief cycling under misspecification, was first identified by \citet{nyarkoLearningMisspecifiedModels1991}, who showed that a monopolist with misspecified demand can cycle forever. \citet{heidhuesUnrealisticExpectationsMisguided2018} exhibit a related ``misdirected learning'' spiral in which an overconfident agent's actions generate data that reinforce incorrect beliefs. Our contribution is to provide a general classification (tied to the sign pattern of the expected log-likelihood drifts) of when cycling arises versus when beliefs converge, and to establish the complete dynamics in each regime. The qualitative classification is robust across posterior-based decision rules (Remark~\ref{rem:role_decision_rule}); Thompson Sampling's smooth logistic link function is what enables the precise characterization.\footnote{%
\citet{fudenbergActivelearningMisspecifiedPrior2017} study active learning with a misspecified prior and show that a patient agent's beliefs may not converge, while a myopic agent's do. Our setting differs in that the misspecification is in the likelihood rather than the prior, and the non-convergence arises from the endogenous feedback between beliefs, actions, and evidence rather than from patience.}

Several papers study robustness to misspecification or welfare consequences of biased learning. \citet{bohrenLearningHeterogeneousMisspecified2021} provide a general framework for misspecified social learning with a prospective/retrospective bias decomposition. \citet{frickWelfareComparisonsBiased2024} develop welfare comparisons for biased learning. \citet{lanzaniDynamicConcernMisspecification2025} shows that an agent who is aware of potential misspecification may exhibit endogenous behavioral cycles, while \citet{baRobustMisspecifiedModels2023} characterizes which misspecifications are globally vs.\ locally robust. \citet{ghoshRobustComparativeStatics2025} derives monotone comparative statics for steady-state behavior under misspecified Bayesian learning. Our paper takes a different approach: rather than studying an agent who is aware of or robust to misspecification, we characterize the dynamics and welfare consequences of a naive posterior-sampling agent. For comprehensive surveys of this literature, see \citet{bohrenMisspecifiedModelsLearning2025} and \citet{espondaLearningEquilibriumModel2026}.

Our paper also relates to the evolutionary game theory literature that interprets Bayesian updating as a form of replicator dynamics \citep{shaliziDynamicsBayesianUpdating2009,weibullEvolutionaryGameTheory1997,heMisspecifiedLearningEvolutionary2025}. We extend this connection by studying posterior learning dynamics \emph{with endogenous experimentation} induced by Thompson Sampling.\footnote{%
Existing work often establishes posterior convergence under assumptions that guarantee a law of large numbers for log-likelihood ratios along the realized history \citep[e.g.,][]{shaliziDynamicsBayesianUpdating2009}. We do not impose such averaging conditions ex ante because the signal process is endogenous under posterior-based experimentation. Instead, we identify structural conditions under which stochastic stability and ergodicity arise, and we characterize the resulting long-run belief dynamics.}

We contribute to the literature on bandit algorithms with misspecification. The machine learning literature primarily studies misspecification through worst-case regret bounds and robustness guarantees \citep{fosterAdaptingMisspecificationContextual2021,liModelMisspecificationReinforcementLearning2024}. \citet{fanFragilityOptimizedBandit2024} show that optimized bandit algorithms are fragile to slight misspecification; \citet{fanDiffusionApproximationsThompson2025} develop diffusion approximations for Thompson Sampling. \citet{liDynamicSelectionAlgorithmic2023} identify self-fulfilling bias in endogenous contextual bandits. Our approach differs: rather than bounding worst-case regret, we classify the qualitative long-run dynamics (convergence vs.\ ergodicity vs.\ transience) and show that the asymptotic object of learning can be an invariant distribution rather than a point.\footnote{Related questions are also studied in the reinforcement learning literature, which often considers nonparametric environments \citep{lattimoreLearningGoodFeature2020,bogunovicMisspecifiedGaussianProcess2021}. Our focus differs because we study the structured bandit problem with typical parametrizations, well suited to economic applications in which model parameters have behavioral or policy interpretations.}

Finally, our work complements the econometrics literature on inference under misspecification \citep{berkLimitingBehaviorPosterior1966,whiteMaximumLikelihoodEstimation1982,giacominiRobustBayesianInference2021a,bonhommeMinimizingSensitivityModel2022a,andrewsMisspecifiedMomentInequality2024a,mullerLocallyRobustEfficient2024,armstrongAdaptingMisspecification2025}. That literature studies what an econometrician can learn from misspecified models with exogenous data; we study what happens when a decision maker acts on misspecified models over time, generating the data endogenously. The KL-minimizing pseudo-truth from \citet{berkLimitingBehaviorPosterior1966} enters our analysis as the determinant of drift direction: the sign of $\Delta_i$ reflects which model is the KL-closest to the true data-generating process under each action.

The remainder of the paper is organized as follows. Section~2 presents the general model setup. Section~3 provides the complete analysis of the two-arm, two-model case, including regret implications and a dynamic pricing illustration. Section~4 develops the general stochastic-stability framework for finite model classes. Section~5 concludes. Proofs, additional examples, and simulation details are in the Appendix.

\section{Model Setup}

We study a dynamic decision problem evolving over discrete time periods $t = 0,1,2,\ldots$. In each period, a decision maker chooses an action $A_t$ from a finite action set $\mathcal A$, and then observes a real-valued stochastic reward $R_t$ drawn from an unknown conditional distribution
\begin{equation}\label{eq:true_reward}
R_t \sim g(\cdot \mid A_t = a),
\end{equation}
where $g(\cdot \mid a)$ is a probability distribution on $\mathbb R$. We write $g(a) := \mathbb{E}_g[R_t \mid A_t = a]$ for the true conditional mean reward of action $a$.

The decision maker does not observe $g$ and instead evaluates actions using a parametric family of reward models. Specifically, rewards are assumed to follow a conditional density
\begin{equation}\label{eq:model_reward}
R_t \sim f_\theta(\cdot \mid A_t),
\end{equation}
where $\theta$ belongs to a finite parameter set $\Theta \subset \mathbb R^d$. However, the \emph{true} reward function \(g\) may not be representable within this parametric class, meaning the model may be misspecified.

\begin{definition}[Model Misspecification]\label{def:misspec}
The model is misspecified if there exists no parameter
$\theta \in \Theta$ such that
\[
g(\cdot \mid a) = f_\theta(\cdot \mid a)
\quad \text{for all } a \in \mathcal A.
\]
\end{definition}

The objective of the decision maker is to maximize cumulative rewards over time\footnote{Here we assume constant 1 discount factor for simplicity.}. To evaluate actions, the decision maker uses Bayesian updating to maintain a posterior distribution $\pi_t$ over $\Theta$ based on observed actions and rewards in past periods.

We use $\mathbb F = (\mathcal F_t)_{t \in \mathbb N}$ to denote the filtration generated by the history of actions and rewards up to time $t$:
\begin{equation}
\mathcal{F}_t = \sigma(A_0, R_0, A_1, R_1, \ldots, R_{t-1}, A_t).
\label{eq:filtration}
\end{equation}

A (possibly randomized) policy $\mu=\{\mu_t\}_{t\ge 0}$ is a sequence of decision rules such that, at each date $t$, the realized action $A_t$ is $\mathcal F_{t-1}$-measurable. Equivalently, $\mu_t(\cdot\mid\mathcal F_{t-1})$ denotes the conditional distribution over actions given past history. 

In principle, an optimal Bayesian decision maker chooses actions to maximize cumulative rewards by solving a dynamic programming problem over belief states. This problem involves a fundamental exploration–exploitation trade-off: at each date, the agent must balance the immediate expected payoff of an action against its informational value for improving future decisions. A myopic Bayesian policy, which selects the action maximizing the current posterior expected reward, ignores this dynamic learning incentive and may prematurely lock into suboptimal actions.

Solving the fully optimal dynamic program is, however, computationally intractable in general\footnote{In special cases with independent arms and discounted rewards, this problem admits an index representation via the Gittins index, yielding an exact solution under strong separability assumptions. Outside such settings—most notably in structural bandit problems where actions are informative about multiple alternatives—indexability fails and exact Bayesian optimality becomes infeasible.}, as the state space consists of posterior beliefs whose dimension grows with the complexity of the model \citep{lattimoreBanditAlgorithms2020}. We therefore focus on Thompson Sampling, a canonical Bayesian learning heuristic originally proposed by \citet{thompsonLikelihoodThatOne1933}. TS replaces dynamic optimization with a simple probability-matching rule: at each date, the agent samples a parameter from her posterior distribution and chooses the action that is optimal under that parameter. This procedure induces intrinsic randomization driven by posterior uncertainty, generating exploration without explicitly solving the underlying dynamic program. We next introduce additional notations and formally define the algorithm.

For each $\theta \in \Theta$ and $a \in \mathcal{A}$, let $
r_\theta(a) := \mathbb E_\theta[R_t \mid A_t = a]
$ denote the expected reward under parameter $\theta$. We define optimal action to be $\label{eq:optimal_action} \phi(\theta) \;:=\; \argmax_{a \in \mathcal{A}} \{ r_{\theta}(a) \}$. This mapping assigns each parameter an action that maximizes expected reward under that parameter. We further assume that the maximizer is unique for each $\theta$. \footnote{This assumption is made for notational convenience only. When multiple actions maximize $r_\theta(\cdot)$, one can fix any deterministic tie-breaking rule (e.g., the smallest index) and all results below extend verbatim; alternatively, one can allow randomized tie-breaking without changing the substance of the analysis.} Because the model is misspecified, the optimal action $\max_{\theta \in \Theta} \phi(\theta)$ may not coincide with the action that maximizes expected reward under the true data-generating process \eqref{eq:true_reward}. We further denote the oracle action as $a^* := \argmax_{a \in \mathcal{A}} r_g(a) = \mathbb E_g[R_t \mid A_t = a]$.

Thompson Sampling induces a randomized policy $\mu^{\mathrm{TS}}$ in which actions are chosen as a function of posterior beliefs rather than by directly solving the Bayes-optimal dynamic programming problem.

\vspace{1em}

\begin{algorithm}[H]
\SetAlgoLined
\DontPrintSemicolon
\caption{Thompson Sampling}
\label{Algorithm:TS}

\KwIn{Prior distribution $\pi_0$ supported on $\Theta$}

\For{$t = 0,1,2,\ldots$}{
    Sample a parameter $\theta_t \sim \pi_t$\;
    Choose action $A_t = \phi(\theta_t)$\;
    Observe reward $R_t$\;
    Update posterior belief
    \[
    \pi_{t+1}(\theta)
    =
    \frac{f_\theta(R_t \mid A_t)\,\pi_t(\theta)}
    {\sum_{\vartheta \in \Theta} f_\vartheta(R_t \mid A_t)\,\pi_t(\vartheta)},
    \qquad \forall \theta \in \Theta.
    \]
}
\end{algorithm}
\vspace{1em}

To evaluate the performance of a policy, we introduce the notion of cumulative regret, which compares a policy's expected payoffs to those
of a benchmark that, always selects the action with the highest expected reward under the true data-generating process. The cumulative regret of a policy $\mu$ over $T$ periods is defined as
\begin{equation}\label{eq:regret}
\mathrm{Regret}_T(\mu)
:= \sum_{t=0}^{T-1} \bigl( r_g(a^*) - r_g(A_t) \bigr).
\end{equation}

Under correct model specification, Thompson Sampling is known to achieve logarithmic (and in many settings near-optimal) regret rates; see, e.g., \citet{russoTutorialThompsonSampling2020} for a survey. However, its behavior and performance under model misspecification remain poorly understood. The remainder of this paper addresses this gap by characterizing the long-run dynamics of posterior beliefs and actions under TS when the model is misspecified.

\section{Misspecified Two-Arm Bandit Analysis}

We now present a complete analysis of Thompson Sampling in a two-arm bandit setting where the decision maker's models are potentially misspecified. This simple but nontrivial example reveals the fundamental mechanisms underlying Thompson Sampling's behavior under misspecification and provides the building blocks for a more general framework.

\subsection{Problem Setup}

Consider a simplified version of our general setup with two actions $\mathcal{A} = \{1, 2\}$ and two candidate models $\Theta = \{\nu, \gamma\}$ where $\nu = (\nu_1, \nu_2)$ and $\gamma = (\gamma_1, \gamma_2)$ specify expected rewards $r_\theta(i) = \theta_i$ for each action $i$ (see Figure \ref{fig:crossing}).
% ---- BEGIN crossing.tex ----
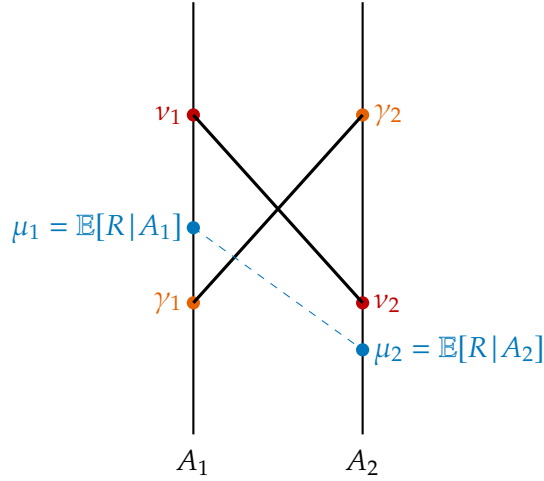
\begin{figure}[H]
    \centering
    \begin{tikzpicture}[>=latex,scale=1.25,
            every node/.style={font=\small}]

    % ------------------------------------------------------------------
    % 1.  two vertical "action'' lines
    % ------------------------------------------------------------------
    \draw[thick] (0, -0.4) -- (0, 4.2);   % A1
    \draw[thick] (1.8, -0.4) -- (1.8, 4.2);   % A2
    \node[below] at (0, -0.45){$A_1$};
    \node[below] at (1.8, -0.45){$A_2$};

    % ------------------------------------------------------------------
    % 2.  points on the left line  (A1)
    % ------------------------------------------------------------------
    \fill[gammaCol] (0, 1) circle (2.pt)
                    node[left] {$\gamma_1$};

    \fill[expCol]   (0, 1.8) circle (2.pt)
                    node[left] {$\mu_1 = \mathbb{E}[R\!\mid\!A_1]$};

    \fill[thetaCol] (0, 3) circle (2.pt)
                    node[left] {$\nu_1$};

    % ------------------------------------------------------------------
    % 3.  points on the right line  (A2)
    % ------------------------------------------------------------------
    \fill[gammaCol] (1.8, 3) circle (2.pt)
                    node[right] {$\gamma_2$};

    \fill[expCol]   (1.8, 0.5) circle (2.pt)
                    node[right] {$\mu_2 = \mathbb{E}[R\!\mid\!A_2]$};

    \fill[thetaCol] (1.8, 1) circle (2.pt)
                    node[right] {$\nu_2$};

    % ------------------------------------------------------------------
    % 4.  horizontal lines connecting points
    % ------------------------------------------------------------------
    \draw[very thick] (0, 1) -- (1.8, 3);   % γ1 → γ2
    \draw[very thick] (0, 3) -- (1.8, 1);   % θ1 → θ2
    \draw[dashed,expCol] (0, 1.8) -- (1.8, 0.5);   % Connect expectations with dashed line

    \end{tikzpicture}
    \caption{Crossing configuration showing misspecified models.}
    \label{fig:crossing}
\end{figure}
% ---- END crossing.tex ----
In the two-arm case, the posterior can be analyzed by the log-odds process, $\{S_t\}_{t \geq 0}$, defined by
\begin{equation}
S_t = \log\left(\frac{\pi_t}{1-\pi_t}\right), \quad t \geq 0
\end{equation}
where, the $\pi_t$, without loss of generality, is the posterior probability of picking model $\nu$ at time $t$. The posterior probability and log-odds are related through the sigmoid transformation, $\pi_t = \sigma(S_t) = \frac{1}{1 + e^{-S_t}}$. This transformation is strictly monotonic and continuous, mapping $\pi_t \in (0,1)$ to $S_t \in \mathbb{R}$. Specifically, $\pi_t \to 0$ if and only if $S_t \to -\infty$, and $\pi_t \to 1$ if and only if $S_t \to +\infty$. Therefore, analyzing the asymptotic behavior of the posterior $\pi_t$ is equivalent to studying the long-run dynamics of the log-odds process $S_t$.

The log-odds process evolves according to the additive structure:
\begin{equation}
S_{t+1} = S_t + Z_{t+1}
\label{eq:log-odds-two-arm}
\end{equation}
where $Z_{t+1} = \log[f_{\nu}(R_t|A_t) \pi_t(\nu)/f_{\gamma}(R_t|A_t) \pi_t(\gamma)]$ is the log-likelihood ratio increment. 

Notice that the posterior $\pi_{t-1}$ is measurable with respect to $\mathcal{F}_{t-1}$, and the action $A_t$ is chosen according to Thompson Sampling based on $\pi_{t-1}$. Therefore, the expected increment of the log-odds process can be expressed as a mixture of expected increments conditional on each action:
\begin{equation}
    \label{eq:two-arm-logodds-increment}
        \mathbb{E}[Z_t \mid \mathcal{F}_{t-1}] = \pi_{t-1} \Delta_{\phi(\nu)} + (1-\pi_{t-1}) \Delta_{\phi(\gamma)}
\end{equation}
where $\Delta_i = \mathbb{E}[Z_{t+1} \mid A_t = i]$ is the expected log-likelihood ratio increment under action $i$. Equivalently, letting $g_i := g(\cdot\mid i)$ and $f_{\nu,i}:=f_{\nu}(\cdot\mid i)$ (and similarly for $\gamma$),
\begin{equation}
\Delta_i = \mathbb{E}_{r\sim g_i}\big[\log f_{\nu,i}(r)-\log f_{\gamma,i}(r)\big] = -\mathrm{KL}(g_i\|f_{\nu,i})+\mathrm{KL}(g_i\|f_{\gamma,i}).
\end{equation}
Thus $\Delta_i>0$ means that, under the data generated by action $i$, model $\nu$ fits better than $\gamma$ in the KL sense.

The explicit law of $Z_t$ depends on the reward distribution and the model structure. However, if we further assume the Gaussian bandits setting, i.e. $R_t \mid A_t = i \sim \mathcal{N}(g(i), 1)$ while the decision maker believes $R_t \mid A_t = i \sim \mathcal{N}(\theta_i, 1)$ under model $\theta$. One can show that the increment distribution admits a tractable Gaussian-mixture form in this case (See Appendix \ref{app:B:gaussian}).\footnote{This Gaussian illustration is purely for transparency of the increment law. The regime classification results rely more generally on the signs of the mean increments $\Delta_i$ together with mild moment/regularity conditions on the log-likelihood ratios (see the regularity assumption~\ref{as:two-arm}), and do not hinge on normality per se.}

\subsection{Complete Classification of Asymptotic Behavior}
\label{sec:two_arm_dynamics}
As suggested by \eqref{eq:two-arm-logodds-increment}, the asymptotic behavior of Thompson Sampling depends critically on two key parameters: (i) $\Delta_{i}$, which measures the fitness for each model; (ii) $\phi(\theta)$, which determines whether the two models agree on the optimal action. In this subsection, we provide a complete characterization of all possible cases. We firstly divide the analysis into two main cases: the agreement case where both models agree on the optimal action, and the disagreement case where the models prefer different actions. Within each case, we further classify the behavior based on the signs of $\Delta_1$ and $\Delta_2$ (See in Figure \ref{fig:comparing_cases}).

% ---- BEGIN comparingCases.tex ----
\begin{figure}[H]
    \centering
    \begin{subfigure}[b]{0.48\textwidth}
        \centering
        \begin{tikzpicture}[>=latex,scale=1.0,
                every node/.style={font=\small}]

        % Two vertical "action" lines
        \draw[thick] (0, -0.4) -- (0, 4.2);   % A1
        \draw[thick] (1.8, -0.4) -- (1.8, 4.2);   % A2
        \node[below] at (0, -0.45){$A_1$};
        \node[below] at (1.8, -0.45){$A_2$};

        % Points on the left line (A1)
        \fill[gammaCol] (0, 1) circle (2.pt)
                        node[left] {$\gamma_1$};

        \fill[thetaCol] (0, 3) circle (2.pt)
                        node[left] {$\nu_1$};

        % Points on the right line (A2)
        \fill[gammaCol] (1.8, 0.5) circle (2.pt)
                        node[right] {$\gamma_2$};

        \fill[thetaCol] (1.8, 1) circle (2.pt)
                        node[right] {$\nu_2$};

        % Horizontal lines connecting points
        \draw[very thick, gammaCol] (0, 1) -- (1.8, 0.5);   % γ1 → γ2 (negative slope)
        \draw[very thick, thetaCol] (0, 3) -- (1.8, 1);   % θ1 → θ2 (negative slope)

        % Calculate midpoint
        \node at (-0.8, 2) {};
        \draw[dashed, expCol] (-0.5, 2) -- (0, 2);

        % Brackets and subcases
        \draw[thick, <-, expCol] (-0.5, 3.5) -- (-0.5, 2.1);
        \node at (-0.7, 3) {I};
        \node[right] at (-0.5, 3) {\textbf{}};

        \node at (-0.7, 2) {II};
        \node[right] at (-0.5, 2) {\textbf{}};

        \draw[thick, ->, expCol] (-0.5, 1.9) -- (-0.5, 0.5);
        \node at (-0.7, 1) {III};
        \node[right] at (-0.5, 1) {\textbf{}};

        \end{tikzpicture}
        \caption{Case 1: Both models agree $A_1$ is optimal}
        \label{fig:case1}
    \end{subfigure}
    \hfill
    \begin{subfigure}[b]{0.48\textwidth}
        \centering
        \begin{tikzpicture}[>=latex,scale=1.0,
                every node/.style={font=\small}]

        % Two vertical "action" lines
        \draw[thick] (0, -0.4) -- (0, 4.2);   % A1
        \draw[thick] (1.8, -0.4) -- (1.8, 4.2);   % A2
        \node[below] at (0, -0.45){$A_1$};
        \node[below] at (1.8, -0.45){$A_2$};

        % Points on the left line (A1)
        \fill[gammaCol] (0, 1) circle (2.pt)
                        node[left] {$\gamma_1$};

        \fill[thetaCol] (0, 3) circle (2.pt)
                        node[left] {$\nu_1$};

        % Points on the right line (A2)
        \fill[gammaCol] (1.8, 3) circle (2.pt)
                        node[right] {$\gamma_2$};

        \fill[thetaCol] (1.8, 1) circle (2.pt)
                        node[right] {$\nu_2$};

        % Horizontal lines connecting points
        \draw[very thick, gammaCol] (0, 1) -- (1.8, 3);   % γ1 → γ2 (positive slope)
        \draw[very thick, thetaCol] (0, 3) -- (1.8, 1);   % θ1 → θ2 (negative slope)

        % Midpoints
        \draw[dashed, expCol] (-0.5, 2) -- (0, 2);  % Left midpoint
        \draw[dashed, expCol] (1.8, 2) -- (2.3, 2); % Right midpoint

        % Region markings for left line
        \draw[thick, <-, expCol] (-0.5, 3.5) -- (-0.5, 2.1);
        \node at (-0.7, 3) {I};

        \node at (-0.7, 2) {II};

        \draw[thick, ->, expCol] (-0.5, 1.9) -- (-0.5, 0.5);
        \node at (-0.7, 1) {III};

        % Region markings for right line
        \draw[thick, <-, expCol] (2.3, 3.5) -- (2.3, 2.1);
        \node at (2.5, 3) {III};

        \node at (2.5, 2) {II};

        \draw[thick, ->, expCol] (2.3, 1.9) -- (2.3, 0.5);
        \node at (2.5, 1) {I};

        \end{tikzpicture}
        \caption{Case 2: Models disagree on optimal arm}
        \label{fig:case2}
    \end{subfigure}
    \caption{Comparison of Case 1 and Case 2 scenarios. In Case 1, both models agree arm $A_1$ is optimal. In Case 2, models disagree: $\nu$ prefers $A_1$ while $\gamma$ prefers $A_2$. The three regions (I, II, III) for the true expected rewards are indicated on each action line. In Region~I the true expected reward is such that model $\nu$ is closer to the true DGP in KL divergence ($\Delta_i > 0$), in Region~III model $\gamma$ is closer ($\Delta_i < 0$), and Region~II is the neutral boundary ($\Delta_i = 0$).}
    \label{fig:comparing_cases}
\end{figure}
% ---- END comparingCases.tex ----

\paragraph{Agreement Case} (Baseline) Even though actions do not depend on learning in this regime (TS plays the common recommended action forever), posterior learning still selects among models based on their relative fit under the induced stationary data. The agreement case is a relatively straightforward scenario where both models agree on the optimal action: $\phi(\nu) = \phi(\gamma) = 1$. In this case, Thompson Sampling always selects one action, and the posterior evolution reduces to a simple random walk driven by the log-likelihood ratio increments from action 1. The asymptotic behavior of the posterior depends solely on the sign of $\Delta_1$.

\begin{theorem}[Thompson Sampling: Agreement Case]
\label{thm:agreement_case}
Assume both models agree on the optimal action: $\phi(\nu) = \phi(\gamma) = 1$. Then Thompson Sampling always selects action 1, and the posterior $\pi_t$ converges as follows:
\begin{enumerate}
    \item[(i)] If $\Delta_1 > 0$: $\pi_t \xrightarrow{a.s.} 1$
    \item[(ii)] If $\Delta_1 = 0$: $\pi_t \xrightarrow{d} \text{Bernoulli}(1/2)$
    \item[(iii)] If $\Delta_1 < 0$: $\pi_t \xrightarrow{a.s.} 0$
\end{enumerate}
\end{theorem}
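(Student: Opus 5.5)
The plan is to reduce everything to a classical random walk. Under $\phi(\nu)=\phi(\gamma)=1$, every draw $\theta_t\sim\pi_t$ gives $A_t=\phi(\theta_t)=1$ almost surely, whatever the posterior, so the action sequence is deterministic. The Bayes update then gives
\[
S_{t+1}=S_t+\xi_{t+1},\qquad \xi_{t+1}:=\log f_{\nu,1}(R_t)-\log f_{\gamma,1}(R_t),
\]
where the prior odds are carried by $S_t$ itself. Since $A_t\equiv 1$, the rewards $R_0,R_1,\dots$ are i.i.d.\ draws from $g_1$. Hence the $\xi_t$ are i.i.d.\ with mean $\Delta_1$, and $S_t=S_0+\sum_{s=1}^t\xi_s$ is a random walk started at the prior log-odds. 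I will assume the regularity conditions of Assumption~\ref{as:two-arm}, namely $\E|\xi|<\infty$ and, for part (ii), $\sigma^2:=\mathrm{Var}(\xi)<\infty$. In the Gaussian specification these hold automatically, since $\xi=(\nu_1-\gamma_1)\bigl(R-\tfrac{\nu_1+\gamma_1}{2}\bigr)$ is itself Gaussian.

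For parts (i) and (iii), the strong law of large numbers gives $S_t/t\to\Delta_1$ almost surely. If $\Delta_1>0$, then $S_t\to+\infty$ a.s., and continuity of the sigmoid at the boundary gives $\pi_t=\sigma(S_t)\to1$ a.s. The case $\Delta_1<0$ is symmetric and gives $\pi_t\to0$ a.s.

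Part (ii) is the delicate step. Here $\Delta_1=0$, so the walk has zero drift and no almost-sure limit exists: by recurrence or Chung--Fuchs, $\limsup S_t=+\infty$ and $\liminf S_t=-\infty$. The claim is therefore only about laws. I will first require $\sigma^2>0$, because if $\xi\equiv0$ then $\pi_t\equiv\pi_0$ and the statement fails. In the Gaussian model this is exactly the condition $\nu_1\neq\gamma_1$, which is implicit since otherwise the models are indistinguishable on arm 1. With $\sigma^2>0$, the CLT gives $S_t/(\sigma\sqrt t)\Rightarrow\mathcal N(0,1)$. Hence for every fixed $c>0$,
\[
\P(|S_t|\le c)\to0,\qquad \P(S_t>c)\to\tfrac12,\qquad \P(S_t<-c)\to\tfrac12 .
\]
Given $\varepsilon>0$, choose $c$ with $\sigma(c)>1-\varepsilon$. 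Then $\P(\pi_t>1-\varepsilon)\to\tfrac12$ and $\P(\pi_t<\varepsilon)\to\tfrac12$. This tightness-plus-mass computation, tested against bounded continuous functions on $[0,1]$, yields $\pi_t\Rightarrow\tfrac12\delta_0+\tfrac12\delta_1$, which is the Bernoulli$(1/2)$ limit.

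I expect the main obstacle to be stating part (ii) correctly rather than proving it. The convergence is only in distribution, and it needs the nondegeneracy $\sigma^2>0$ plus a finite second moment. Without a finite variance, one would instead need symmetry of the walk, or a stable-law limit that still splits mass evenly between the two sides.
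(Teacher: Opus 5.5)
Your proof is correct and follows the same route as the paper: agreement forces $A_t\equiv 1$, so $S_t$ is an i.i.d.\ random walk, the strong law gives (i) and (iii), and the CLT gives (ii). The paper phrases the last step as $\Pr(\pi_t\le z)\to\tfrac12$ for each $z\in(0,1)$, which is equivalent to your endpoint-mass argument; your explicit requirement $\mathrm{Var}(\xi)>0$ is a needed condition that the paper uses silently when it writes $\Pr(\mathcal N(0,\sigma^2)\le 0)=\tfrac12$, and without it (e.g.\ $\nu_1=\gamma_1$) part (ii) fails because $\pi_t\equiv\pi_0$.
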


In the agreement case, both models recommend the same action, so TS selects that action at every date. Hence the induced reward sequence is i.i.d. from the fixed conditional distribution $g(\cdot\mid 1)$ (and is therefore stationary in the usual time-series sense). Learning then reduces to standard Bayesian updating under misspecification with a fixed design (the action distribution is degenerate). In particular, the posterior concentrates on the pseudo-true model(s) that minimize KL divergence to the induced data-generating distribution, in the sense of \citet{berkLimitingBehaviorPosterior1966}.

\paragraph{Disagreement Cases}

When models disagree on the optimal action ($\phi(\nu) = 1, \phi(\gamma) = 2$), the behavior becomes more complex and depends on the signs of both $\Delta_1$ and $\Delta_2$. Specifically, three distinct regimes arise: the \textit{self-confirming} case ($\Delta_1 > 0 > \Delta_2$), where each model is reinforced by evidence from its preferred action and the posterior converges to one of the two boundary vertices with probabilities determined by the relative magnitudes of $\Delta_1$ and $\Delta_2$; the \textit{uniform-dominance} case ($\Delta_1, \Delta_2 > 0$), where one model consistently receives positive evidence under both actions and the posterior almost surely concentrates on that model; and the \textit{self-defeating} case ($\Delta_1 < 0 < \Delta_2$), where each model is contradicted by evidence from its preferred action, leading to a unique invariant distribution over $(0,1)$ and sustained stochastic fluctuations of beliefs around the interior.

\vspace{1em}
\noindent\textit{1. Self-Confirming ($\Delta_1 > 0 > \Delta_2$)}

\begin{theorem}[Thompson Sampling: Self-Confirming Case]
\label{thm:self_confirming}
Under disagreement ($\phi(\nu)=1,\phi(\gamma)=2$) with $\Delta_1>0>\Delta_2$, the posterior concentrates on the boundary: there exists $p^*\in(0,1)$ such that
\[\mathbb{P}(\pi_t\to 1)=p^*,\qquad \mathbb{P}(\pi_t\to 0)=1-p^*.\]
Moreover, the selection probability is
\[
p^* = u(S_0).
\]
\end{theorem}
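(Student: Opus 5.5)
We will work with the log-odds chain $S_t=\log(\pi_t/(1-\pi_t))$ from \eqref{eq:log-odds-two-arm}, whose increment has law given by the mixture
\[
Z_{t+1}\mid S_t=s \;\sim\; \sigma(s)\,\mathcal L(\ell_1)+(1-\sigma(s))\,\mathcal L(\ell_2),
\]
where $\ell_i=\log f_{\nu,i}(R)-\log f_{\gamma,i}(R)$ with $R\sim g_i$ and $\mathbb E\ell_i=\Delta_i$. (We read the increment as the pure log-likelihood ratio; the prior term in the definition of $Z_{t+1}$ is absorbed into $S_t$.) Thus $S_t$ is a time-homogeneous Markov chain on $\R$. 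In the Gaussian case the $\ell_i$ are Gaussian; in general we use the moment conditions of Assumption~\ref{as:two-arm}. We define $u(s):=\P_s(S_t\to+\infty)$. The proof then has three parts. First, $S_t$ leaves every compact set and $|S_t|\to\infty$ almost surely. Second, the limit is almost surely either $+\infty$ or $-\infty$, and $\pi_t\to1$ or $0$ accordingly. Third, $p^*=u(S_0)$ lies in $(0,1)$.

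\emph{Step 1 (escape from compacts).} Fix $K>0$. The increment laws are nondegenerate, since $\ell_1$ has positive mean and $\ell_2$ has negative mean. Because $\sigma(s)\in[\sigma(-K),\sigma(K)]$ is bounded away from $0$ and $1$ on $[-K,K]$, there exist $\varepsilon>0$ and $n$ such that, from any $s\in[-K,K]$, the chain exits $[-K,K]$ within $n$ steps with probability at least $\varepsilon$. We get this by repeatedly sampling $\nu$, which happens with probability at least $\sigma(-K)$ per step and pushes $S$ upward by the positive drift $\Delta_1$. A standard geometric-trials argument then shows that the time spent in $[-K,K]$ is almost surely finite.

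\emph{Step 2 (no return from far out).} For large $s$ the chain behaves like a random walk with increment $\ell_1$, except that with probability $1-\sigma(s)\approx e^{-s}$ it uses $\ell_2$ instead. We build a Lyapunov supermartingale on $\{s\ge K\}$, namely $V(s)=e^{-\lambda s}$ with small $\lambda>0$. The ingredients are $\mathbb E e^{-\lambda\ell_1}<1$ for small $\lambda$ (from $\Delta_1>0$ and a finite exponential moment), together with the observation that the contribution $e^{-s}\,\mathbb E e^{-\lambda\ell_2}$ can be absorbed for $s\ge K$. This gives
\[
\P_s\bigl(\exists t:\ S_t<K\bigr)\le e^{-\lambda(s-K)}\to0 \quad\text{as } s\to\infty.
\]
The symmetric function $V(s)=e^{\lambda s}$ works for $s\le -K$, using $\Delta_2<0$. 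Combining this with Step 1, the chain leaves $[-K,K]$ infinitely often until some excursion reaches a region from which it never returns. A Borel--Cantelli/Lévy 0-1 argument, applied to the events $\{$no return after the $k$th exit$\}$, shows that almost surely the chain eventually stays in $(K,\infty)$ or in $(-\infty,-K)$ for every $K$. Hence $S_t\to\pm\infty$. On $\{S_t\to+\infty\}$ the walk is asymptotically driven by $\ell_1$ with positive drift, so the limit is consistent; by monotonicity of $\sigma$ we get $\pi_t\to1$ on this event and $\pi_t\to0$ on the other.

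\emph{Step 3 (nondegeneracy of $p^*$).} By the Markov property, $u(S_t)$ is a bounded martingale converging to $\mathbf 1\{S_t\to+\infty\}$, and $u$ solves $u(s)=\mathbb E_s[u(S_1)]$. To show $u(S_0)\in(0,1)$, pick $M$ large enough that $\P_s(\text{never below }M/2)\ge 1/2$ for $s\ge M$, which Step 2 provides. The chain reaches $[M,\infty)$ from $S_0$ with positive probability by a finite chain of $\nu$-samples with favorable $\ell_1$ draws. Therefore $u(S_0)>0$, and symmetrically $1-u(S_0)>0$.

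The main obstacle is Step 2: the exponential-moment supermartingale requires controlling the cross term in which the "wrong" arm is sampled. If Assumption~\ref{as:two-arm} supplies only second moments, I would instead use $V(s)=1/s$-type functions, or a comparison with the random walk $\sum\ell_1$ via a coupling that uses the summability of $\sum_t(1-\sigma(S_t))$ along paths where $S_t$ grows linearly.
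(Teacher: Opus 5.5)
Your proof is correct in the Gaussian setting the paper actually works in, where $\mathbb{E}e^{-\lambda\ell_i}<\infty$. It follows a different route from the paper's.

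The paper proves transience in one stroke. It centers a bounded test function $V(s)=1-e^{-\alpha|s-s^\star|}$ at the unstable root $s^\star$ of the mean drift $\Delta_2+(\Delta_1-\Delta_2)\sigma(s)$. It then invokes the Meyn--Tweedie transience criterion (Thm.~8.4.2, which uses $\psi$-irreducibility) to get $|S_t|\to\infty$, and simply asserts $S_t\to\pm\infty$. Finally it obtains $u$ as a limit of hitting probabilities of $[L,\infty)$ via the first-step equation, and attributes uniqueness of the bounded harmonic solution to minimality.

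You instead split the line into a compact core and two tails. You use one-sided exponential supermartingales $e^{\mp\lambda s}$ on $\{s\ge K\}$ and $\{s\le -K\}$, valid once $K>|s^\star|$ and $\lambda$ is small. You define $u(s)=\mathbb{P}_s(S_t\to+\infty)$ directly, through L\'evy's martingale $u(S_t)$.

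Your route supplies exactly what the paper's leaves thin:
\begin{itemize}
\item The quantitative bound $\mathbb{P}_s(\exists t:\,S_t<K)\le e^{-\lambda(s-K)}$ excludes infinitely many sign switches. Transience alone only gives $|S_t|\to\infty$, and the increments are unbounded.
\item The same bound yields the boundary limits $u(-\infty)=0$ and $u(+\infty)=1$.
\item Step~3 gives $p^*\in(0,1)$, which the paper's proof never addresses.
\item It sidesteps the paper's claimed drift bound $\mathbb{E}[\Delta V]\ge\varepsilon$ outside a compact set, which is incompatible with $V\le 1$ on a transient chain.
\item It needs no irreducibility.
\end{itemize}
The paper's formulation, in exchange, is the general-state-space language it reuses in the multi-model section. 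Your ordering-based supermartingales are intrinsically one-dimensional.

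A few places to tighten.
\begin{itemize}
\item Step~1 only shows that each sojourn in $[-K,K]$ ends a.s. Finiteness of the total occupation time comes only after Step~2.
\item The 0--1 argument needs a uniform no-return probability, but the first exit from $[-K,K]$ may land just past $K$, where $e^{-\lambda(s-K)}\approx 1$. Fix this by waiting for the (a.s.) exit from a larger interval $[-M,M]$. After that exit the chain avoids $[-K,K]$ forever with probability at least $1-e^{-\lambda(M-K)}\ge 1/2$, so returns are dominated by a geometric variable.
\item If you want the paper's characterization of $u$ as the unique bounded solution of the harmonic problem with limits $0$ and $1$, your martingale step gives it at once. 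For any such $h$, $h(S_t)$ is a bounded martingale converging a.s. to $\mathbf{1}\{S_t\to+\infty\}$, so $h=u$.
\item If only the second-moment assumption is available, you do not need the $1/s$ or coupling detour. Set $c_K:=\sigma(K)\Delta_1+(1-\sigma(K))\Delta_2>0$ and let $v$ bound $\mathbb{E}[Z^2\mid S]$. Above $K$ the drift is at least $c_K$, so the H\'ajek--R\'enyi inequality for the martingale part gives $\mathbb{P}_s(\exists t:\,S_t<K)\le v/(c_K(s-K))$. That is all Steps~2--3 use.
\end{itemize}
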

\begin{remark}[Attractors and path dependence]
The two limits correspond to two absorbing long-run regimes: (i) $\pi_t\to 1$ implies eventual play of action $1$ forever; (ii) $\pi_t\to 0$ implies eventual play of action $2$ forever. Which regime is selected is path-dependent and is determined by early likelihood shocks through the unstable threshold at $p^*$.
\end{remark}

The self-confirming regime ($\Delta_1>0>\Delta_2$) exhibits the reinforcing feedback loop: beliefs tilt action choice, and the induced data coming from the chosen action further sharpen the belief which favors that action more. Here this feedback is: when $\nu$ is likely, TS plays action $1$ more often, and under action $1$ the likelihood ratio favors $\nu$ on average; when $\gamma$ is likely, TS plays action $2$ more often, and under action $2$ the likelihood ratio favors $\gamma$ on average. As a result, the process has two absorbing attractors, concentration on $\nu$ with perpetual play of action $1$, or concentration on $\gamma$ with perpetual play of action $2$, and which one prevails is determined by early random fluctuations in the posterior (captured by the threshold $p^*$).

These two limit outcomes can be interpreted as \emph{Berk-Nash equilibria}: in each outcome, the long-run action is optimal given the agent's limiting beliefs, and the limiting beliefs are KL-best responses to the endogenously generated stationary data under that action, while off-path beliefs about the unplayed action are unrestricted. In this sense, our self-confirming case provides a sharp two-point illustration of the general message in \citet{fudenbergLimitPointsEndogenous2021a}: under broad misspecification learning dynamics, any stable limit point must be strict Berk-Nash equilibrium. What Thompson Sampling adds here is a concrete implication in equilibrium selection: because TS is intrinsically randomized, it can select between multiple Berk-Nash outcomes with well-defined probabilities (See $p^*$ in Theorem~\ref{thm:self_confirming}).

\begin{corollary}[Self-confirming limits are strict Berk--Nash equilibria]
\label{coro:SC_is_BN}
Under the self-confirming regime of Theorem~\ref{thm:self_confirming} (disagreement with $\Delta_1>0>\Delta_2$), the two boundary limits $(1,\delta_\nu)$ and $(2,\delta_\gamma)$ are \emph{strict Berk--Nash equilibria}: each action $a^*$ is uniquely optimal under the corresponding belief $\mu^*=\delta_{\phi^{-1}(a^*)}$, and $\mu^*$ assigns probability one to the unique KL-minimizing model for the stationary data $g(\cdot\mid a^*)$.
\end{corollary}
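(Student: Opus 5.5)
The plan is to verify the two defining properties of a strict Berk--Nash equilibrium separately for each boundary limit: (a) the action is the unique best response to the belief, and (b) the belief is supported on KL-minimizers for the data the action generates. The underlying limits are taken from Theorem~\ref{thm:self_confirming}: almost surely $\pi_t\to 1$ or $\pi_t\to 0$. On each event, TS plays $\phi(\theta_t)$ with $\theta_t\sim\pi_t$, so $\mathbb P(A_t=1\mid\mathcal F_{t-1})=\pi_{t}\to 1$ on $\{\pi_t\to1\}$ and $\to 0$ on $\{\pi_t\to 0\}$. By the conditional Borel--Cantelli lemma applied to $\sum_t(1-\pi_t)$, or at least in frequency by a martingale strong law, the long-run action is $a^*=1$ or $a^*=2$ respectively. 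This identifies the candidate pairs $(1,\delta_\nu)$ and $(2,\delta_\gamma)$.

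For optimality, take $\mu^*=\delta_\nu$. The expected payoff of action $a$ under $\mu^*$ is $r_\nu(a)=\nu_a$. Disagreement gives $\phi(\nu)=1$, and the standing assumption that maximizers are unique gives $\nu_1>\nu_2$. Hence action $1$ is the unique best response, which is strictness. The same argument applies to $(\delta_\gamma,2)$ via $\phi(\gamma)=2$ and $\gamma_2>\gamma_1$.

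For the KL condition, when $a^*=1$ the stationary data are $g(\cdot\mid1)=g_1$. The models in $\Theta=\{\nu,\gamma\}$ are compared by $\mathrm{KL}(g_1\|f_{\nu,1})$ and $\mathrm{KL}(g_1\|f_{\gamma,1})$. By the displayed identity $\Delta_1=\mathrm{KL}(g_1\|f_{\gamma,1})-\mathrm{KL}(g_1\|f_{\nu,1})$ and the hypothesis $\Delta_1>0$, $\nu$ is the unique KL-minimizer, so $\delta_\nu$ puts full mass on it. Symmetrically, $\Delta_2<0$ makes $\gamma$ the unique minimizer under $g_2$, so $\delta_\gamma$ qualifies for the second pair. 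Uniqueness of the minimizer in each case also rules out any other belief supported on the KL-minimizers, which matches the strict notion.

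The main obstacle is definitional rather than technical. One has to match the paper's informal notion of a strict Berk--Nash equilibrium with the Esponda--Pouzo definition, which requires the belief to be supported on $\arg\min_\theta \mathbb E_{\sigma}[\mathrm{KL}]$ under the action distribution $\sigma$ itself, and that $\sigma$ be optimal. Here $\sigma$ is degenerate, so the weighted KL reduces to the single-arm KL above. It also needs to be stated that the strictness required is of the action's optimality, and this is exactly uniqueness of $\phi$. The only place needing care is ensuring that the expected KL values are finite, for example in the Gaussian case, so that $\Delta_i$ is well defined; this follows from the regularity assumption invoked for Theorem~\ref{thm:self_confirming}.
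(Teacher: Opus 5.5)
Your argument is correct and takes the same route as the paper. The paper gives no separate appendix proof and relies on the informal paragraph before the corollary: $a^*$ is optimal under the degenerate limiting belief (for you, by uniqueness of $\phi$), and that belief is the KL-best response to $g(\cdot\mid a^*)$ (for you, by the sign of $\Delta_{a^*}$ through the KL-difference identity). The dynamic identification step is not needed for this static equilibrium claim, but it does no harm.
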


\vspace{1em}
\noindent\textit{2. Uniform Dominance ($\Delta_1 > 0, \Delta_2 > 0$)}

\begin{theorem}[Thompson Sampling: Uniform Dominance Case]
\label{thm:uniform_dominance}
Under the disagreement condition with $\Delta_1 > 0$ and $\Delta_2 > 0$, the posterior converges almost surely: $\pi_t \xrightarrow{a.s.} 1$.
\end{theorem}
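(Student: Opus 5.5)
The plan is to analyze the log-odds process $S_t=\log(\pi_t/(1-\pi_t))$ from \eqref{eq:log-odds-two-arm} and show $S_t\to+\infty$ almost surely; since $\pi_t=\sigma(S_t)$, this is equivalent to $\pi_t\to 1$. Write the increment as the log-likelihood ratio $Z_{t+1}=\log f_{\nu}(R_t\mid A_t)-\log f_{\gamma}(R_t\mid A_t)$, i.e.\ the part that actually changes $S_t$. Decompose it as $Z_{t+1}=D_t+M_{t+1}$, where $D_t=\mathbb{E}[Z_{t+1}\mid \mathcal F_{t-1},\pi_t]$ is the predictable drift and $M_{t+1}$ is a martingale difference. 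By \eqref{eq:two-arm-logodds-increment}, under disagreement ($\phi(\nu)=1,\phi(\gamma)=2$) we have $D_t=\pi_t\Delta_1+(1-\pi_t)\Delta_2$. This is a convex combination of two positive numbers, so uniformly in the state
\[
D_t\;\ge\;\delta:=\min\{\Delta_1,\Delta_2\}>0 .
\]
This uniform positive drift, which holds regardless of which arm TS samples, is the whole point of the uniform-dominance case.

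Summing gives $S_t=S_0+\sum_{s<t}D_s+\mathcal M_t$ with $\mathcal M_t=\sum_{s\le t}M_s$, so $S_t\ge S_0+\delta t+\mathcal M_t$. It then suffices to show $\mathcal M_t/t\to 0$ almost surely. For this I would use the regularity assumption (Assumption~\ref{as:two-arm}), which gives uniformly bounded conditional second moments of the log-likelihood ratio under each arm. In the Gaussian case, $Z$ conditional on arm $i$ is Gaussian with variance $(\nu_i-\gamma_i)^2$. Hence $\mathbb{E}[M_{s}^2\mid\mathcal F]\le C$ for a constant $C$ independent of the belief. The martingale $\sum_s M_s/s$ is then $L^2$-bounded, since $\sum C/s^2<\infty$, so it converges almost surely, and Kronecker's lemma yields $\mathcal M_t/t\to0$ a.s. (Equivalently, one can invoke the strong law for martingales with bounded conditional variances.) Therefore $\liminf_t S_t/t\ge\delta>0$ a.s., which gives $S_t\to\infty$ and $\pi_t\to1$ a.s.

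The main obstacle is the variance bound, because the increment law is endogenous: its conditional distribution is a $\pi_t$-dependent mixture over arms. The bound must therefore hold uniformly over beliefs, not just for each fixed arm. Since there are only two arms, the conditional second moment is at most $\max_i \mathbb{E}_{g_i}[(\log f_{\nu,i}-\log f_{\gamma,i})^2]$, which is finite by the regularity assumption. Subtracting the conditional mean only lowers it, so the bound is uniform. As a byproduct, the argument shows that the number of pulls of arm 2 is almost surely $\sum_t(1-\pi_t)<\infty$ in expectation-rate terms, because $1-\pi_t$ decays exponentially fast. This is consistent with eventual perpetual play of arm 1, though it is not needed for the statement itself.
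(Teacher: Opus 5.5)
Your proposal is correct and follows the paper's proof essentially line for line: the paper also splits $S_t$ into a cumulative predictable drift, bounded below by $c\,t$ because $\pi_t\Delta_1+(1-\pi_t)\Delta_2\ge\min\{\Delta_1,\Delta_2\}$, plus a martingale whose uniformly bounded conditional variances give $M_t/t\to 0$ a.s.\ by the martingale strong law, which yields $\liminf_t S_t/t\ge c>0$ and hence $\pi_t\to 1$ a.s. Your explicit check that the variance bound holds uniformly over beliefs, because the increment is a $\pi_t$-weighted mixture over the two arms, is a detail the paper only asserts; your closing remark about finitely many pulls of arm 2 is, as you say, not needed.
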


In the uniform-dominance regime ($\Delta_1>0$ and $\Delta_2>0$), model $\nu$ is favored by the likelihood ratio no matter which action is taken. $\gamma$ is globally inferior in explanatory power for every feasible sequences of observations generated by the agent's actions. As a result, TS will push beliefs away from $\gamma$. Beyond the convergence result, this behavior reveals the mechanism which closely aligns with the ``iterative elimination'' intuition in \citet{frickBeliefConvergenceMisspecified2023}: when one model is uniformly dominated in relative likelihood terms, posterior odds evolve like a supermartingale for the dominated model, leading to its eventual extinction. As we will see in Section~\ref{sec:multi-model-setting}, it plays an important role in the multiple model cases as well in a more general way.

\vspace{1em}
\noindent\textit{3. Self-Defeating ($\Delta_1 < 0 < \Delta_2$)}

\begin{theorem}[Thompson Sampling: Self-Defeating Case]
\label{thm:self_defeating}
Under disagreement condition with $\Delta_1<0<\Delta_2$, the posterior belief process
$\{\pi_t\}_{t\ge 0}$ is positive Harris recurrent on $(0,1)$.
Hence it admits a unique invariant probability measure $\mu$ on $(0,1)$
such that, for any initial condition $\pi_0\in(0,1)$,
\[
\mathcal{L}(\pi_t)\ \Longrightarrow\ \mu,
\qquad\text{as } t\to\infty.
\]
Moreover, $\mu$ has full support on $(0,1)$ and admits a continuous density.
Equivalently, if $\pi_\infty\sim\mu$, then $\pi_t \xrightarrow{d} \pi_\infty$.
\end{theorem}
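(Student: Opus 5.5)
We work with the log-odds chain $S_t=\log(\pi_t/(1-\pi_t))$, which by \eqref{eq:log-odds-two-arm} is a time-homogeneous Markov chain on $\mathbb R$. Given $S_t=s$, TS plays arm $1$ with probability $\sigma(s)$ and arm $2$ with probability $1-\sigma(s)$. The increment is then an independent draw of the arm-$i$ log-likelihood ratio $Z^{(i)}=\log f_{\nu,i}(R)-\log f_{\gamma,i}(R)$ with $R\sim g_i$. In the Gaussian case $Z^{(i)}$ is itself Gaussian with mean $\Delta_i$ and a nondegenerate variance whenever $\nu_i\neq\gamma_i$, which disagreement forces on at least one arm. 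More generally, the regularity Assumption~\ref{as:two-arm} should supply finite second (or exponential) moments and a density component for the increments. Since $\sigma$ is a homeomorphism from $\mathbb R$ onto $(0,1)$, positive Harris recurrence, the invariant measure, and weak convergence all transfer from $S_t$ to $\pi_t$. The density statement transfers through the change of variables $\mu(d\pi)=\rho(\sigma^{-1}(\pi))\,[\pi(1-\pi)]^{-1}d\pi$.

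The plan is to apply the Meyn--Tweedie drift criterion with the Lyapunov function $V(s)=|s|$, or a smoothed version such as $\sqrt{1+s^2}$. The mean drift is
\[
\E[S_{t+1}-S_t\mid S_t=s]=\sigma(s)\Delta_1+(1-\sigma(s))\Delta_2 .
\]
This tends to $\Delta_1<0$ as $s\to+\infty$ and to $\Delta_2>0$ as $s\to-\infty$, so there is a restoring force toward the unique root $s^\star=\sigma^{-1}\bigl(\Delta_2/(\Delta_2-\Delta_1)\bigr)$. For $s\ge K$ we have $V(s+Z)-V(s)\le Z$ on $\{s+Z\ge0\}$. The correction on $\{Z<-s\}$ is bounded by $2\E[|Z|\mathbf 1\{|Z|>K\}]$, which vanishes as $K\to\infty$ under the moment assumption. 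Hence $PV-V\le \tfrac12\Delta_1<0$ for $s\ge K$, and symmetrically $PV-V\le-\tfrac12\Delta_2$ for $s\le-K$. On the compact set $C=[-K,K]$, $PV$ is bounded. This gives the Foster--Lyapunov condition $PV\le V-\varepsilon+b\mathbf 1_C$.

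The second ingredient is that $C$ is a petite (indeed small) set, together with $\psi$-irreducibility and aperiodicity. I expect this to be the main technical step. From any $s$, with probability at least $\min(\sigma(s),1-\sigma(s))$, which is bounded below on $C$, an arm with nondegenerate increments is played. That arm's increment has a density bounded below on a neighborhood of its mean, continuous in the Gaussian case. Composing two or three steps, we can use one arm with drift up and one with drift down, or the convolution of Gaussians. This yields a minorization $P^m(s,\cdot)\ge\delta\,\mathrm{Leb}|_{B}(\cdot)$ uniformly for $s\in C$ and some fixed interval $B$. The same minorization gives Lebesgue-irreducibility, and aperiodicity follows because $B$ can be reached in both $m$ and $m+1$ steps. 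The chain is also weak Feller, since the transition kernel is a continuous mixture of translated densities, so compact sets are petite, which is an alternative route. Meyn--Tweedie's ergodic theorem then gives positive Harris recurrence, a unique invariant $\mu$, and $\|P^t(s,\cdot)-\mu\|_{TV}\to0$ for every initial $s$. This is stronger than the claimed weak convergence.

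For the properties of $\mu$, stationarity gives $\mu=\mu P$, and $P(s,\cdot)$ has the density $\sigma(s)q_1(\cdot-s)+(1-\sigma(s))q_2(\cdot-s)$, where $q_i$ are the increment densities. So $\mu$ has density $\rho(y)=\int[\sigma(s)q_1(y-s)+(1-\sigma(s))q_2(y-s)]\,\mu(ds)$. This is continuous by dominated convergence when the $q_i$ are bounded and continuous, as in the Gaussian case. It is strictly positive everywhere since Gaussian densities are positive and $\sigma\in(0,1)$. Pulling back through $\sigma$ gives a continuous density with full support on $(0,1)$.
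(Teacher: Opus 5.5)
Your proposal is correct and takes essentially the paper's route. Both arguments apply a Foster--Lyapunov drift criterion to the log-odds chain, driven by the restoring drift $g(s)=\sigma(s)\Delta_1+(1-\sigma(s))\Delta_2$, combine it with irreducibility and aperiodicity, and transfer the result to $\pi_t$ through $\sigma$. The paper uses $V(s)=(s-s^\star)^2$ and verifies the polynomial condition $PV\le V-cV^{1/2}+b\mathbf{1}_C$ from second moments; your $V(s)=|s|$ yields the constant-drift condition from first moments and uniform integrability.

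Your explicit Gaussian minorization on $C$ and your stationarity-equation argument for a continuous, strictly positive density go beyond the paper. The paper only asserts irreducibility and never proves the full-support and continuous-density claims. Your density formula needs both increment laws to have densities, and they do: $\Delta_i\neq 0$ forces $\nu_i\neq\gamma_i$ for $i=1,2$, which is stronger than your ``at least one arm.''
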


In the self-defeating regime ($\Delta_1<0<\Delta_2$), the convergence result is qualitatively different. Here the feedback loop is negative: whenever a model becomes more likely, Thompson Sampling plays that model's preferred action more often, but the resulting data systematically undermines that same model in likelihood terms. High belief in $\nu$ induces frequent play of action $1$, yet action $1$ generates evidence against $\nu$ on average; low belief in $\nu$ induces frequent play of action $2$, yet action $2$ generates evidence against $\gamma$ on average. The consequence is persistent oscillation in beliefs and ongoing experimentation generated endogenously by the algorithm: neither model can permanently sustain itself because its own recommended behavior creates refuting data. This intuition is formalized by the existence of a unique invariant distribution with full support and the non-concentration corollary below, which together imply that beliefs (and hence actions) keep fluctuating in the long run.

\begin{corollary}[Non-concentration and persistent experimentation]
Under Theorem~\ref{thm:self_defeating}, the stationary law is non-degenerate:
for any $\varepsilon\in(0,1/2)$, $\mu([\varepsilon,1-\varepsilon])>0.$
In particular, in the stationary regime both actions are chosen with strictly
positive long-run frequency.
\end{corollary}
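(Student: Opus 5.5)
The corollary follows from Theorem~\ref{thm:self_defeating} once we use that $\mu$ has full support on $(0,1)$ and combine this with ergodic averaging for the action sequence. The non-concentration claim is nearly immediate. Fix $\varepsilon\in(0,1/2)$. The interval $(\varepsilon,1-\varepsilon)$ is a nonempty open subset of $(0,1)$. Full support means every nonempty open set has positive $\mu$-mass, so
\[
\mu([\varepsilon,1-\varepsilon])\ \ge\ \mu((\varepsilon,1-\varepsilon))>0 .
\]
If we only wanted to use the continuous density $p$, the same conclusion holds: full support forces $p>0$ at some interior point, hence on a neighbourhood of it by continuity. Taking $\varepsilon$ small enough that this neighbourhood lies in $[\varepsilon,1-\varepsilon]$, then letting $\varepsilon$ vary and using monotonicity in $\varepsilon$, gives the bound for every $\varepsilon$. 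Either way, $\mu$ is not a point mass at $0$ or $1$. It is also not any Dirac mass at all, since a Dirac mass does not have full support.

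For the action frequencies, we treat the joint process. Under TS, conditional on $\mathcal F_{t-1}$ (equivalently on the current posterior $\pi$), action $1$ is played with probability $\pi$ and action $2$ with probability $1-\pi$, because $\phi(\nu)=1$ and $\phi(\gamma)=2$. Hence
\[
\mathbb P(A_t=1)=\mathbb E[\pi_t]\to \int x\,\mu(dx)=:\bar\pi .
\]
This convergence uses the weak convergence $\mathcal L(\pi_t)\Rightarrow\mu$ from Theorem~\ref{thm:self_defeating} together with the fact that $x\mapsto x$ is bounded and continuous on $[0,1]$. Since $\mu$ has full support on $(0,1)$, we have $\mu((0,1))=1$ and $\mu$ charges $[\varepsilon,1-\varepsilon]$. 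Therefore $\bar\pi\in(0,1)$: it is at least $\varepsilon\,\mu([\varepsilon,1-\varepsilon])>0$, and symmetrically $1-\bar\pi>0$.

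For the statement about long-run frequency along paths, we invoke positive Harris recurrence. The ergodic theorem for positive Harris chains gives
\[
\frac1T\sum_{t<T}\pi_t\to\bar\pi \quad\text{a.s.}
\]
for every initial condition. We then write
\[
\mathbf 1\{A_t=1\}=\pi_t+\big(\mathbf 1\{A_t=1\}-\pi_t\big).
\]
The second term is a bounded martingale difference sequence with respect to $\mathbb F$. By the martingale strong law (for example, Azuma's inequality with Borel--Cantelli, or the $L^2$ martingale SLLN), its Cesàro averages vanish a.s. Hence the empirical frequency of action $1$ converges a.s. to $\bar\pi\in(0,1)$, and that of action $2$ converges to $1-\bar\pi\in(0,1)$.

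The only step that needs care is this last one. We must apply the ergodic theorem to the posterior chain rather than to the action sequence directly, and we must align the indexing so that $A_t$ is drawn from the current posterior (per Algorithm~\ref{Algorithm:TS}). Once that is set up, everything else is routine.
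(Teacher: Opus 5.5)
Your proof is correct and follows the paper's route. The paper treats the corollary as an immediate consequence of the full-support assertion in Theorem~\ref{thm:self_defeating}, and it reads off the action frequency as $\alpha^*=\mathbb{E}_\mu[\pi]\in(0,1)$ as in Proposition~\ref{prop:limiting_action_distributions}(iv); your ergodic-theorem-plus-martingale-SLLN step just makes the pathwise version of that explicit.

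One small correction: your density-only aside is wrong as written, because $\mu([\varepsilon,1-\varepsilon])$ is nonincreasing in $\varepsilon$, so positivity for one small $\varepsilon$ does not carry over to $\varepsilon$ near $1/2$. Drop the aside, since the full-support argument already covers every $\varepsilon$.
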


This phenomenon sits outside the standard Berk--Nash framework, which is fundamentally a \emph{fixed-point} concept: it predicts long-run outcomes in which play settles on a stationary action profile and beliefs are KL-best responses to the stationary data produced by that play. In our self-defeating case, there is no stationary action path and no belief concentration; instead the correct long-run object is an \emph{invariant distribution} over posteriors, reflecting perpetual stochastic cycling driven by endogenous information acquisition. 

This distinction also clarifies why the existing asymptotic results in the misspecification-learning literature do not describe this case. Results such as \citet{fudenbergLimitPointsEndogenous2021a} characterize the set of possible deterministic limit points of learning dynamics. Our self-defeating regime is not a counterexample to that logic. It instead illustrates that once we endogenize exploration through posterior sampling, the relevant asymptotic behavior need not be described by point limits at all. Similarly, elimination arguments in the spirit of \citet{frickBeliefConvergenceMisspecified2023} rely on a model being uniformly or sequentially dominated so that one can be deleted and the belief evolves as a super-martingale on the simplex. Here, neither model is globally dominated: each model is supported by the fact that when it is unlikely, it is sampled less, and thus faces less refuting evidence, preventing monotone extinction.

For economic applications, the self-defeating regime is important because it shows that misspecification can generate not only incorrect long-run certainty, but also persistent policy randomness even with infinite data. In environments where the decision-maker is constrained to a misspecified model class but continues to experiment (as in many online platforms, adaptive pricing, or treatment assignment), long-run predictions should therefore be distributional: welfare, regret, and steady-state behavior depend on the stationary mix of beliefs and actions rather than on a single equilibrium. 

At the same time, the emergence of qualitatively distinct convergence regimes (as summarized in Table~\ref{tab:classification_two_arm}) raises a practical question: given a particular decision problem and a candidate model class, can one determine \emph{ex ante} which regime will govern long-run behavior? Our answer is: partially yes, as we will show in Section~\ref{sec:multi-model-setting}, where we develop diagnostic criteria for regime identification and treat the \emph{ergodicity} or \emph{transience} of the belief--action process as the appropriate notion of a steady state under algorithmic learning.

\begin{table}[H]
\centering
\begin{tabular}{|c|c|c|c|}
\hline
\textbf{Case} & \textbf{Condition} & \textbf{Behavior} & \textbf{Limit Distribution} \\
\hline
Agreement & $\Delta_1 > 0$ & Submartingale & $\delta_1$ \\
Agreement & $\Delta_1 = 0$ & Martingale & $\text{Bernoulli}(1/2)$ \\
Agreement & $\Delta_1 < 0$ & Supermartingale & $\delta_0$ \\
\hline
Self-Confirming & $\Delta_1 > 0 > \Delta_2$ & Two attractors & $\text{Bernoulli}(p^*)$ \\
Uniform Dominance & $\Delta_1, \Delta_2 > 0$ & Global attractor & $\delta_1$ \\
Self-Defeating & $\Delta_1 < 0 < \Delta_2$ & Ergodic & Invariant measure on $(0,1)$ \\
\hline
\end{tabular}
\caption{Complete classification of Thompson Sampling behavior under misspecification in the two-arm case}
\label{tab:classification_two_arm}
\end{table}

\subsection{Regret Analysis for the Two-Arm Case}
\label{sec:two_arm_regret}
Building upon our characterization of posterior dynamics, we now summarize the long-run regret implications of Thompson Sampling in the two-arm environment. It is not surprising that under misspecification, Thompson Sampling can exhibit sharply different long-run welfare properties despite identical short-run incentives. 
Zero regret is obtained precisely when belief updating reinforces payoff comparisons, while persistent regret arises when belief dynamics and payoff ranking are misaligned. All formal statements and proofs are deferred to Appendix~\ref{app:two_arm_regret}. Table~\ref{tab:regret_analysis_two_arm} provides a complete classification of limiting action distributions and average regret across the four cases identified in Section~\ref{sec:two_arm_dynamics}. 
Throughout, we normalize payoffs so that $g(1) \ge g(2)$, and thus the oracle policy always selects action~1.

We define $\alpha^*$ to be the stationary long-run frequency of playing action $1$ in the self-defeating case, i.e., if $\mu$ denotes the unique invariant distribution of the belief process $\pi_t$, then $\alpha^* := \mathbb{E}_{\mu}[\pi]$.

\begin{table}[H]
\centering
\begin{tabular}{|c|c|c|c|}
\hline
\textbf{Case} & \textbf{Prob. of} & \textbf{Prob. of} & \textbf{Average} \\
& \textbf{Action 1} & \textbf{Action 2} & \textbf{Regret} \\
\hline
Agreement ($\phi(\nu)=\phi(\gamma)=1$) & $1$ & $0$ & $0$ \\
\hline
Agreement ($\phi(\nu)=\phi(\gamma)=2$) & $0$ & $1$ & $g(1) - g(2)$ \\
\hline
Self-Confirming & Bernoulli$(p^*)$ & Bernoulli$(1-p^*)$ & $\begin{cases} 0 \text{ w.p. } p^* \\ g(1)-g(2) \text{ w.p. } 1-p^* \end{cases}$ \\
\hline
Uniform Dominance & $1$ & $0$ & $0$ \\
\hline
Self-Defeating & $\alpha^*$ & $1-\alpha^*$ & $(1-\alpha^*)[g(1) - g(2)]$ \\
\hline
\end{tabular}
\caption{Limiting action frequencies and average regret of Thompson Sampling in the two-arm model under misspecification}
\label{tab:regret_analysis_two_arm}
\end{table}

\subsection{Application: Monopoly Pricing under Misspecified Demand}
\label{sec:pricing_application}

We apply the preceding results to a monopoly pricing problem studied by \citet{espondaBerkNashEquilibrium2016}, \citet{heidhuesConvergenceMisspecifiedLearning2021}, and \citet{frickBeliefConvergenceMisspecified2023}. A monopolist learns about his demand function while setting prices. True demand at price $p$ is $z = \omega^* - \beta p + \varepsilon$, where $\varepsilon \sim \mathcal{N}(0,1)$, $\omega^*$ is the unknown intercept, and $\beta > 0$ is the true slope of demand. The monopolist misperceives the slope to be $\hat{\beta} \neq \beta$ while entertaining uncertainty about the intercept.

We keep the same demand environment but change two things: we discretize the model class to two intercepts $\omega_H > \omega_L$, and we replace SEU maximization with Thompson Sampling. Under perceived slope $\hat{\beta}$ and intercept $\omega_\theta$, each model's optimal price is $\phi(\theta) = \omega_\theta/(2\hat{\beta})$. Since $\omega_H > \omega_L$, the high-intercept model $\nu$ recommends the high price $p_H = \omega_H/(2\hat{\beta})$ and the low-intercept model $\gamma$ recommends the low price $p_L = \omega_L/(2\hat{\beta})$. At each period, the TS monopolist draws a model from the posterior and charges that model's optimal price.

The predicted demands at price $p$ are $\nu_p = \omega_H - \hat{\beta}p$ and $\gamma_p = \omega_L - \hat{\beta}p$, and the true demand is $g(p) = \omega^* - \beta p$. Since both models share the slope $\hat{\beta}$, the model disagreement $\nu_p - \gamma_p = \omega_H - \omega_L$ is constant across prices. The expected log-likelihood ratio drift (see Appendix~\ref{app:B:gaussian}) therefore takes the closed form
\begin{equation}
\label{eq:delta_monopoly}
\Delta_p = (\omega_H - \omega_L)\bigl(\omega^* - \bar{\omega} + (\hat{\beta} - \beta)\,p\bigr),
\end{equation}
where $\bar{\omega} := (\omega_H + \omega_L)/2$. The drift at each price is the product of model disagreement and the deviation of true demand from the models' average prediction. The term $(\hat{\beta}-\beta)p$ captures how the slope misspecification amplifies with the price level.

Evaluating~\eqref{eq:delta_monopoly} at the two prices, the difference $\Delta_{p_H} - \Delta_{p_L} = (\omega_H - \omega_L)^2(\hat{\beta}-\beta)/(2\hat{\beta})$ is signed entirely by $\hat{\beta} - \beta$. This yields a sharp regime classification:
\begin{itemize}
\item When $\hat{\beta} > \beta$ (the monopolist overestimates price sensitivity), $\Delta_{p_H} > \Delta_{p_L}$, and only the self-confirming regime or uniform dominance can arise. By Theorem~\ref{thm:self_confirming}, the firm eventually locks into one pricing strategy permanently.
\item When $\hat{\beta} < \beta$ (the monopolist underestimates price sensitivity), $\Delta_{p_H} < \Delta_{p_L}$, enabling the self-defeating regime $\Delta_{p_H} < 0 < \Delta_{p_L}$ whenever the true intercept $\omega^*$ falls in an intermediate range. By Theorem~\ref{thm:self_defeating}, the posterior converges to a nondegenerate stationary distribution and the firm oscillates between prices indefinitely.
\end{itemize}
The economic intuition for the self-defeating case is as follows. Because both models underestimate price sensitivity, the high price $p_H$ generates demand below what both models predict, favoring the low-intercept model $\gamma$ that recommends $p_L$; conversely, the low price $p_L$ generates demand above what both models predict, favoring the high-intercept model $\nu$ that recommends $p_H$. Each model's recommended price produces data supporting its rival, creating an endogenous feedback loop that prevents belief convergence.

This application concretizes the general comparison with \citet{frickBeliefConvergenceMisspecified2023} developed after Theorems~\ref{thm:self_confirming}--\ref{thm:self_defeating}: the same slope misspecification that produces a globally stable Berk--Nash equilibrium under their continuous model class and SEU rule generates persistent ergodic cycling once the model class is discrete and the decision rule is Thompson Sampling.

\begin{remark}[Role of the decision rule]
\label{rem:role_decision_rule}
The regime classification depends on the sign pattern of $(\Delta_{p_H}, \Delta_{p_L})$, which is a property of the misspecification---the true DGP $g$ and the model class $\{\nu, \gamma\}$---not of the decision rule. Under any posterior-based rule that assigns interior action probabilities, the self-defeating configuration produces a restoring drift toward the interior, while the self-confirming configuration produces a repelling drift toward the boundary. What changes across rules is the \emph{link function} $S \mapsto (p_a(S))_a$, which shapes the mean-drift function and hence the stationary distribution or selection probabilities. We discuss specific alternatives---myopic Bayes, top-$k$ Thompson Sampling, and $\epsilon$-greedy---in Section~\ref{sec:multi-model-setting}.
\end{remark}

\subsection{Implications for General Misspecification}

The analysis of the two-arm bandit under misspecification reveals important insights into the mechanisms governing Thompson Sampling's behavior. Our case by case examination identifies two critical components that jointly determine the dynamics of both posterior beliefs and action selection: the structural relationship between models and actions, and the quality of model fit to the true data generating process.

The first component is the \emph{model-action structure}, which is the mapping from each model's parameters to its preferred actions. This structure partitions the action space according to each model's optimal policy, creating distinct regions of action preference. In our two-arm analysis, this manifests as whether models $\nu$ and $\gamma$ agree on the same optimal action ($\phi(\nu) = \phi(\gamma)$) or disagree ($\phi(\nu) \neq \phi(\gamma)$). This structural relationship shapes the algorithm's asymptotic behavior independently of how well either model fits the data. When models agree on actions, it creates redundancy in the action space, as one action will never be selected and the posterior dynamics simplify to a competition between model fit over the other action. When models disagree, much richer dynamics emerge, including the possibility of persistent exploration and mixed strategies.

The second component is the relative quality of each model's fit to the true data generating process, measured by Kullback-Leibler divergence. This component determines which specific dynamics emerge within the constraints imposed by the model-action structure. For instance, in disagreement cases, whether the algorithm exhibits self-confirming or self-defeating behavior depends on whether each model receives positive or negative evidence when its preferred action is selected. The interplay between these components explains why model accuracy alone is insufficient to predict Thompson Sampling's performance under misspecification.

Our case analysis reveals that complex posterior dynamics can be decomposed into combinations of three elementary categories. First, \emph{uniform dominance} lead to model elimination, where the posterior only concentrates on the model that better fits the observed data globally. Second, \emph{self-confirming dynamics} create convergence to boundary limits, where one model dominates because it receives consistent positive evidence. Third, \emph{self-defeating dynamics} generate persistent exploration, where both models receive negative evidence when relied upon, preventing convergence to either boundary. As we will see in the next section, these elementary categories operate as building blocks that can combine in various ways to produce the rich array of behaviors observed in more complex settings.

\section{Multi-Model Setting}
\label{sec:multi-model-setting}

We now extend the two-model analysis to a finite set of candidate models
$\Theta = \{\theta^{(1)}, \ldots, \theta^{(M)}\}$. This section develops a geometric stochastic-stability framework that organizes the asymptotic possibilities and provides testable sufficient conditions for general finite model classes. The section proceeds in three steps: (i) characterize the drift geometry of posterior log-odds, (ii) provide sufficient conditions for interior ergodicity versus boundary transience, and (iii) establish recursive reduction when posterior mass concentrates on lower-dimensional faces.

A key insight is that, although posterior beliefs evolve on the $(M-1)$-dimensional simplex, their dynamics admit a low-dimensional and highly structured representation.
Specifically, by expressing beliefs in log-odds coordinates relative to a reference model, the posterior evolution can be represented as a stochastic process in $\mathbb{R}^{M-1}$ with an additive drift--noise decomposition.
This representation reveals that belief dynamics are governed by a smooth vector field whose geometry fully determines the asymptotic behavior of the learning process.

The resulting dynamics exhibit a sharp qualitative dichotomy.
Either posterior beliefs remain stochastically stable in the interior of the simplex---leading to persistent mixing across models---or beliefs are driven toward a lower-dimensional face, endogenously eliminating some models.
Importantly, when belief mass concentrates on a face, the induced dynamics on that face inherit the same structural form as the original system.
This yields an inductive characterization of learning outcomes in the multi-model setting: model exclusion proceeds sequentially, and long-run behavior can be analyzed by iterating the same classification on successively smaller model sets.

In practice, the framework can be applied as follows. First, compute the vertex drift vectors $d(a_j)$ for each model's optimal action and check whether $\mathbf{0}$ lies in the convex hull of these vectors. If not, the posterior is transient (Proposition~\ref{prop:transience-no-equilibrium}). If so, check the angle condition (Theorem~\ref{thm:ergodicity-angle-condition}) or compute $\Sym(G)$ and apply the spectral condition (Theorem~\ref{thm:sufficient-condition-ergodicity-transience}). If interior stability fails, apply the same logic recursively on each boundary face to which the posterior may converge.

The remainder of this section formalizes these ideas.
We first introduce the log-odds representation and its drift--noise decomposition.
We then characterize two sufficient conditions for interior versus boundary attraction in terms of the induced vector field.
All technical derivations and proofs are deferred to the appendix.

\subsection{Log-Odds Representation and Drift Geometry}

To analyze posterior dynamics in the multi-model setting, it is convenient to work in log-odds coordinates.
Fix a reference model $\theta^{(M)}$ and define the log-odds vector

\[
S_t^{(k)} := \log \frac{\pi_t(\theta^{(k)})}{\pi_t(\theta^{(M)})},
\qquad k = 1,\ldots,M-1.
\]

This mapping provides a smooth global chart from $\mathbb{R}^{M-1}$ to the interior of the simplex $\Delta_M$.
Since the transformation is invertible, the posterior $\pi_t$ can be recovered uniquely from $S_t$ via a softmax map.

\paragraph{General State Markov Process} Under Bayesian updating, the log-odds process admits a simple additive representation (See Appendix~\ref{app:A:logodds-dynamics}):

\begin{equation}S_{t+1} = S_t + Z_t = S_t + \xi(S_t) + \varepsilon_t,\label{eq:logodds-main}
\end{equation}

where $\xi(S_t) := \mathbb{E}[S_{t+1} - S_t \mid S_t]$ is the conditional drift and $\varepsilon_t$ is a mean-zero martingale increment. Unlike unstructured random walks where the noise is often assumed to be Gaussian supported on the entire space, in our setting the increment $Z_t := \log \left( \frac{f_{\theta^{(1)}}}{f_{\theta^{(M)}}}, \ldots, \frac{f_{\theta^{(M-1)}}}{f_{\theta^{(M)}}} \right)$ is supported on a union of 1-dimensional manifolds. Specifically, it is the union of finite curves parameterized by the likelihood ratio of the underlying model since the reward $R_t$ is a scalar and the action $A_t$ possible to be selected is finite. Therefore, the single-step transition kernel is singular with respect to the Lebesgue measure. However, as long as the information geometry of the model is non-degenerate\footnote{For example, if those curves are not parallel straight lines. This means under different actions, the change rate of relative entropy between different models according to the reward is not perfectly proportional.}, the process can still reach any region of the state space via finite multiple steps. In such case, we will have $k$-step irreducibility. We formalize this intuition in Lemma~\ref{lemma:irreducibility-and-aperiodicity-of-the-log-odds-process}. 

\begin{assumption}\label{assum:full-rank-Reversibility}
    There exists a finite sequence of increments $z_1, \dots, z_L \in \mathcal{Z}$, where each $z_i$ lies on a smooth segment of a support curve $\gamma_{k_i}(t)$, such that:
    \begin{enumerate}
        \item \textbf{Rank Condition:} The Jacobian of the summation map at these points is full rank. Specifically, if $z_i = \gamma_{k_i}(t_i^*)$, then the set of tangent vectors $\{\gamma_{k_1}'(t_1^*), \dots, \gamma_{k_L}'(t_L^*)\}$ spans $\mathbb{R}^{M-1}$.
        \item \textbf{Reversibility Condition:} The vectors sum to zero:\[ \sum_{i=1}^L z_i = \mathbf{0}. \]
    \end{enumerate}
\end{assumption}

\begin{lemma}[Irreducibility and Aperiodicity of the Log-Odds Process]\label{lemma:irreducibility-and-aperiodicity-of-the-log-odds-process} The log-odds process $(S_t)_{t\geq 0}$ is a time-homogeneous Markov chain on the general state space $\mathbb{R}^{M-1}$. Let $\mathcal{Z} \subset \mathbb{R}^{M-1}$ denote the support of the increment $Z_t$'s distribution. Suppose the model satisfies the Assumption~\ref{assum:full-rank-Reversibility}, then the chain $(S_t)$ is $\psi$-irreducible and aperiodic with respect to the Lebesgue measure on $\mathbb{R}^{M-1}$. That is, for any $s \in \mathbb{R}^{M-1}$ and any set $B$ with positive Lebesgue measure, there exists $n$ such that $P^n(s, B) > 0$.
\end{lemma}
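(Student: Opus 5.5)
The plan has three steps: verify the Markov property, construct an open set that is reached with positive density from a fixed point, and then propagate reachability and aperiodicity to every starting point.

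\emph{Markov property.} By the update in Algorithm~\ref{Algorithm:TS}, $S_{t+1}=S_t+Z_t$. The conditional law of $Z_t$ given $\mathcal F_{t-1}$ depends only on the current posterior $\pi_t=\mathrm{softmax}(S_t)$: first an action $a$ is drawn with probability $p_a(S_t)=\sum_{\theta:\phi(\theta)=a}\pi_t(\theta)$, and then $Z_t=\ell_a(R)$ with $R\sim g(\cdot\mid a)$. Here $\ell_a(r)=(\log f_{\theta^{(k)}}(r\mid a)-\log f_{\theta^{(M)}}(r\mid a))_{k<M}$ is the support curve $\gamma_a$. Since the kernel $P(s,\cdot)=\sum_a p_a(s)\,(\ell_a)_\#g(\cdot\mid a)$ is time-invariant, the chain is time-homogeneous.

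We will use two facts. First, every action in the image of $\phi$ has $p_a(s)>0$ for all $s\in\mathbb R^{M-1}$. Second, we assume $g(\cdot\mid a)$ has a positive density on an interval containing each $t_i^*$, and that $\ell_a$ is $C^1$ there. Together these imply that, for every $s$, the law of $Z$ charges every neighbourhood of each $\gamma_{k_i}(t_i^*)$ through a positive density along the curve parameter.

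\emph{Local minorization around one point.} Take the $L$ increments from Assumption~\ref{assum:full-rank-Reversibility} and consider the map $F(t_1,\dots,t_L)=\sum_{i}\gamma_{k_i}(t_i)$. Its Jacobian at $t^*$ has columns $\gamma_{k_i}'(t_i^*)$, which span $\mathbb R^{M-1}$ by the rank condition. By the constant-rank / implicit function theorem, $F$ restricted to a small box $U$ around $t^*$ is a submersion. Its pushforward of the product density (with weights $\prod_i p_{k_i}(\cdot)$ along the path) therefore has a density that is bounded below on an open ball around $F(t^*)=\mathbf 0$, using the reversibility condition.

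The action probabilities along the path are continuous and positive, so by shrinking $U$ we obtain a minorization
\[
P^L(s,\cdot)\ \ge\ c\,\mathrm{Leb}(\cdot\cap B(s,r))
\]
for all $s$ in a compact set $K$, with constants $c,r>0$. This is the ``small set'' statement. I expect this step to be the main obstacle: the one-step kernel is singular, and one must make sure that the order of the curves and the state-dependence of the action weights do not destroy the submersion argument. Concretely, the $i$-th step is taken from state $s+\sum_{j<i}\gamma_{k_j}(t_j)$, and because the curves $\gamma_a$ themselves do not depend on $s$, only the positive weights vary. So I would handle this by a uniform lower bound on $p_a$ over a compact neighbourhood.

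\emph{Reachability and aperiodicity.} Because $F(t^*)=\mathbf 0$, an $L$-step loop returns near its start and spreads mass over a ball of radius $r$ around the start. Chaining $n$ such loops, with the composite increments $F(t)$ for $t\in U$ covering $B(\mathbf 0,r)$, moves from $s$ to a set with positive density on $B(s,nr')$ for some $r'>0$, using convolution of balls. Hence for any $s$ and any Borel $B$ with $\mathrm{Leb}(B)>0$ there is $n$ with $P^{nL}(s,B)>0$. This gives $\psi$-irreducibility with $\psi=\mathrm{Leb}$.

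For aperiodicity, I would show that a small ball $C$ around $s$ is small with minorization at both times $nL$ and $nL+1$; then the gcd of the return times is $1$. Appending one extra generic increment $z=\gamma_a(t)$ produces a set whose density is still positive on a shifted ball, and since $\mathbf 0$ lies in the interior of the reachable displacements at all large multiples of $L$, the shift can be compensated. Alternatively, the density bound on $B(s,r)$ already contains $s$, so $P^{L}(s,C)>0$ and $P^{2L+1}$-type returns also charge $C$, which yields period $1$.
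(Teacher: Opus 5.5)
Your proposal is correct and follows essentially the paper's route: a submersion argument for the multi-step summation map (rank plus reversibility conditions) gives a reachable ball around the starting point, iterating enlarges it to all of $\mathbb{R}^{M-1}$, and aperiodicity comes from returns to a small ball at the coprime times $kL$ and $kL+1$ after compensating one extra increment. Your density/minorization version is somewhat tighter than the paper's support-and-open-neighbourhood argument: you use all $L$ increments, so reversibility really centers the ball at $\mathbf 0$ (the paper uses only $M-1$ of them); you bound the state-dependent action weights explicitly on compacts; and you obtain $P^n(s,B)>0$ for every Borel $B$ of positive Lebesgue measure, not only for sets containing an open neighbourhood of a reachable trajectory.
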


The drift $\xi(S)$ has a transparent geometric interpretation.
For each model $\theta^{(j)}$, let $d(a_j) \in \mathbb{R}^{M-1}$ denote the expected log-likelihood ratio vector when the action prescribed by $\theta^{(j)}$ is played under the true environment.\footnote{The notation $d(a_j)$ is indexed by model $j$ through the map $\phi$: we write $a_j := \phi(\theta^{(j)})$. When two models prescribe the same optimal action, their drift vectors coincide.}
\begin{equation}
    d(a_j) := \mathbb{E}_{r \sim f_{\theta^*}(\cdot \mid \phi(\theta^{(j)}))} \left[ \log \left( \frac{f_{\theta^{(1)}}}{f_{\theta^{(M)}}}, \ldots, \frac{f_{\theta^{(M-1)}}}{f_{\theta^{(M)}}} \right) \right] \in \mathbb{R}^{M-1}.
    \label{eq:drift_vector}
\end{equation}
Then the drift can be written as
\begin{equation}
\xi(S) = \sum_{j=1}^M \pi_j(S)\, d(a_j),
\label{eq:drift-convex-combination}
\end{equation}
where $\pi(S)$ denotes the posterior induced by $S$.
Thus, the log-odds drift is a convex combination of a finite set of \emph{vertex drift vectors}, with weights given by current beliefs. This mixture structure is a direct consequence of posterior sampling under TS.

The general-state Markov structure established above imposes strong restrictions on the possible long-run behavior of the log-odds process.
In particular, irreducibility and aperiodicity rule out periodic or absorbing cycles and ensure that the asymptotic dynamics must fall into a small number of canonical regimes.
Before turning to geometric intuition, we first state a fundamental probabilistic dichotomy that applies to the log-odds process.

\paragraph{Dichotomy of the Log-Odds Process}
A probabilistic feature of the log-odds dynamics is that the Markov chain $\{S_t\}$ admits a sharp dichotomy between \emph{recurrence} and \emph{transience}.
Informally, recurrence means that the process returns to neighborhoods of its current region infinitely often, while transience means that it eventually drifts away and does not return.
In particular, positive Harris recurrence implies the existence of a unique invariant distribution and ergodicity of time averages.

\begin{lemma}[Dichotomy of the Log-Odds Process]
\label{lemma:dichotomy-of-the-log-odds-process}
Under the assumption~\ref{assum:full-rank-Reversibility}, the Markov chain $\{S_t\}$ is either recurrent or transient.
\end{lemma}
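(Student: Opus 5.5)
The plan is to reduce the statement to the classical recurrence--transience dichotomy for $\psi$-irreducible chains on a general state space (Meyn and Tweedie, Theorem 8.0.1). That theorem says a $\psi$-irreducible chain is either recurrent or transient. In the recurrent case, every set $B$ with $\psi(B)>0$ satisfies $U(s,B)=\sum_n P^n(s,B)=\infty$ for all $s$. In the transient case, the state space is a countable union of uniformly transient sets. So essentially all the work lies in verifying the hypotheses of that theorem for $\{S_t\}$, and Lemma~\ref{lemma:irreducibility-and-aperiodicity-of-the-log-odds-process} already supplies most of them.

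\textbf{Step 1: Markov structure.} I will check that $\{S_t\}$ is a time-homogeneous Markov chain on $\mathbb{R}^{M-1}$ with a well-defined kernel $P(s,\cdot)$. Given $S_t=s$, the posterior is $\pi(s)=\mathrm{softmax}(s,0)$. The action is $a_j=\phi(\theta^{(j)})$ with probability $\pi_j(s)$. The increment is $Z_t=\ell_{a}(R_t)$, where $\ell_a(r)$ is the vector of log-likelihood ratios and $R_t\sim g(\cdot\mid a)$. Hence
\[
P(s,B)=\sum_{j=1}^M \pi_j(s)\, g\bigl(\{r:\ s+\ell_{a_j}(r)\in B\}\mid a_j\bigr).
\]
This depends only on $s$, and $s\mapsto\pi_j(s)$ is continuous, so $P$ is a genuine transition kernel. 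Moreover, because $s\mapsto P(s,B)$ is a continuous mixture of translates, the chain is weak Feller.

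\textbf{Step 2: $\psi$-irreducibility and the dichotomy.} Under Assumption~\ref{assum:full-rank-Reversibility}, Lemma~\ref{lemma:irreducibility-and-aperiodicity-of-the-log-odds-process} gives $\psi$-irreducibility with Lebesgue measure as the irreducibility measure; I take the maximal irreducibility measure $\psi$ equivalent to it. Invoking the Meyn--Tweedie dichotomy theorem then yields that the chain is either recurrent or transient, which is the statement.

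\textbf{Step 3 (optional): Harris recurrence and topological form.} I will also strengthen the recurrent alternative to Harris recurrence and record the topological reading used later. The rank condition makes the $L$-step kernel dominate a multiple of Lebesgue measure on a neighborhood, uniformly for starting points in a compact set. The key ingredients are the submersion property of the summation map and the strictly positive, continuous action probabilities $\pi_j(s)$ on compacts. Consequently compact sets are petite, so the chain is a $T$-chain. For $T$-chains the dichotomy takes a topological form: either the chain is Harris recurrent (after removing a null set), or it is transient, with $S_t\to\infty$ almost surely, i.e. $\pi_t$ approaches the boundary of $\Delta_M$.

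The main obstacle is the uniform minorization on compact sets. Lemma~\ref{lemma:irreducibility-and-aperiodicity-of-the-log-odds-process} asserts reachability pointwise, but petiteness needs a lower bound that is uniform in the starting point. I would get this by a continuity argument. The density of $\sum_{i\le L}\ell_{a_{k_i}}(R^{(i)})$ near $\sum_i z_i$ is bounded below on a ball, by the full-rank Jacobian and the implicit function theorem. The action-selection probabilities are bounded below on compact $s$-sets. Shifting by $s$ is continuous, and compactness then makes the lower bound uniform. This step is only needed for the strengthened form; the bare dichotomy already follows from Steps 1--2.
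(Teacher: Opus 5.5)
Your proposal is correct and matches the paper's argument: the paper's proof consists of your Steps 1--2, taking Lebesgue-irreducibility from Lemma~\ref{lemma:irreducibility-and-aperiodicity-of-the-log-odds-process} and then invoking Theorem 8.0.1 of \citet{meynMarkovChainsStochastic1993}. Your Step 3 goes beyond what the paper proves here, and its uniform-minorization obstacle can be avoided: a weak Feller $\psi$-irreducible chain whose irreducibility measure has support with nonempty interior is automatically a $T$-chain.
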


From the perspective of posterior learning, recurrence corresponds to persistent belief fluctuations within the interior of the simplex, whereas transience implies that beliefs eventually drift toward the boundary, leading to the exclusion of some models. When beliefs concentrate on a proper face of the simplex, the induced dynamics inherit the same structural form as the original system, allowing the analysis to proceed inductively on a lower-dimensional model space. We formalize these implications in subsequent sections.

The dichotomy above provides a probabilistic backbone for long-run belief behavior but does not, by itself, determine which regime arises in a given environment.
To gain intuition about how this dichotomy resolves, we next examine the structure of the conditional drift through a mean-field representation.

\paragraph{Mean-Field ODE Intuition.}
To build intuition for how the recurrence--transience dichotomy resolves in different environments, we examine the structure of the drift.
The drift function $\xi(S)$ defines a deterministic vector field on $\mathbb{R}^{M-1}$, which can be interpreted as a mean-field representation of the forces acting on the log-odds process and characterize a corresponding mean-field ODE system.\footnote{Importantly, this vector field is not used to approximate the stochastic dynamics pathwise or to characterize convergence directly. Since the learning step here is not assumed to be asymptotically small, our analysis remains in discrete time. For continuous-time approximations of Thompson Sampling, see \citet{fanDiffusionApproximationsThompson2025}.} Fixed points of the vector field, defined by $\xi(S)=0$, correspond to belief configurations at which the expected log-likelihood forces across models exactly balance. Whether such points exist, and how the drift behaves in their neighborhood, will play a key role in the qualitative classification that follows.

We visualize the mean-field dynamics for $M=3$ using two complementary representations, shown in Figures~\ref{fig:vector_fields}.
The left column depicts the vector field in log-odds space $\mathbb{R}^{M-1}$, where each point $S$ represents a log-odds vector and each arrow indicates the mean drift $\xi(S)$—the expected direction of motion of the stochastic process at that state.
The right column shows the same dynamics projected onto the posterior simplex $\Delta_M$, where each point $\pi$ represents a belief distribution over models.
The two representations are linked by the softmax map: each log-odds state $S$ induces a unique posterior $\pi(S)=\psi(S)$, and the dynamics on the simplex are the pushforward of the log-odds dynamics under this map.

% ---- BEGIN vectorFieldfig.tex ----
\begin{figure}[htbp]
    \centering
    \captionsetup{font=small}
    % Row 1: Self-Confirming (a,b) and Self-Defeating (c,d)
    \begin{subfigure}[b]{0.24\textwidth}
        \centering
        \includegraphics[width=\textwidth]{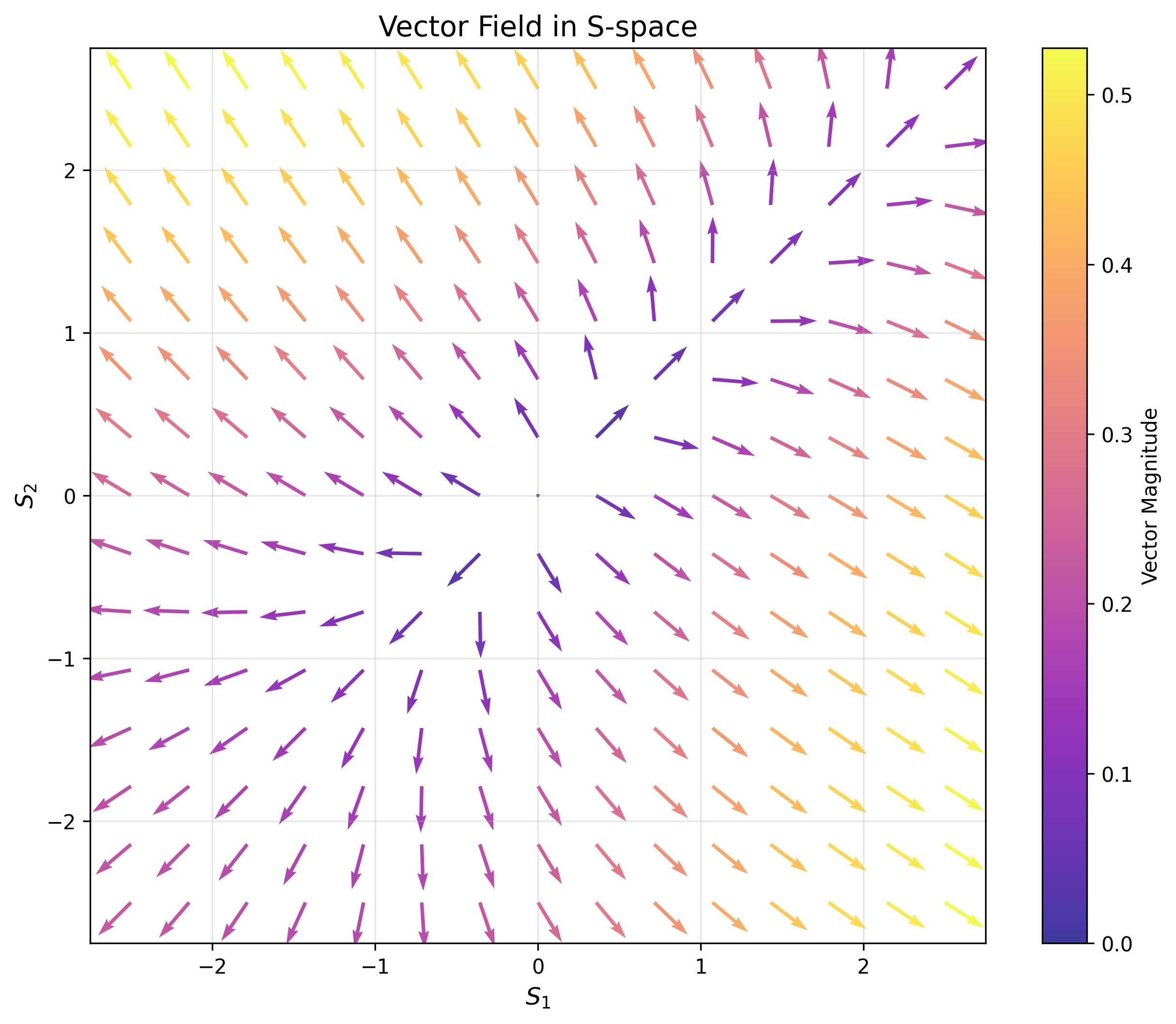}
        \caption{}
        \label{fig:S_selfconfirming}
    \end{subfigure}\hspace{-0.5em}%
    \begin{subfigure}[b]{0.24\textwidth}
        \centering
        \includegraphics[width=\textwidth]{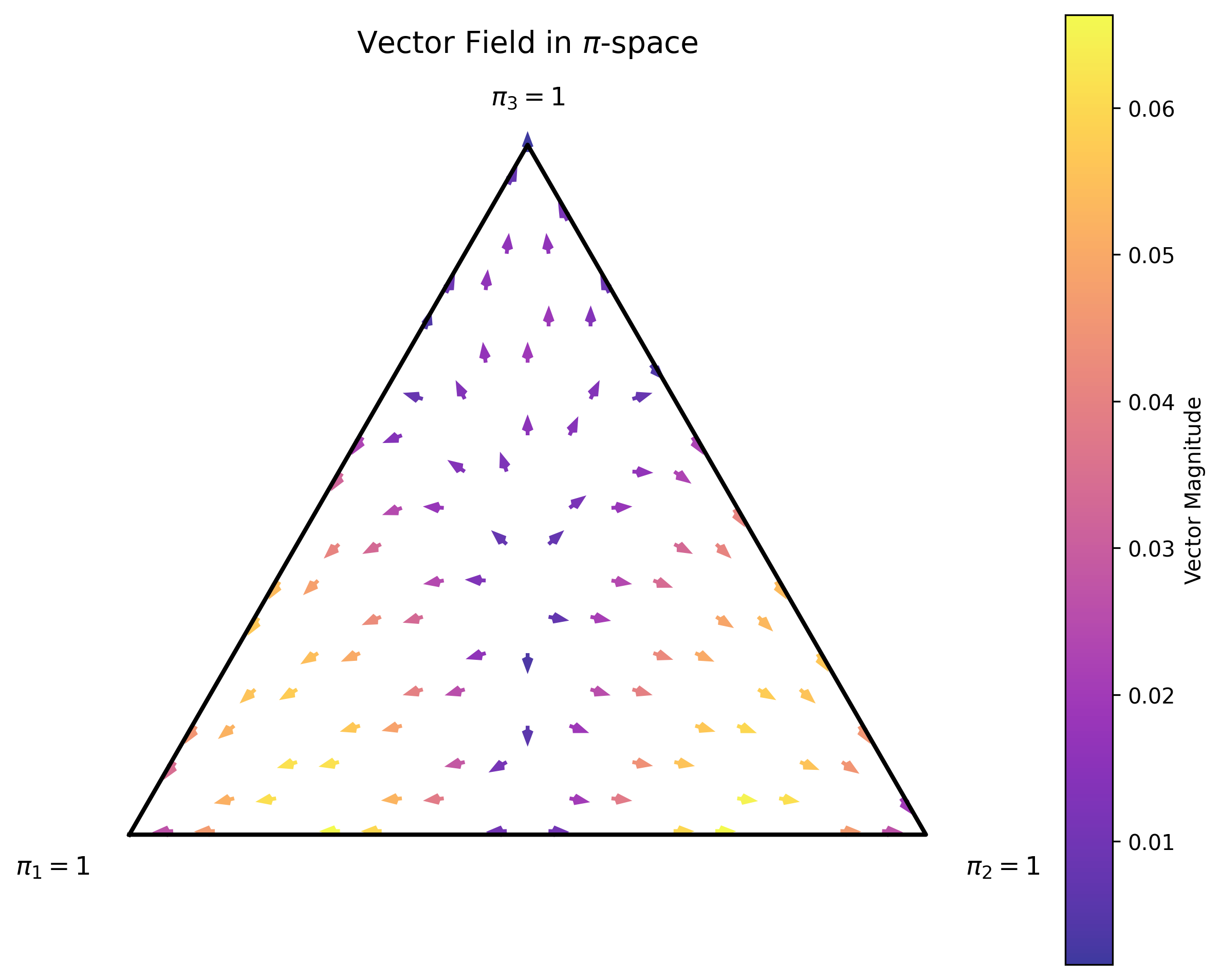}
        \caption{}
        \label{fig:pi_selfconfirming}
    \end{subfigure}\hspace{2em}%
    \begin{subfigure}[b]{0.24\textwidth}
        \centering
        \includegraphics[width=\textwidth]{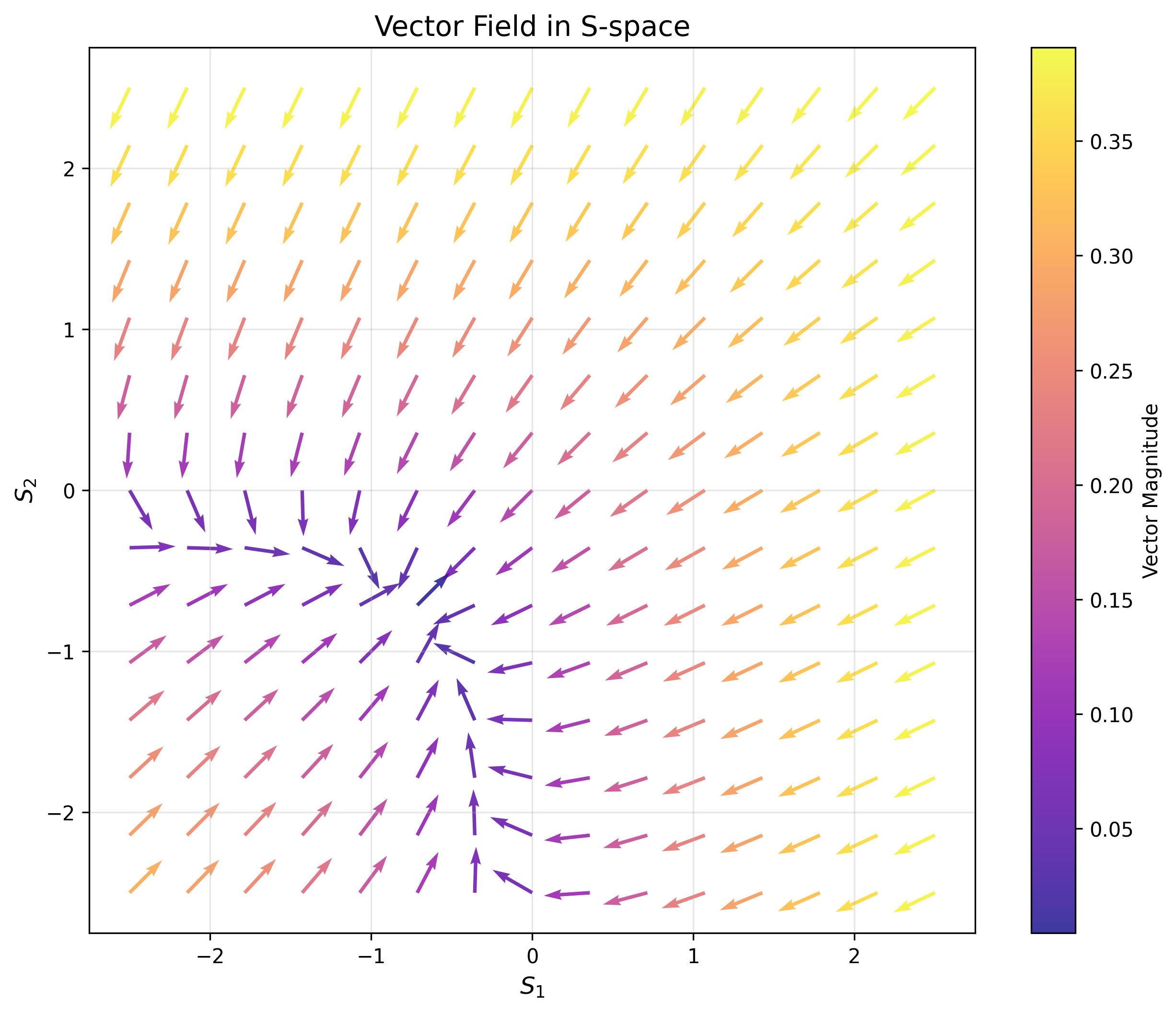}
        \caption{}
        \label{fig:S_selfdefeating}
    \end{subfigure}\hspace{-0.5em}%
    \begin{subfigure}[b]{0.24\textwidth}
        \centering
        \includegraphics[width=\textwidth]{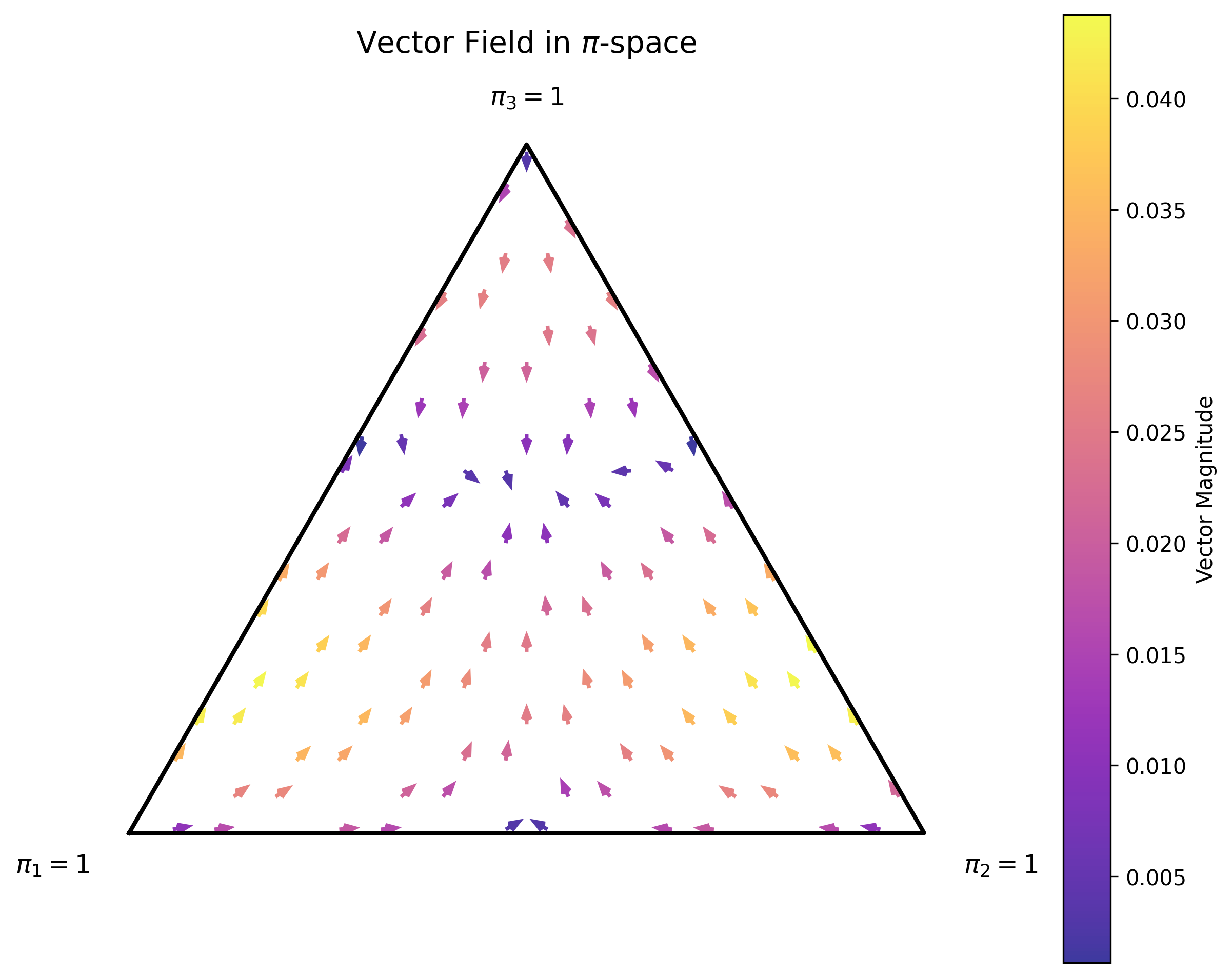}
        \caption{}
        \label{fig:pi_selfdefeating}
    \end{subfigure}

    \vspace{0.5em}

    % Row 2: Drift to Face (e,f) and Drift to Vertex (g,h)
    \begin{subfigure}[b]{0.24\textwidth}
        \centering
        \includegraphics[width=\textwidth]{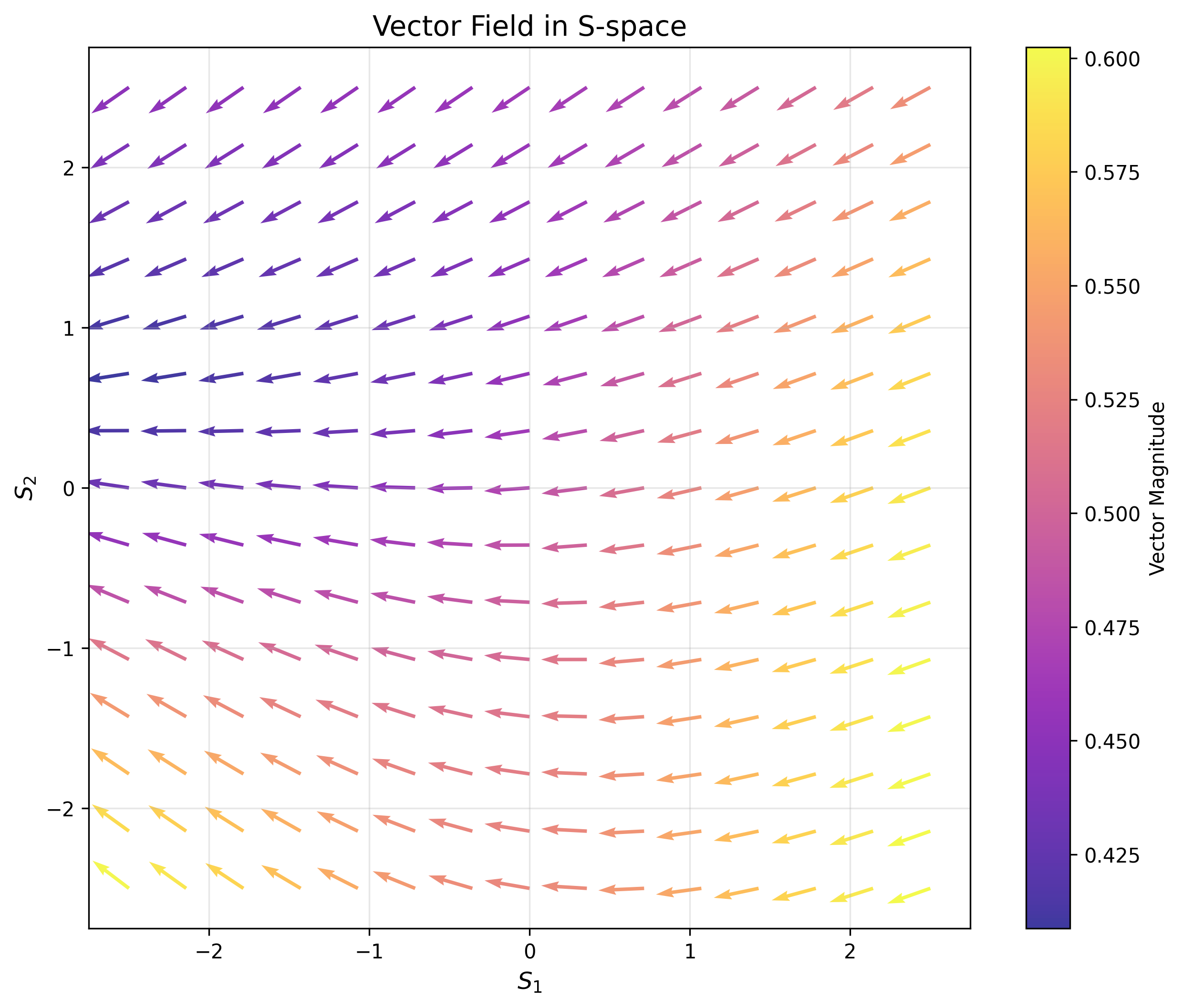}
        \caption{}
        \label{fig:S_drifttoface}
    \end{subfigure}\hspace{-0.5em}%
    \begin{subfigure}[b]{0.24\textwidth}
        \centering
        \includegraphics[width=\textwidth]{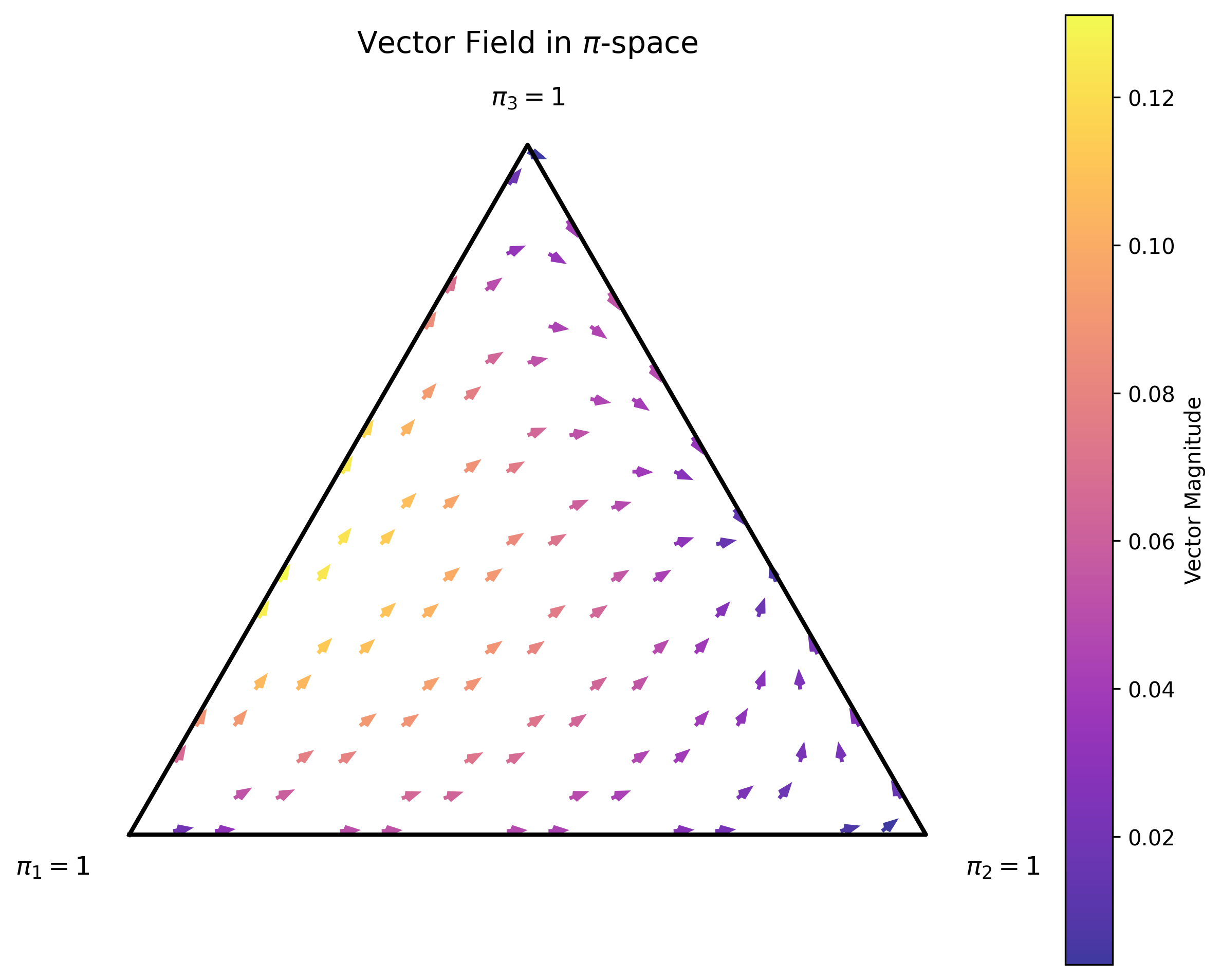}
        \caption{}
        \label{fig:pi_drifttoface}
    \end{subfigure}\hspace{2em}%
    \begin{subfigure}[b]{0.24\textwidth}
        \centering
        \includegraphics[width=\textwidth]{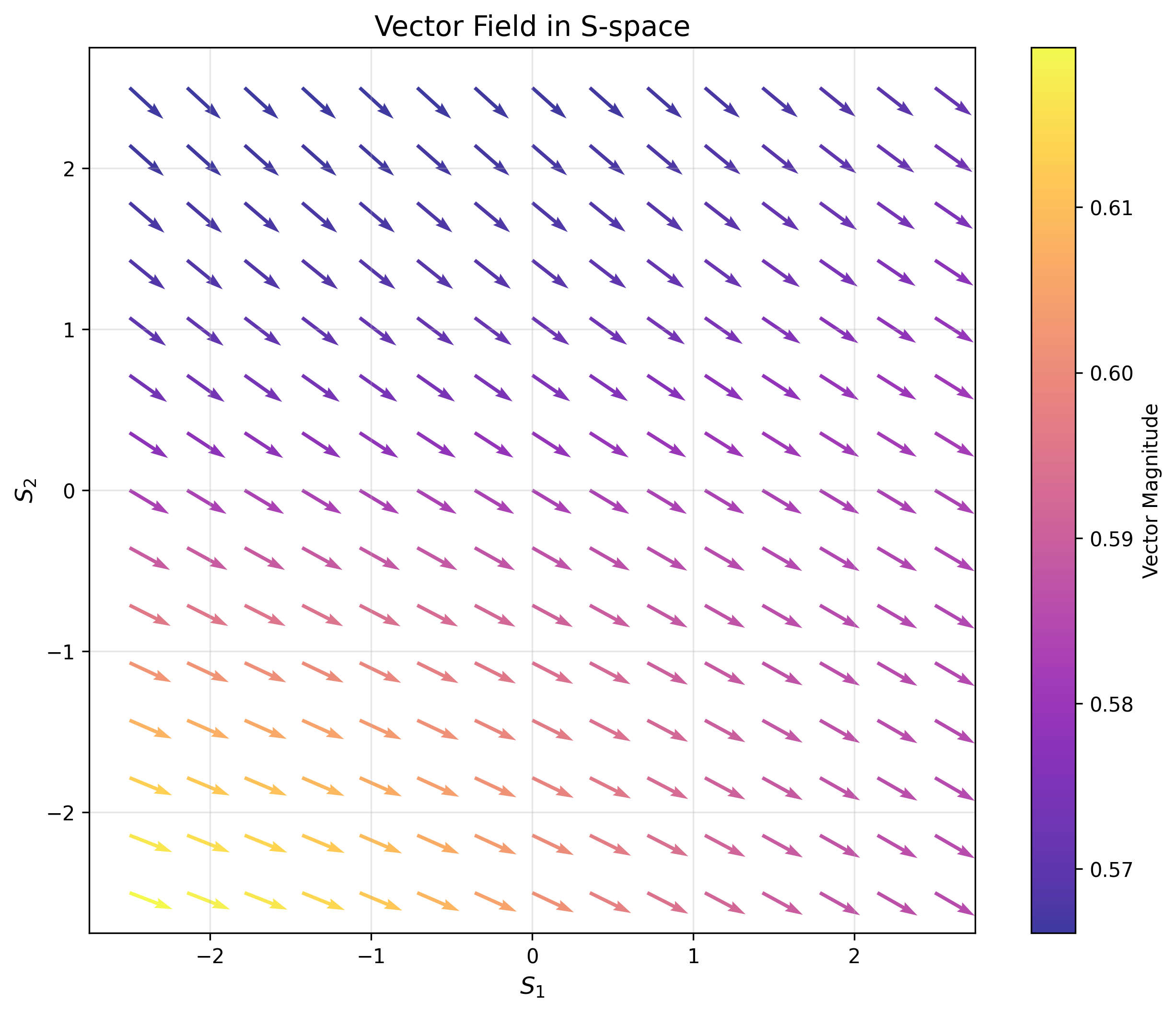}
        \caption{}
        \label{fig:S_drifttovertex}
    \end{subfigure}\hspace{-0.5em}%
    \begin{subfigure}[b]{0.24\textwidth}
        \centering
        \includegraphics[width=\textwidth]{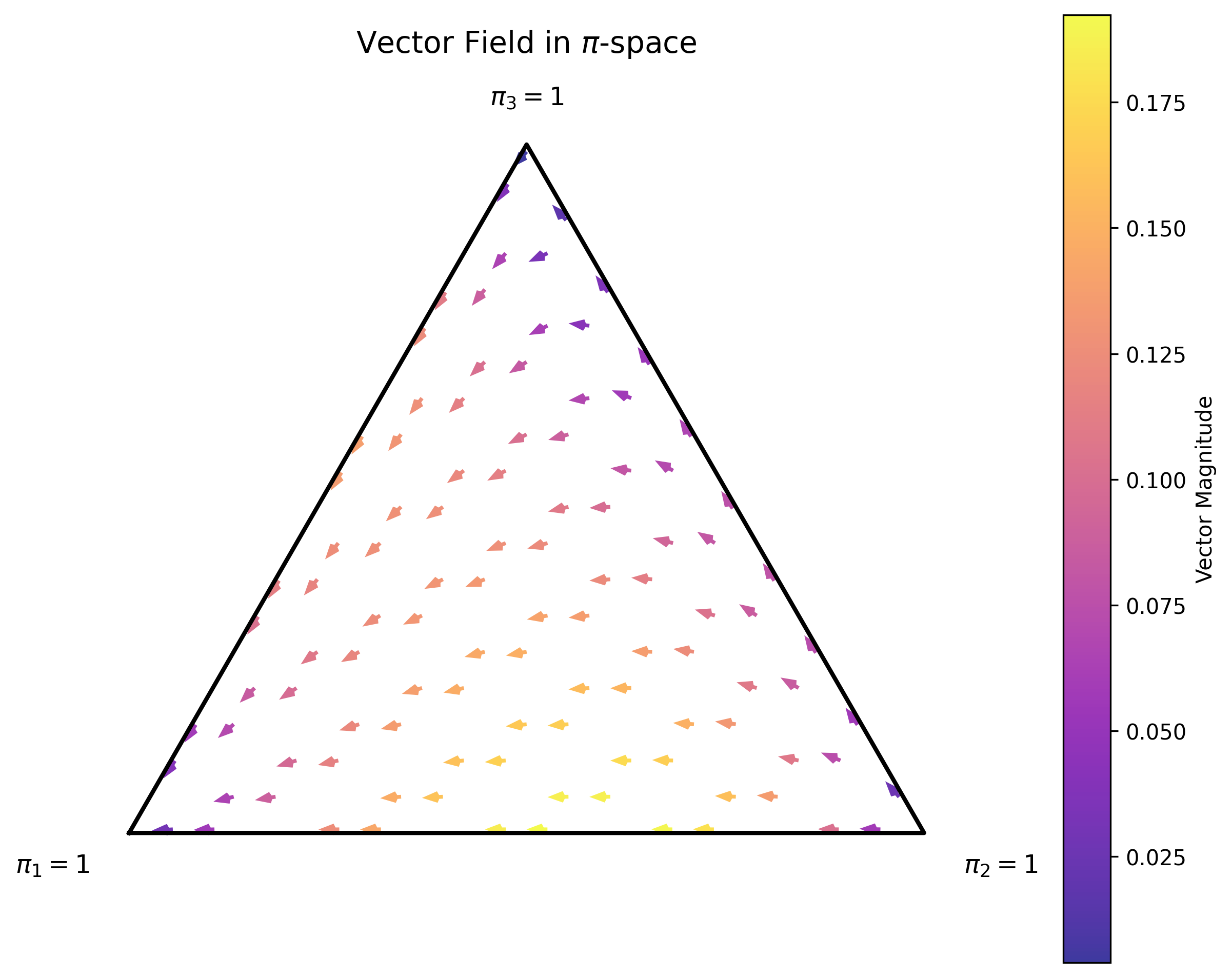}
        \caption{}
        \label{fig:pi_drifttovertex}
    \end{subfigure}

    \caption{Vector field visualizations. Left two columns show S-space; right two columns show simplex space. (a,b)~Self-confirming equilibrium; (c,d)~Self-defeating equilibrium. (e,f)~Drift towards face interior; (g,h)~Drift towards vertex.}
    \label{fig:vector_fields}
\end{figure}
% ---- END vectorFieldfig.tex ----

At any state $S$, the mean drift is given by a convex combination of the model-specific drift vectors according to \eqref{eq:drift-convex-combination}.
Thus, the arrow shown at each point in the vector field figures is a weighted average of the vertex drift vectors $\{d(a_1),\ldots,d(a_M)\}$, with weights given by the current posterior beliefs.
This convex-mixture structure imposes strong global geometric constraints on the drift field and underlies the classification below.

\paragraph{Classification via Drift Geometry}
The vector field of the log-odds drift $\xi(S)$ provides a geometric heuristic for refining the probabilistic dichotomy established in Lemma~\ref{lemma:dichotomy-of-the-log-odds-process}. We illustrate this idea using the case $M=3$.

As shown in Figure~\ref{fig:equilibria_types}, a central object is the relative position of the origin and the convex hull of the vertex drift vectors $\{d(a_1),\ldots,d(a_M)\}$.
Because the mean drift $\xi(S)$ is always a convex combination of these vectors, it can vanish only if the origin lies inside their convex hull.
The position of the origin therefore determines whether an interior fixed point of the mean-field vector field can exist.

% ---- BEGIN fig_equilibria_types.tex ----
\begin{figure}[htbp]
\centering
\captionsetup{font=small}
\begin{subfigure}[b]{0.45\textwidth}
\centering
\begin{tikzpicture}[scale=0.9]
% Define points for self-confirming equilibrium case
\coordinate (d1) at (2, 0.5);
\coordinate (d2) at (-0.5, 2);
\coordinate (d3) at (-1.5, -2);
\coordinate (origin) at (0, 0);

% Draw and fill convex hull
\fill[blue!20, opacity=0.7] (d1) -- (d2) -- (d3) -- cycle;
\draw[thick, blue] (d1) -- (d2) -- (d3) -- cycle;

% Draw axes
\draw[->, thin] (-2.5, 0) -- (2.5, 0) node[right] {$S_1$};
\draw[->, thin] (0, -2.5) -- (0, 2.5) node[above] {$S_2$};

% Draw drift vectors
\draw[->, thick, black] (origin) -- (d1);
\draw[->, thick, black] (origin) -- (d2);
\draw[->, thick, black] (origin) -- (d3);

% Label points
\node[above right] at (d1) {\(d(a_1)\)};
\node[above left] at (d2) {\(d(a_2)\)};
\node[below right] at (d3) {\(d(a_3)\)};

% Mark origin
\fill[red] (origin) circle (3pt);
\node[below left] at (origin) {\(\mathbf{0}\)};

\end{tikzpicture}
\caption{Interior Equilibrium: Self-Confirming}
\label{fig:interior_equilibrium_self_confirming}
\end{subfigure}
\hfill
\begin{subfigure}[b]{0.45\textwidth}
\centering
\begin{tikzpicture}[scale=0.9]
% Define points for self-defeating equilibrium case
\coordinate (d1) at (-2, -0.5);
\coordinate (d2) at (0.5, -2);
\coordinate (d3) at (1.5, 2);
\coordinate (origin) at (0, 0);

% Draw and fill convex hull
\fill[blue!20, opacity=0.7] (d1) -- (d2) -- (d3) -- cycle;
\draw[thick, blue] (d1) -- (d2) -- (d3) -- cycle;

% Draw axes
\draw[->, thin] (-2.5, 0) -- (2.5, 0) node[right] {$S_1$};
\draw[->, thin] (0, -2.5) -- (0, 2.5) node[above] {$S_2$};

% Draw drift vectors
\draw[->, thick, black] (origin) -- (d1);
\draw[->, thick, black] (origin) -- (d2);
\draw[->, thick, black] (origin) -- (d3);

% Label points
\node[above left] at (d1) {\(d(a_1)\)};
\node[below right] at (d2) {\(d(a_2)\)};
\node[above right] at (d3) {\(d(a_3)\)};

% Mark origin (coincides with d1)
\fill[red] (origin) circle (3pt);
\node[below left] at ([xshift=-10pt,yshift=-10pt]origin) {\(\mathbf{0}\)};

\end{tikzpicture}
\caption{Interior Equilibrium: Self-Defeating}
\label{fig:interior_equilibrium_self_defeating}
\end{subfigure}

\begin{subfigure}[b]{0.45\textwidth}
\centering
\begin{tikzpicture}[scale=0.9]
% Define points for no equilibria case
\coordinate (d1) at (-0.5, 1.5);
\coordinate (d2) at (1.5, -0.5);
\coordinate (d3) at (1, 1);
\coordinate (origin) at (0, 0);

% Draw and fill convex hull
\fill[blue!20, opacity=0.7] (d1) -- (d2) -- (d3) -- cycle;
\draw[thick, blue] (d1) -- (d2) -- (d3) -- cycle;

% Draw axes
\draw[->, thin] (-2.5, 0) -- (2.5, 0) node[right] {$S_1$};
\draw[->, thin] (0, -2.5) -- (0, 2.5) node[above] {$S_2$};

% Draw drift vectors
\draw[->, thick, black] (origin) -- (d1);
\draw[->, thick, black] (origin) -- (d2);
\draw[->, thick, black] (origin) -- (d3);

% Label points
\node[above left] at (d1) {\(d(a_1)\)};
\node[below right] at (d2) {\(d(a_2)\)};
\node[above right] at (d3) {\(d(a_3)\)};

% Mark origin
\fill[red] (origin) circle (3pt);
\node[below left] at (origin) {\(\mathbf{0}\)};

\end{tikzpicture}
\caption{No Equilibria: Drift to a Face Interior}
\label{fig:no_equilibrium_face_interior}
\end{subfigure}
\hfill
\begin{subfigure}[b]{0.45\textwidth}
\centering
\begin{tikzpicture}[scale=0.9]
% Define points for no equilibrium case
\coordinate (d1) at (1, 1.5);
\coordinate (d2) at (-0.5, 1);
\coordinate (d3) at (0.3, 0.8);
\coordinate (origin) at (0, 0);

% Draw and fill convex hull
\fill[blue!20, opacity=0.7] (d1) -- (d2) -- (d3) -- cycle;
\draw[thick, blue] (d1) -- (d2) -- (d3) -- cycle;

% Draw axes
\draw[->, thin] (-2.5, 0) -- (2.5, 0) node[right] {$S_1$};
\draw[->, thin] (0, -2.5) -- (0, 2.5) node[above] {$S_2$};

% Draw drift vectors
\draw[->, thick, black] (origin) -- (d1);
\draw[->, thick, black] (origin) -- (d2);
\draw[->, thick, black] (origin) -- (d3);

% Label points
\node[above right] at (d1) {\(d(a_1)\)};
\node[above left] at (d2) {\(d(a_2)\)};
\node[below right] at (d3) {\(d(a_3)\)};

% Mark origin
\fill[red] (origin) circle (3pt);
\node[below left] at (origin) {\(\mathbf{0}\)};

\end{tikzpicture}
\caption{No Equilibria: Drift to a Vertex}
\label{fig:no_equilibrium_vertex}
\end{subfigure}

\caption{Different equilibrium types in multi-model Thompson Sampling. In each panel, the arrows $d(a_1), d(a_2), d(a_3)$ represent the drift vectors associated with each model, which are the expected directions in which log-odds would move if that model were played with probability one. The shaded region is the convex hull $\mathrm{conv}\{d(a_1), d(a_2), d(a_3)\}$, representing all possible mean drift directions achievable by mixing over models. We list out 4 representative cases:  (a)~Interior equilibrium with self-confirming models. (b)~Interior equilibrium with self-defeating models. (c)~No equilibria: drift towards face interior. (d)~No equilibria: drift towards vertex.}
\label{fig:equilibria_types}
\end{figure}
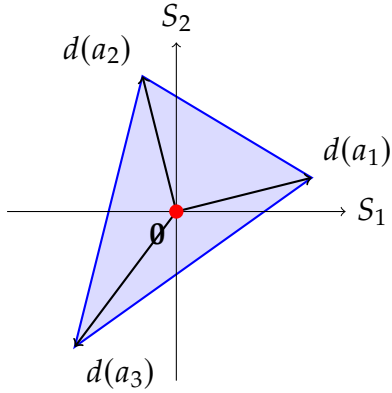
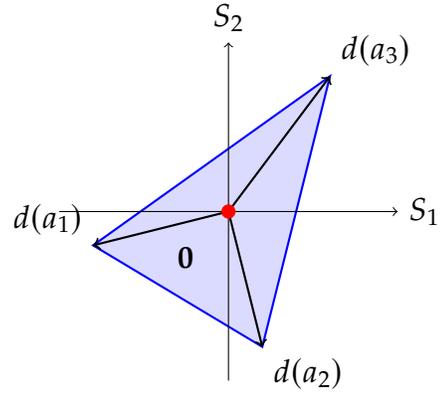
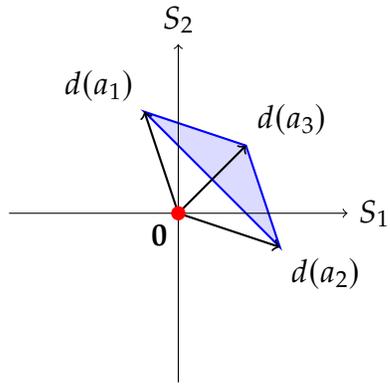
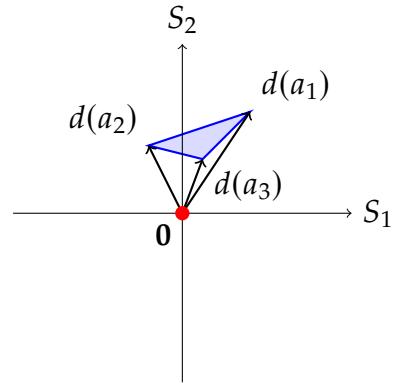
% ---- END fig_equilibria_types.tex ----

This observation suggests the following geometric classification, which serves as an organizing framework for the regimes analyzed formally in later sections:\footnote{This classification generalizes the two-model case, where the sign of the scalar drift determines whether beliefs mix or concentrate.}

\begin{enumerate}
    \item \textbf{Interior Fixed Point:} There exists $\pi\in\Delta_M$ with $\pi_j>0$ for all $j$ such that $\xi(S)=0$, corresponding to an interior balance of mean log-likelihood forces.
    \item \textbf{No Interior Fixed Point:} No $\pi\in\Delta_M$ satisfies $\xi(S)=0$, equivalently the origin $\mathbf{0}$ lies outside the convex hull $\mathrm{conv}\{d(a_1),\ldots,d(a_M)\}$.
\end{enumerate}

We now explain how each panel of Figure~\ref{fig:equilibria_types} corresponds to a distinct qualitative pattern of the mean drift field, as illustrated in Figures~\ref{fig:vector_fields}.

\begin{example}[Drift Geometry and Equilibrium Types in the Case $M=3$]
\label{ex:drift-geometry-and-equilibrium-types}

\textit{Case (a): Interior Equilibrium, Self-Confirming} (Figure~\ref{fig:equilibria_types}a $\leftrightarrow$ Figures~\ref{fig:vector_fields}a,b).
The origin lies inside the convex hull, and the mean drift points inward toward an interior fixed point $S^\star$ from all directions. For example, the $d(a_1)$ vectors are roughly aligned with the $S_1$ axis, which indicates a self-confirming effect for model 1. In the vector field, arrows converge toward $S^\star$, indicating local stability of the mean-field equilibrium.
This corresponds to a \emph{self-confirming} regime, in which beliefs fluctuate persistently around an interior configuration and the stochastic dynamics are ergodic.

\textit{Case (b): Interior Equilibrium, Self-Defeating} (Figure~\ref{fig:equilibria_types}b $\leftrightarrow$ Figures~\ref{fig:vector_fields}c,d).
The origin again lies inside the convex hull, so an interior fixed point exists. 
However, the orientation of the vertex drifts implies that the mean drift points outward from $S^\star$. For example, the $d(a_1)$ vectors are roughly opposite to $S_1$ axis, which indicates a self-defeating effect for model 1. In the vector field, arrows diverge away from the fixed point, indicating instability.
Although a balance exists at the level of expected drift, the stochastic process is pushed toward the boundary, leading to eventual model exclusion.
This is the \emph{self-defeating} regime.

\textit{Case (c): No Interior Equilibrium, Drift to a Face} (Figure~\ref{fig:equilibria_types}c $\leftrightarrow$ Figures~\ref{fig:vector_fields}e,f).
The origin lies outside the convex hull, so no interior fixed point exists.
The geometry of the drift vectors implies that the mean drift points toward a proper face of the simplex rather than toward a vertex. Here all the drift vectors go to the direction favoring model 1 and model 2, which indicates model 3 will be eventually eliminated. However, the $d(a_1)$ and $d(a_2)$ vectors are both self-defeating for their respective models. Therefore, in the vector field, arrows systematically push beliefs toward an edge, where one model is eliminated while the remaining models continue to mix.
The dynamics then reduce to a lower-dimensional system on that face.

\textit{Case (d): No Interior Equilibrium, Drift to a Vertex} (Figure~\ref{fig:equilibria_types}d $\leftrightarrow$ Figures~\ref{fig:vector_fields}g,h).
The origin lies outside the convex hull and the vertex drifts point in roughly the same direction with $S_2$ axis. As a result, the mean drift directs beliefs toward a single vertex of the simplex.
In the vector field, arrows converge toward that vertex, corresponding to eventual concentration on a single model.
    
\end{example}

\begin{remark}[Correct specification vs. misspecification]
When the true model $\theta^\star$ is contained in $\Theta$ and is identifiable, posterior concentration on $\theta^\star$ is the generic outcome. In our drift picture, this corresponds the case (d) in Figure~\ref{fig:equilibria_types}d and Figures~\ref{fig:vector_fields}g,h. The drift vectors systematically point toward the true model, so interior mixing is not expected. In fact, the possibility of an interior invariant distribution (persistent belief mixing) is a phenomenon that is naturally associated with misspecification, where no single model can uniformly fit the endogenously generated data across actions.
\end{remark}

We emphasize that these equilibria and drift patterns are defined at the level of the mean-field vector field and do not, by themselves, characterize stochastic convergence.
In the sections that follow, we analyze each regime formally and establish the corresponding asymptotic behavior of the log-odds and belief processes.

\subsection{Asymptotic Behavior with an Interior Fixed Point}

We begin with the case in which the drift geometry admits an interior fixed point.
An interior fixed point exists if
\[
\mathbf{0} \in \mathrm{int}\!\left(\mathrm{conv}\{d(a_1),\ldots,d(a_M)\}\right),
\]
equivalently, if there exists a belief $\pi^\star \in \Delta_M$ with $\pi^\star_j>0$ for all $j$ such that
\[
\sum_{j=1}^M \pi^\star_j\, d(a_j) = 0.
\]
When the drift vectors are affinely independent, such a belief $\pi^\star$ is unique.
The existence of an interior fixed point is necessary for long-run belief mixing, but, as we show below, it is not sufficient.

\vspace{0.5em}

The existence of an interior fixed point raises the question of whether the log-odds process $\{S_t\}$ remains stochastically stable in its neighborhood or instead drifts toward the boundary despite local balance.
To address this question, we study the ergodicity of the Markov process $\{S_t\}$.
Throughout this subsection, we impose a mild regularity condition ensuring that the conditional second moments of the one-step increments are uniformly bounded.

\begin{assumption}[Uniform second-moment bound]\label{assumption:uniform-second-moment-bound}
There exists $\bar\sigma^2<\infty$ such that
\[
\E\!\big[\|S_{t+1}-S_t\|^2\mid S_t=S\big]\le \bar\sigma^2\qquad\text{for all }S\in\mathbb{R}^{M-1}.
\]
\end{assumption}

Assumption~\ref{assumption:uniform-second-moment-bound} is satisfied in standard settings such as Gaussian bandits (see Appendix~\ref{app:B:multiarm}).

Our main technical framework is the stochastic stability theory for general state space Markov chains \citep{meynMarkovChainsStochastic1993}.
The key idea is to use Lyapunov functions to certify either (i) stability (positive Harris recurrence), which implies existence of an invariant distribution and ergodicity, or (ii) instability (transience), which implies drift toward the boundary.
We present two sufficient conditions based on two different Lyapunov constructions: one “angle-based” and one based on a softmax potential.

\vspace{1em}
\textit{1. Angle Condition.}

Our first sufficient condition for ergodicity is geometric in nature.
It requires that, at every point away from the fixed point, the mean drift points inward on average.
Formally, this condition can be expressed as an angle condition between the drift vector and the displacement from the fixed point.
When satisfied, this condition ensures that the log-odds process admits \textit{a unique invariant distribution} supported on the interior of the simplex. While intuitive, this condition depends on the nonlinear softmax map in $\xi(S)$ and is therefore difficult to verify ex ante.\footnote{However, as we will see in the second condition, this angle condition requires milder assumptions on the noise scale, which can be useful when the noise variance is large. }

\begin{theorem}[The First Ergodic Sufficient Condition: Angle Condition]\label{thm:ergodicity-angle-condition}
Let $\{S_t\}_{t\ge0}\subset \mathbb{R}^{M-1}$ denote the log-odds process and $\{\pi_t\}_{t \geq 0} \in \Delta_M$ denote the posterior beliefs, Suppose the process satisfies the regularity Assumption~\ref{assumption:uniform-second-moment-bound}.
Define $S^\star$ as the fixed point of the mean-field ODE. Then the posterior belief is ergodic if
\begin{equation}\label{eq:angle-condition}
\big\langle \xi(S), \, S - S^\star \big\rangle < 0.
\end{equation}
\end{theorem}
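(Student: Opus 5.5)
Our plan is a Foster--Lyapunov drift argument in the sense of \citet{meynMarkovChainsStochastic1993}, using the squared distance to the fixed point as the test function. Set $V(S):=\|S-S^\star\|^2$. Since $S_{t+1}-S_t=\xi(S_t)+\varepsilon_t$ with $\E[\varepsilon_t\mid S_t]=0$, expanding gives
\[
\E[V(S_{t+1})-V(S_t)\mid S_t=S] \;=\; 2\langle \xi(S),S-S^\star\rangle + \E\big[\|S_{t+1}-S_t\|^2\mid S_t=S\big] \;\le\; 2\langle \xi(S),S-S^\star\rangle+\bar\sigma^2,
\]
where the inequality is Assumption~\ref{assumption:uniform-second-moment-bound}. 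The aim is the geometric drift inequality $PV-V\le -1+b\,\mathbbm{1}_C$ for a compact set $C$. Ergodicity then follows from irreducibility and aperiodicity (Lemma~\ref{lemma:irreducibility-and-aperiodicity-of-the-log-odds-process}) together with a petite-set property for compacts. That gives positive Harris recurrence, a unique invariant $\mu$, and $\mathcal L(S_t)\Rightarrow\mu$ in total variation. Pushing forward through the softmax map gives the statement for $\pi_t$, with $\mu$ supported on $\mathrm{int}\,\Delta_M$.

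The crux is that the pointwise angle condition \eqref{eq:angle-condition} alone does not dominate $\bar\sigma^2$. We therefore need a quantitative version: some $R,\delta>0$ with
\[
\langle \xi(S),S-S^\star\rangle\le -\tfrac12(\bar\sigma^2+1) \quad\text{whenever } \|S-S^\star\|\ge R.
\]
We would derive this from the structure of $\xi$ in \eqref{eq:drift-convex-combination}. Along a ray $S=S^\star+ru$ with $u$ a unit vector, as $r\to\infty$ the softmax weights $\pi(S)$ converge to the uniform distribution on the face $F(u)$ of maximizing coordinates, i.e.\ the models with the largest $S^{(k)}$, including the reference model with coordinate $0$. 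Consequently $\xi(S^\star+ru)\to\bar\xi(u)$, an average of the vertex drifts on that face.

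Then $\langle \xi(S),S-S^\star\rangle=r\langle\xi,u\rangle$. The angle condition gives $\langle\xi(S^\star+ru),u\rangle<0$ for every $r>0$, and so in the limit $\langle\bar\xi(u),u\rangle\le 0$. We interpret the theorem's hypothesis as holding strictly at infinity, i.e.\ $\limsup_{r\to\infty}\sup_u\langle\xi(S^\star+ru),u\rangle<0$. This is the same as $\langle \bar\xi(u),u\rangle<0$ uniformly, which compactness of the sphere and the finitely many face-limits make plausible. Given it, we get $\langle \xi(S),S-S^\star\rangle\le -c\|S-S^\star\|$ outside a ball, which is eventually below $-\tfrac12(\bar\sigma^2+1)$.

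This uniformity step is the main obstacle. The convergence $\pi(S^\star+ru)\to$ face-uniform is not uniform in $u$ near directions where coordinates tie. We would handle this by working with the finitely many ``active face'' configurations: bound $\langle\xi,u\rangle$ by a convex combination over faces, and use continuity on compact subsets of the sphere away from tie sets.

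Inside the ball $C=\{\|S-S^\star\|\le R\}$ the drift of $V$ is bounded by $2\sup_C\|\xi\|R+\bar\sigma^2<\infty$, since each $\xi(S)$ lies in $\mathrm{conv}\{d(a_j)\}$. It remains to check that $C$ is petite. Here we would use Assumption~\ref{assum:full-rank-Reversibility}: the full-rank $L$-step map yields a minorization $P^L(s,\cdot)\ge \eta\,\mathrm{Leb}|_{U}$ uniformly for $s$ in a neighborhood, and compactness covers $C$ by finitely many such neighborhoods. Combined with irreducibility, this makes every compact set petite. The conclusion then follows from the $f$-norm ergodic theorem of Meyn--Tweedie (Theorem 14.0.1).
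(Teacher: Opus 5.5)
Your argument is essentially the paper's: the same Lyapunov function $1+\|S-S^\star\|^2$, the same one-step expansion into $2\langle \xi(S),S-S^\star\rangle+\E[\|S_{t+1}-S_t\|^2\mid S_t=S]$, and a Foster--Lyapunov conclusion (the paper writes the drift bound as $-\beta V^{1/2}+B$ rather than $-1+b\,\mathbbm{1}_C$). The only substantive difference is in your favor: the paper simply asserts that \eqref{eq:angle-condition} yields $2\langle S-S^\star,\xi(S)\rangle\le-\beta\|S-S^\star\|$ for large $\|S-S^\star\|$, which a pointwise strict inequality on a noncompact set does not by itself deliver, so your explicit uniform-at-infinity hypothesis (equivalently, $\langle d(a_j),u\rangle<0$ for every unit $u$ and every $j$ maximizing $(u^{(1)},\dots,u^{(M-1)},0)$) together with your petiteness check for compact sets makes precise exactly what the paper's proof tacitly uses.
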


\vspace{0.5em}
\textit{2. Spectral Condition.}

We therefore turn to a more tractable sufficient condition that highlights the interaction between deterministic drift and stochastic fluctuations. Recall that the log-odds drift admits the factorization
\begin{equation}
\xi(S)
= G\big(\pi_{-M}(S)-\pi_{-M}^\star\big),
\label{eq:xi-factor-main}
\end{equation}
where $G\in\mathbb{R}^{(M-1)\times(M-1)}$ is the drift matrix whose columns are the relative log-likelihood drift vectors $d(a_i)-d(a_M)$, and $\pi_{-M}^\star$ denotes the interior fixed-point belief excluding the reference model. This factorization reveals that the mean-field dynamics admit a generalized gradient structure with respect to a canonical softmax potential $V(S)$ defined as below:

\begin{definition}[Softmax potential]
\label{def:softmax-potential-main}
Define a function $V:\mathbb{R}^{M-1}\to\mathbb{R}$ as
\begin{equation}
V(S) := \log\!\Big(1+\sum_{k=1}^{M-1} e^{S^{(k)}}\Big)-\langle S,\pi_{-M}^\star\rangle .
\end{equation}
Then it is convex with softmax as its gradient:
\[
\nabla V(S)=\pi_{-M}(S)-\pi_{-M}^\star,
\qquad
\nabla^2 V(S)
= \operatorname{diag}(\pi_{-M}(S))-\pi_{-M}(S)\pi_{-M}(S)^\top \succeq 0 .
\]
\end{definition}

Consequently, the mean-field ODE associated with Thompson Sampling can be written as
\begin{equation}
\dot S = \xi(S) = G\,\nabla V(S),
\label{eq:generalized-gradient-ode}
\end{equation}
which is a \emph{generalized gradient system}: the vector field is obtained by applying the linear operator $G$ to the gradient of a convex potential $V$. This representation has two immediate implications.

First, the geometry of the deterministic dynamics is governed by the spectral properties of the symmetric part
\[
\Sym(G) := \tfrac12(G+G^\top),
\]
while the antisymmetric part of $G$ generates rotational components along the level sets of $V$.
In particular, when $\Sym(G)\prec 0$, the deterministic flow is globally contracting toward the unique interior fixed point $S^\star$, whereas when $\Sym(G)\succ 0$, the fixed point is dynamically unstable.

Second, the stochastic stability of the log-odds process can be analyzed using $V$ as a Lyapunov function.
This formulation allows us to separate the contribution of deterministic drift forces, captured by $G$, from stochastic fluctuations, controlled by the conditional second moments of the one-step increments.

To formalize this idea, we impose a mild \emph{small-noise condition} that bounds the conditional second moments of the one-step increments relative to the strength of the contracting drift. This condition ensures that, outside a sufficiently large neighborhood of the interior fixed point, the expected decrease in the Lyapunov function induced by the drift dominates the accumulation of second-order noise terms. 

\begin{assumption}
    \label{as:small-noise}
    Let $L:=\sup_S\|\nabla^2 V(S)\|\le \tfrac12$, let $\lambda_{\min}$ be the eigenvalue of smallest magnitude of $\Sym(G)$, and let $c_\infty$ be the lower bound on the gradient norm $\|\nabla V(S)\|$ from Lemma~\ref{lem:grad-away}. We assume that the noise level in Assumption~\ref{assumption:uniform-second-moment-bound} satisfies:
    \[
    \bar\sigma^2 \;\leq\; \frac{|\lambda_{\min}|\, c_\infty^2}{L}.
    \]
\end{assumption}

\begin{theorem}[The Second Ergodic Sufficient Condition: Spectral Condition]
    \label{thm:sufficient-condition-ergodicity-transience}
    Let $G$ be the drift matrix denoted in \eqref{eq:xi-factor}. If the condition in Assumption \ref{as:small-noise} holds, we have the following sufficient condition for $\{S_t\}$ to be ergodic or transient:

    \begin{enumerate}
        \item If $\Sym(G)\prec 0$, then $\{S_t\}$ is positive Harris recurrent and hence ergodic. And the posterior belief $\{\pi_t\}$ admits a unique invariant distribution $\pi_\infty \in \mathcal{P}(\Delta_M)$.
        \item If $\Sym(G)\succ 0$, then $\{S_t\}$ is transient, and the posterior belief $\{\pi_t\}$ converges to a (generally path-dependent) proper face $\Delta_I$ of the simplex, corresponding to a random subset $I\subset\{1,\ldots,M\}$ of surviving models. Models $j\notin I$ are eliminated in the sense that $\pi_t(\theta^{(j)})\to 0$. 
    \end{enumerate}

\end{theorem}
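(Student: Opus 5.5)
We use the softmax potential $V$ of Definition~\ref{def:softmax-potential-main} as a Foster--Lyapunov function and combine it with the $\psi$-irreducibility and aperiodicity from Lemma~\ref{lemma:irreducibility-and-aperiodicity-of-the-log-odds-process}. The basic estimate is a second-order Taylor expansion of $V$ along one step $S_{t+1}=S_t+Z_t$. Since $\nabla^2 V\preceq L\,I$ with $L\le \tfrac12$, and Assumption~\ref{assumption:uniform-second-moment-bound} holds,
\[
\E[V(S_{t+1})-V(S_t)\mid S_t=S]\;\le\;\langle \nabla V(S),\xi(S)\rangle+\tfrac{L}{2}\bar\sigma^2 .
\]
Using the factorization \eqref{eq:xi-factor-main}, $\xi=G\nabla V$, the first term equals $\nabla V(S)^\top \Sym(G)\,\nabla V(S)$, because the antisymmetric part of $G$ contributes nothing to the quadratic form. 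If $\Sym(G)\prec0$, this is at most $-|\lambda_{\min}|\,\|\nabla V(S)\|^2$.

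\emph{Ergodic case.} By Lemma~\ref{lem:grad-away}, $\|\nabla V(S)\|\ge c_\infty$ outside a compact set $C$ around $S^\star$; this holds because $\pi_{-M}(S)$ stays bounded away from the interior point $\pi^\star_{-M}$ once $S$ leaves a compact set. Assumption~\ref{as:small-noise} then gives a drift of at most $-|\lambda_{\min}|c_\infty^2+\tfrac L2\bar\sigma^2\le-\tfrac12|\lambda_{\min}|c_\infty^2<0$ off $C$. On $C$ the drift is bounded above by a constant $b$. We also need $V$ to be coercive, i.e.\ $V\ge0$ after normalization with $V\to\infty$ as $\|S\|\to\infty$. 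This follows from strict convexity together with $\pi^\star$ being interior, since the recession function $\max(0,\max_k S^{(k)})-\langle S,\pi^\star_{-M}\rangle$ is positive in every direction. Compact sets are petite for a $\psi$-irreducible, aperiodic chain that is weak Feller (or $T$-chain); the Feller property comes from continuity of the mixture kernel in $S$, and the $T$-chain property from the rank condition in Assumption~\ref{assum:full-rank-Reversibility}. The drift condition $PV\le V-\varepsilon+b\mathbf 1_C$ then yields positive Harris recurrence by Meyn--Tweedie (Thm.~11.3.4 and 13.0.1), so $\mathcal L(S_t)$ converges in total variation to a unique invariant law. Its pushforward under the softmax map gives the invariant distribution of $\pi_t$.

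\emph{Transient case.} If $\Sym(G)\succ0$, the same expansion gives $\E[\Delta V]\ge \lambda_{\min}\|\nabla V\|^2-\tfrac L2\bar\sigma^2>0$ off a compact set. So $V$ is a bounded-increment-in-$L^2$ submartingale-type Lyapunov function pushing outward. We apply the transience criterion of Meyn--Tweedie (Thm.~8.4.2) to a bounded test function such as $W=1-e^{-\eta V}$, or alternatively $1/(1+V)$ as a supermartingale. For small $\eta$ this makes $W$ a strict submartingale outside $C$, and boundedness gives positive probability of never returning to $C$. The dichotomy of Lemma~\ref{lemma:dichotomy-of-the-log-odds-process} then upgrades this to transience of the whole chain. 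Transience means $S_t$ leaves every compact set eventually a.s., so $\|S_t\|\to\infty$. In simplex coordinates, $\pi_t$ approaches $\partial\Delta_M$: some $\pi_t(\theta^{(j)})\to0$. Identifying the limiting face $I$ requires showing that the set of coordinates going to $-\infty$ relative to the maximum eventually stabilizes. We would argue this using the drift restricted near each face, where the weights on vanishing models are exponentially small, so the dynamics are a perturbation of the face system.

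\emph{Main obstacle.} We expect the hardest step to be the second-order remainder in the transient case and the face-stabilization argument. The Taylor bound needs $\E\|Z\|^2$ control only, but the exponential test function requires exponential moments or a truncation argument; for Gaussian bandits these are available. If that fails, we would instead use the bounded function $-1/V$ and its Taylor expansion, which needs only second moments.
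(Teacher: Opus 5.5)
Your argument for part 1 and for the transience half of part 2 follows the paper's route. The softmax potential $V$ serves as a Foster--Lyapunov function, using $\E[Z\mid S]=G\nabla V(S)$, $\|\nabla^2V\|\le\tfrac12$ and the small-noise assumption to get a uniformly negative drift off a compact set. For $\Sym(G)\succ0$, the bounded transform $1-e^{-\eta V}$ is fed into Meyn--Tweedie Thm.~8.4.2. You are more careful than the paper on three points:
\begin{itemize}
\item coercivity of $V$ (indeed $V(S)-V(S^\star)=\mathrm{KL}(\pi^\star\,\|\,\pi(S))$);
\item petiteness of compacts via the weak Feller/$T$-chain property;
\item the moment requirement of the exponential test function.
\end{itemize}
On the last point, the paper writes $e^{-x}\le 1-x+x^2/2$ for all real $x$. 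This fails for $x<0$, i.e.\ exactly for downward jumps of $V$, so your caveat is warranted. Your fallback $1/(1+V)$ does work with second moments. Jumps with $\Delta V<-V/2$ contribute $o(V^{-2})$ by uniform integrability of $\|Z\|^2$, and that uniform integrability is automatic because the conditional law of $Z$ is a mixture of finitely many fixed laws, one per action.

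The genuine gap is the last assertion of part 2: that $\pi_t$ converges to a fixed proper face $\Delta_I$ with $\pi_t(\theta^{(j)})\to0$ for $j\notin I$. You only gesture at it (``a perturbation of the face system''). The paper's proof does not supply it either. It stops at $\|S_t\|\to\infty$ and $\mathrm{dist}(\pi_t,\partial\Delta_M)\to0$, i.e.\ $\min_j\pi_t(\theta^{(j)})\to0$ with a possibly wandering minimizer. Your sketch cannot be completed from $\Sym(G)\succ0$ plus small noise, because the set of vanishing coordinates need not stabilize.

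Here is why. The mean-field flow is the replicator equation $\dot\pi_k=\pi_k\big((L\pi)_k-\pi^\top L\pi\big)$ with payoff $L_{kj}=\E_{g(\cdot\mid a_j)}[\log f_{\theta^{(k)}}(r\mid a_j)]$. Take $M=3$ with three distinct recommended actions, and let $L$ equal, up to column constants, the rock--paper--scissors matrix $\begin{pmatrix}0&-b&a\\ a&0&-b\\ -b&a&0\end{pmatrix}$ with $b>a>0$. Then $d(a_1)=(b,a+b)$, $d(a_2)=(-(a+b),-a)$, $d(a_3)=(a,-b)$, and $\pi^\star$ is uniform. Moreover $\Sym(G)=(b-a)\begin{pmatrix}1&1/2\\ 1/2&1\end{pmatrix}\succ0$.

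Far out, the drift is essentially $d(a_j)$ on the cone where model $j$ dominates. Start from a tie of models $1,3$ with model $2$ trailing by $u$, and follow $d(a_1)$. You end at a tie of models $1,2$ with model $3$ trailing by $bu/a$. By symmetry the trailing gap is multiplied by $b/a$ at every hand-off, so $S_t$ spirals outward. The $O(\sqrt t)$ fluctuations are negligible at these geometrically growing scales. Hence every $\pi_t(\theta^{(j)})$ returns near $1$ infinitely often.

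Such an $L$ is realizable. Use Gaussian models with tiny mean offsets, which fix $\phi$, and $O(\epsilon)$ perturbations of the variances, which set the fit pattern. Then $|\lambda_{\min}|=\Theta(\epsilon)$ while $\bar\sigma^2=O(\epsilon^2)$, so the small-noise assumption holds.

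What your argument, like the paper's, actually proves in part 2 is transience and $\pi_t\to\partial\Delta_M$. Identifying a limiting face requires an additional hypothesis that rules out heteroclinic cycling among the faces of the boundary.
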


\vspace{0.5em}

The quasi-gradient system approach for ergodicity breaks down the condition into two
parts: the deterministic part of the dynamics and the stochastic part of the dynamics. For
the first part, we not only have the global stability of the deterministic dynamics purely
induced by the drift matrix G, but also get a tractable way to predict the richer behavior
of the dynamics such as cycles and chaos.\footnote{For example, when the drift matrix $G$ has complex eigenvalues with positive real parts, the deterministic dynamics can exhibit oscillatory behavior around the fixed point, leading to cycles or even chaotic trajectories. This richer behavior is not obvious from the angle condition in Theorem~\ref{thm:ergodicity-angle-condition}.} And for the second part, we have the regularity
condition for the noise to be comparable to the drift strength. This gives us a clear underlying
structure of the asymptotic behavior of the posterior belief. Before we move on to the
next subsection, we will end this subsection with a quick characterization of the asymptotic
behavior of the posterior belief when the mean field ODE has no equilibria.

\subsection{Asymptotic Behavior without an Interior Fixed Point}

In this subsection we treat the case in which the mean-field ODE \eqref{eq:ode-log-odds} has no equilibria, i.e.,
\[
\xi(S)\;=\;\sum_{j=1}^M \pi_j(S)\,d(a_j)\;\neq\;0\qquad\text{for all }S\in\R^{M-1}.
\]
Equivalently, $\,0\notin \mathrm{conv}\{d(a_1),\ldots,d(a_M)\}$. In this case, one can always find a direction which gives a persistent drift, and thus the log-odds process can be guaranteed to be transient.

\begin{proposition}[Guaranteed transience without interior equilibrium]\label{prop:transience-no-equilibrium}
Suppose that \(0\notin \mathrm{conv}\{d(a_1),\ldots,d(a_M)\}\).
Then the following holds:
\begin{enumerate}
\item $\|S_t\|\to\infty$ a.s., hence the chain $\{S_t\}$ and $\{\pi_t\}$ are transient.
\item $\mathrm{dist}(\pi_t,\partial\Delta_M)\to 0$ a.s.
Equivalently, every limit point of $\{\pi_t\}$ lies on a proper face $\mathrm{face}(I)\subset\Delta_M$.
\end{enumerate}
\end{proposition}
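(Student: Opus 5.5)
Since $0\notin \mathrm{conv}\{d(a_1),\ldots,d(a_M)\}$ and this convex hull $K$ is compact, the separating hyperplane theorem gives a unit vector $u\in\R^{M-1}$ and a margin $\delta>0$ with $\langle u, d(a_j)\rangle \ge \delta$ for every $j$. Concretely, take $u$ to be the normalized projection of the origin onto $K$, so that $\delta=\mathrm{dist}(0,K)$. By \eqref{eq:drift-convex-combination}, $\xi(S)$ is a convex combination of the $d(a_j)$ for every $S$. Hence
\[
\langle u,\xi(S)\rangle \ge \delta \qquad\text{for all } S\in\R^{M-1}.
\]
The plan is to project the log-odds process onto this direction. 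Set $Y_t:=\langle u,S_t\rangle$ and, using \eqref{eq:logodds-main}, decompose
\[
Y_t = Y_0 + \sum_{s<t}\langle u,\xi(S_s)\rangle + M_t, \qquad M_t:=\sum_{s<t}\langle u,\varepsilon_s\rangle .
\]
The drift sum is at least $\delta t$. The process $M_t$ is a martingale with increments whose conditional second moments are bounded by $\bar\sigma^2$, by Assumption~\ref{assumption:uniform-second-moment-bound} (the conditional variance of $\varepsilon_s$ is at most the conditional second moment of $S_{s+1}-S_s$).

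The key step is to show $M_t/t\to 0$ almost surely. I would use the martingale strong law: $\sum_t \E[\langle u,\varepsilon_t\rangle^2\mid\mathcal F_t]/t^2\le \bar\sigma^2\sum_t t^{-2}<\infty$. By the Chow/Kronecker argument, $\sum_t \langle u,\varepsilon_t\rangle/t$ converges a.s., and therefore $M_t/t\to0$ a.s. It then follows that $\liminf_t Y_t/t\ge\delta>0$ a.s., so $Y_t\to+\infty$. Since $\|S_t\|\ge Y_t$, we get $\|S_t\|\to\infty$ a.s.

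Transience of $\{S_t\}$ follows directly: every bounded set is visited only finitely often almost surely, so in particular no compact set is recurrent. The same holds for $\{\pi_t\}$ on $\mathrm{int}\,\Delta_M$, because the softmax map is a homeomorphism between $\R^{M-1}$ and $\mathrm{int}\,\Delta_M$ sending compacts to compacts. Should the explicit statement ``not recurrent'' be preferred, it can also be read off from the dichotomy in Lemma~\ref{lemma:dichotomy-of-the-log-odds-process}.

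For part 2, I would translate $\|S_t\|\to\infty$ into a statement on the simplex. The preimage of any compact set $C\subset\mathrm{int}\,\Delta_M$ under softmax is compact in $\R^{M-1}$; on that compact set all the $\pi_j$ are bounded below, so $\|S\|$ is bounded there. Therefore $\|S_t\|\to\infty$ forces $\pi_t$ to eventually leave every compact subset of the interior. This gives $\mathrm{dist}(\pi_t,\partial\Delta_M)\to0$, and every limit point has some coordinate equal to $0$, i.e. it lies on a proper face $\mathrm{face}(I)$.

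I expect the main obstacle to be the noise control. Everything rests on the uniform second-moment bound, which only controls the variance. One must also check that $\varepsilon_t$ is indeed a martingale difference with respect to the filtration generated by $(S_s)$; this holds because TS randomization depends only on $\pi_t=\pi(S_t)$. The rest is deterministic geometry.
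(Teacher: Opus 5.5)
Your proposal is correct and is essentially the paper's proof: strict separation gives a direction $w$ with $\langle w,\xi(S)\rangle\ge c>0$ for all $S$. Projecting the log-odds onto $w$ gives linear drift plus a martingale with bounded conditional variance, and the martingale strong law and Cauchy--Schwarz then yield $\|S_t\|\to\infty$ and finitely many visits to compact sets. Your part-2 argument, which uses that softmax preimages of compact subsets of $\mathrm{int}\,\Delta_M$ are bounded, is slightly cleaner than the paper's case split on the limit direction of $S_{t_n}/\|S_{t_n}\|$; as written, that case split skips the case where the direction's largest coordinate is exactly $0$. Citing the uniform second-moment bound, rather than the small-noise assumption the paper cites, is also the right reference.
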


\subsection{Dimensionality Reduction and Recursive Dynamics}

The results above characterize the asymptotic behavior of the log-odds process for a fixed-dimensional system.
In the transient case, however, beliefs do not diverge arbitrarily: instead, posterior mass is driven toward a proper face of the simplex, endogenously eliminating some models.
A key structural property of Thompson Sampling is that the belief dynamics on such a face inherit the same form as the original system.

\begin{proposition}[Dynamics Succession to a Face]
\label{prop:dynamics_succession}
For the log-odds process above, if \(\mathrm{dist}(\pi_t,\Delta_I):=\min_{q\in\Delta_I}|\pi_t-q|\to 0\), then the full dynamics converge to the face dynamics:
\begin{equation}
|( \Delta\pi_t )_{I}-\Delta\pi^{\text{face}}_t| \longrightarrow 0 \quad \text{and}\quad |( \Delta\pi_t )_{I^c}| \longrightarrow 0.
\end{equation}

\end{proposition}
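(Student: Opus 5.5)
The plan is to construct the face dynamics by coupling them to the full Thompson Sampling update on the same probability space. Everything then reduces to one-step Bayes-rule identities, and the only quantity that has to vanish is the posterior mass outside $I$.

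\textbf{Setup and coupling.} Write $m_t:=\sum_{j\notin I}\pi_t(\theta^{(j)})$. Since $\Delta_I$ is the face where this mass vanishes, $\mathrm{dist}(\pi_t,\Delta_I)\to 0$ is equivalent to $m_t\to 0$, because the two quantities are comparable up to dimension-dependent constants. Define the face belief $q_t:=(\pi_t)_I/(1-m_t)\in\Delta_I$. The face dynamics are the Thompson Sampling update restricted to the models in $I$, started at $q_t$, so that $\Delta\pi^{\text{face}}_t:=q_t^+-q_t$. I will couple the two updates using the same sampled parameter and the same reward:
\begin{enumerate}
\item On the event $\{\theta_t\in I\}$, the draw $\theta_t$ has conditional law $q_t$, which is exactly the face sampling rule, so both updates use the same action $A_t$.
\item The complementary event $\{\theta_t\notin I\}$ has conditional probability $m_t$. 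There I let the face process draw independently from $q_t$.
\end{enumerate}
Since $m_t\to 0$, the actions disagree only on events of vanishing conditional probability. If almost-sure convergence is wanted, I would sum these probabilities and apply a conditional Borel--Cantelli argument, or simply state the result in probability.

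\textbf{Key identity.} On the event where the actions coincide, write $f_j:=f_{\theta^{(j)}}(R_t\mid A_t)$, $D_I:=\sum_{j\in I}f_j\pi_t(\theta^{(j)})$ and $D_c:=\sum_{j\notin I}f_j\pi_t(\theta^{(j)})$. The two updates are
\[
(\pi_{t+1})_I=\frac{f_I\odot(\pi_t)_I}{D_I+D_c},\qquad q_t^+=\frac{f_I\odot(\pi_t)_I}{D_I}.
\]
These give $q_t^+-(\pi_{t+1})_I=q_t^+\cdot D_c/(D_I+D_c)=q_t^+\,m_{t+1}$. Since $q_t^+$ is a probability vector, the norm of this difference is at most $m_{t+1}$. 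Likewise, $\|(\pi_t)_I-q_t\|_1=m_t$. Subtracting these two relations yields
\[
\big|(\Delta\pi_t)_I-\Delta\pi^{\text{face}}_t\big|\le m_{t+1}+m_t,\qquad \big|(\Delta\pi_t)_{I^c}\big|\le m_{t+1}+m_t .
\]
Both bounds tend to zero because the hypothesis gives $m_t\to 0$ along the whole sequence, in particular at time $t+1$.

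\textbf{Main obstacle.} The step requiring care is the one where I avoid bounding the likelihood ratios $f_j/f_k$ directly. These ratios can be unbounded, for example Gaussian likelihood ratios for extreme rewards, so a naive perturbation bound on the denominator $D_I+D_c$ would need moment control. The identity above sidesteps this by expressing the error exactly through the updated mass $m_{t+1}$. That mass vanishes by hypothesis, so no integrability assumption is needed.

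The remaining delicate point is the action-mismatch event $\{\theta_t\notin I\}$. I will handle it in one of two ways:
\begin{enumerate}
\item State the convergence in probability, which follows since the conditional mismatch probability $m_t\to 0$.
\item Obtain almost-sure convergence by noting that on the mismatch event the increments are still bounded by $2$, and using Lévy's extension of Borel--Cantelli. This shows that mismatches occur only finitely often whenever $\sum_t m_t<\infty$, which holds under the geometric decay delivered by Proposition~\ref{prop:transience-no-equilibrium}-type drift.
\end{enumerate}
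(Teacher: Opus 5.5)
Your argument is correct, and it takes a genuinely different route from the paper's.

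\textbf{The paper's route.} The paper linearizes the softmax in log-odds coordinates, $\pi_{t+1}=\psi(S_t)+J(\pi_t)Z_{t+1}+O(|Z_{t+1}|^2)$. It then notes that near $\Delta_I$ the Jacobian entries linking to $I^c$ are $o(1)$ and that $\xi(S_t)=\xi_I(\pi_{I,t})+o(1)$. From this it concludes that the one-step updates agree up to vanishing error. This is essentially a mean-field comparison, with three weaknesses:
\begin{itemize}
\item It matches drifts rather than realized increments.
\item It never addresses the event that the sampled model lies outside $I$, so that the two processes play different actions.
\item Its remainder $O(|Z_{t+1}|^2)$ is not small. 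The log-likelihood increments do not shrink (the paper itself stresses that the step size is not asymptotically small), and they are unbounded in the Gaussian case.
\end{itemize}

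\textbf{What your route buys.} Your exact Bayes identity $q_t^+-(\pi_{t+1})_I=m_{t+1}\,q_t^+$ sidesteps all of this. It gives the explicit pathwise bound $m_t+m_{t+1}$ on the no-mismatch event, with no moment or step-size assumptions. Your coupling also isolates the only genuine source of discrepancy: the action mismatch, whose conditional probability is exactly $m_t$.

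\textbf{What the paper's framing buys, and how you recover it.} The log-odds framing hands off directly to the drift geometry on the face, which is what the recursive stability tests use. Your coupling delivers this too. On the no-mismatch event, the in-face log-odds increments $\log(f_k/f_l)$, $k,l\in I$, are literally identical for the two processes. Taking the reference model in $I$, the $I$-block of $\xi(S_t)$ differs from the face drift at $q_t$ by at most $2m_t\max_j\|d(a_j)\|$.

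\textbf{Caveat on your second option.} The condition $\sum_t m_t<\infty$ is an extra hypothesis, not a consequence of $\mathrm{dist}(\pi_t,\Delta_I)\to 0$. Proposition~\ref{prop:transience-no-equilibrium} only gives linear growth of one functional $w^\top S_t$. That forces geometric decay of at least one model's weight, not of the whole mass $m_t$. Moreover, L\'evy's conditional Borel--Cantelli lemma is an equivalence, so on $\{\sum_t m_t=\infty\}$ mismatches occur infinitely often. If some eliminated model recommends an action outside $\phi(I)$, the two increments at those times are driven by different actions and can differ by an amount of order one. Under the hypothesis as stated, convergence in probability is therefore the right strength, and your first option is the one to state.
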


Specifically, when posterior beliefs converge toward a face $\Delta_I$ of the simplex, models outside $I$ receive vanishing probability and no longer affect the drift.
Consequently, the effective learning dynamics are well-approximated by the induced lower-dimensional Thompson Sampling problem restricted to the surviving models in $I$.\footnote{The posterior converges to the face $\Delta_I$ asymptotically but reaches it only in the limit, as face boundaries constitute singular points of the dynamics in finite time.}
Importantly, the reduced system has the same log-odds representation, drift geometry, and stability structure as the original model space.

This dimensionality reduction property allows us to analyze multi-model learning inductively: whenever the dynamics are transient, the problem collapses to a smaller instance of itself, and the same classification logic can be applied recursively.

\paragraph{Unified Classification of Asymptotic Regimes}

The results above yield a unified classification of several economically relevant asymptotic regimes for Thompson Sampling under model misspecification.
It is important to emphasize that this classification is not exhaustive.
All conditions developed in this section are sufficient rather than necessary, and therefore the regimes identified below represent a subset of the possible long-run behaviors of posterior beliefs.

Within this scope, our analysis delivers a tractable and testable characterization of belief dynamics based on drift geometry, stability properties, and recursive dimensionality reduction.
Table~\ref{tab:multi_model_classification} summarizes the asymptotic regimes that arise when the sufficient conditions are satisfied, together with their relationship to the two-arm case.

When the sufficient conditions for interior ergodicity are met, posterior beliefs admit a unique invariant distribution supported on the interior of the simplex, corresponding to persistent belief mixing.
When these conditions fail but an interior fixed point exists, beliefs may instead be driven toward the boundary, leading to stochastic selection among models or convergence to lower-dimensional faces.
In the absence of any interior fixed point—possibly after recursive reduction—posterior beliefs converge almost surely to a single vertex, corresponding to uniform dominance.

The dimensionality reduction property established above allows these regimes to be combined recursively.
As a result, long-run outcomes may involve ergodic behavior on a face, stochastic mixtures over vertices, or mixtures of invariant distributions on faces and point masses at vertices.
Which outcome is realized depends on the realized learning path and on the stability properties of the induced dynamics at each stage.

When the sufficient conditions do not apply, more complex behavior may arise.
In particular, the generalized gradient structure of the mean-field dynamics permits the possibility of persistent cycles or more irregular trajectories under certain misspecification patterns.
While a full characterization of such dynamics is beyond the scope of this paper, exploring richer dynamical behaviors represents an important direction for future research. \footnote{One may draw insights from the evolutionary game theory literature, as the Bayesian learning process can be viewed as a replicator dynamics in stochastic settings \citep{weibullEvolutionaryGameTheory1997}.}

\begin{table}[H]
        \centering
        \renewcommand{\arraystretch}{1.2}
        \footnotesize
        \begin{tabular}{|p{2.4cm}|p{5.8cm}|p{3.5cm}|p{2.6cm}|}
        \hline
        \textbf{Category} 
        & \textbf{Defining condition} 
        & \textbf{Limiting law} 
        & \textbf{2-arm case} \\
        \hline
        \textbf{Interior \newline Ergodic} 
        & {\scriptsize $0\in\mathrm{int}\,\mathrm{conv}\{d(a_j)\}$} + \newline First/Second Ergodic Condition
        & {\scriptsize Unique invariant $\mu_{\mathrm{int}}$ on $\mathrm{int}(\Delta_M)$} 
        & \emph{Self-defeating} \\
        \hline
        \textbf{Vertex \newline Selection} 
        & {\scriptsize $0\in \mathrm{conv}\{d(a_j)\}$} + \newline First/Second Transient Condition 
        & {\scriptsize Mixture $\sum_{v\in V_\star} W_v\,\delta_{e_v}$} 
        & \emph{Self-confirming} \\
        \hline
        \textbf{Uniform \newline Dominance} 
        & {\scriptsize $0\notin \mathrm{conv}\{d(a_j)\}$} for all subfaces, recursion ends at a single vertex {\scriptsize $e_{m^\star}$} 
        & {\scriptsize Point mass $\delta_{e_{m^\star}}$} 
        & \emph{Uniform \newline dominance} \\
        \hline
        \textbf{Face-Ergodic} 
        & Push to some face {\scriptsize $\Delta_I$} (either {\scriptsize $0\notin\mathrm{conv}$} or interior unstable); on {\scriptsize $\Delta_I$}: First/Second Ergodic Condition. Models {\scriptsize $j\notin I$} are \emph{uniformly dominated}. 
        & {\scriptsize Unique invariant $\mu_I$ on $\Delta_I$} 
        & \emph{Self-defeating} (on the face) $+$ \newline \emph{uniform \newline dominance} \\
        \hline
        \textbf{Nested/Mixed \newline Outcomes} 
        & General recursive reduction: irrespective of interior fixed point existence, repeated tests on faces either (i) yield ergodic faces or (ii) descend to vertices. 
        & {\scriptsize Mixture $\displaystyle \sum_{v\in V_\star} W_v\,\delta_{e_v}+\sum_{I\in\mathcal{F}_\star} W_I\,\mu_I$} 
        & \emph{Self-confirming} $+$ \newline \emph{self-defeating} $+$ \newline \emph{uniform \newline dominance} \\
        \hline
        \end{tabular}
        \caption{Multi-model classification of TS under misspecification.}
        \label{tab:multi_model_classification}
\end{table}

\subsection{Discussion and Extensions}

We discuss two extensions of the framework: a probabilistic characterization of which asymptotic regimes arise generically in high-dimensional settings, and the role of the link function in extending the analysis to other posterior-based decision rules.

\paragraph{Generic Outcomes in High-Dimensional Learning}

While the previous sections provide sufficient conditions for ergodicity and transience, deriving necessary and sufficient conditions under arbitrary misspecification is challenging.
A complementary approach is to study which asymptotic regimes arise with high probability when we add certain structure to the model misspecification.

Proposition~\ref{prop:kernel_simplex} shows that when the drift vectors are drawn from a symmetric continuous distribution \footnote{This would arise when candidate models are sampled from a parameter space according to the same distribution.}, the probability that the induced drift geometry admits an interior fixed point declines exponentially with the dimension.
As a consequence, persistent ergodicity on the full simplex is unlikely in high-dimensional environments.
Instead, posterior learning generically exhibits recursive dimensionality reduction and convergence to some low-dimensional face or vertex of the simplex.

\begin{proposition}[Random $(m-1) \times m$ drift, kernel hits the simplex]
\label{prop:kernel_simplex}
Let $m \geq 2$. Let $A \in \mathbb{R}^{(m-1) \times m}$ have i.i.d.\ entries from a continuous distribution symmetric about $0$ (e.g., $\mathcal{N}(0,1)$). Then $\operatorname{rank}(A) = m-1$ almost surely, so $\ker(A)$ is one-dimensional. Let
\[
\Delta^m = \left\{x \in \mathbb{R}^m : x_i \geq 0, \sum_{i=1}^m x_i = 1\right\}.
\]
Then
\[
\mathbb{P}\big(\ker(A) \cap \Delta^m \neq \varnothing\big) = 2^{1-m}.
\]
\end{proposition}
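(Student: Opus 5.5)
The plan is to describe $\ker(A)$ explicitly through cofactors and then use a sign-flip symmetry of the columns to show that the sign pattern of a kernel vector is uniformly distributed.

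\emph{Step 1 (generic nondegeneracy).} For $i=1,\dots,m$ let $A_{(i)}$ be the $(m-1)\times(m-1)$ matrix obtained by deleting column $i$ of $A$. Set
\[
v_i := (-1)^i \det A_{(i)} .
\]
Expanding the $m\times m$ matrix obtained by stacking a copy of any row of $A$ on top of $A$ shows that $Av=0$, since a determinant with a repeated row vanishes. Each $\det A_{(i)}$ is a nonzero polynomial in the i.i.d.\ entries. I would show that it is a.s.\ nonzero by induction on the number of variables using Fubini. Write the polynomial as a polynomial in one entry $x_1$ whose coefficients are polynomials in the remaining entries. By the induction hypothesis, for almost every value of the remaining entries some coefficient is nonzero. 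The polynomial in $x_1$ then has finitely many roots, and these carry zero mass because the distribution is continuous, i.e.\ atomless. Hence a.s.\ every $v_i\neq 0$. In particular $A$ has a nonsingular $(m-1)$-minor, so $\operatorname{rank}(A)=m-1$ and $\ker(A)=\mathbb{R}v$.

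\emph{Step 2 (reduction to signs).} Since $\ker(A)=\{tv: t\in\mathbb{R}\}$ and all $v_i\neq0$, a point $tv$ lies in $\Delta^m$ iff $t\neq 0$ and all $tv_i$ are nonnegative, after which we rescale to unit sum. This happens iff all coordinates of $v$ share a common sign. So the event in question is
\[
\{\operatorname{sign}(v)\in\{(+,\dots,+),(-,\dots,-)\}\}.
\]

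\emph{Step 3 (symmetry).} Let $\varepsilon$ be uniform on $\{\pm1\}^m$, independent of $A$, and set $D_\varepsilon=\operatorname{diag}(\varepsilon)$. Because the entries are i.i.d.\ and symmetric, $AD_\varepsilon \overset{d}{=} A$. Moreover, $\ker(AD_\varepsilon)$ is spanned by $D_\varepsilon v$, since $AD_\varepsilon(D_\varepsilon v)=Av=0$. The event in Step 2 depends only on the kernel line, not on the chosen spanning vector, so its probability is the same for $A$ and for $AD_\varepsilon$. The sign pattern of $D_\varepsilon v$ is $\varepsilon\odot\operatorname{sign}(v)$. Conditionally on $A$, this is uniform on $\{\pm1\}^m$ because $\operatorname{sign}(v)$ is a.s.\ a fixed $\pm1$ vector. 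Therefore the probability that this pattern is constant equals $2/2^m=2^{1-m}$.

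\emph{Main obstacle.} The only delicate point is Step 1 for merely continuous, not absolutely continuous, laws. The Fubini induction handles this, and it is the step I would write most carefully. The rest is bookkeeping: one must check that the event is expressed through the kernel line, which is invariant under the choice of spanning vector, so that distributional equality of $A$ and $AD_\varepsilon$ transfers directly.
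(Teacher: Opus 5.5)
Your proof is correct and follows the paper's route: it reduces the event to the kernel line having all coordinates of one sign, then uses the column sign-flip invariance $AD_s \overset{d}{=} A$ to show the sign pattern is uniform, giving $2/2^m$. Two of your additions supply steps the paper only asserts. The cofactor vector with the Fubini argument, valid even for merely atomless laws, proves that the kernel coordinates are a.s.\ nonzero and that $\operatorname{rank}(A)=m-1$ a.s. The independent uniform $\varepsilon$ makes rigorous the claim that the sign pattern of a kernel direction, which is defined only up to a global sign, is uniformly distributed.
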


\begin{remark}
    The argument only uses i.i.d.\ continuity, and symmetry about $0$; the exact distribution of the entries does not matter.
\end{remark}

\paragraph{Other Posterior-Based Decision Rules}

The geometric perspective extends to other posterior-based decision rules. The mean drift of the log-odds process takes the form $\xi(S) = \sum_a p_a(S) \cdot d(a)$, where the drift vectors $d(a)$ are fixed by the misspecification and the \emph{link function} $S \mapsto (p_a(S))_a$ is determined by the decision rule. Under Thompson Sampling, $p_a(\pi) = \sum_j \pi_j \mathbf{1}\{\phi(\theta^{(j)})=a\}$---a linear function of the posterior that, in the two-model case, reduces to the logistic $p_1(S) = \sigma(S) = e^S/(1+e^S)$.

To illustrate how different decision rules alter the link, consider the two-model case where the models disagree on the optimal action:
\begin{itemize}
    \item \emph{Myopic Bayes (greedy):} play $\argmax_a \sum_j \pi_j \, r_{\theta^{(j)}}(a)$. The action switches deterministically at a threshold $\pi^*$ where the posterior-expected rewards of the two actions are equal. In log-odds, $p_1(S) = \mathbf{1}\{S > S^*\}$ where $S^* = \log(\pi^*/(1-\pi^*))$ depends on the reward parameters.
    \item \emph{Top-$k$ Thompson Sampling:} sample $k$ models independently from $\pi$ and play the majority action. This gives $p_1(S) = \P(\mathrm{Bin}(k, \sigma(S)) > k/2)$, a steepened sigmoid that approaches the step function as $k \to \infty$.
    \item \emph{$\epsilon$-greedy:} with probability $\epsilon$ explore uniformly, otherwise play greedy. This gives $p_1(S) = (1-\epsilon)\mathbf{1}\{S > S^*\} + \epsilon/2$, bounding the action probabilities away from $0$ and $1$.
\end{itemize}
Each variant preserves the drift--noise decomposition of the log-odds process but changes the weight placed on each drift vector at a given belief state, altering the exploration intensity and potentially the qualitative asymptotic behavior.

\section{Conclusion}

This paper studies the long-run consequences of misspecified dynamic learning under posterior sampling. Our central finding is that endogenous experimentation changes the asymptotic objects of learning. Under misspecification, posterior beliefs may concentrate on a correct model, concentrate on an incorrect one, select randomly among multiple self-confirming outcomes, or fail to concentrate altogether, converging instead to a nondegenerate stationary distribution. These regimes imply sharply different long-run action frequencies and welfare. In particular, misspecification can generate persistent policy distortion through ongoing stochastic mixing that survives even with infinite data.

These results have practical implications for settings where decision makers rely on parametric models they know to be approximate. A firm using posterior-sampling-based pricing can diagnose which regime applies by checking whether its candidate demand models disagree on optimal pricing and computing the evidence direction at each price. If the configuration is self-defeating, the firm should expect persistent price variation that does not diminish with more data---a form of policy instability that is a consequence of the learning algorithm, not of environmental change. More broadly, our classification provides a diagnostic tool: for any misspecified bandit environment, the sign pattern of the expected log-likelihood ratio drifts determines whether learning will converge, and if so, to what.

Our analysis has several limitations that point to important open questions. First, we restrict attention to finite model classes; extending the stochastic-stability framework to continuous parameter spaces would significantly broaden applicability. Second, the multi-model results provide sufficient conditions for ergodicity and transience but not sharp necessary-and-sufficient conditions; closing this gap would strengthen the classification. Third, our detailed illustrations use Gaussian reward distributions; while the framework applies to general parametric families, verifying the conditions in specific non-Gaussian applications requires case-by-case analysis.

The broader lesson is that under misspecification, the standard intuition that Bayesian learning eventually ``gets it right'' or at least settles down can fail in a fundamental way. When posterior-based decision rules generate the data from which they learn, the feedback between beliefs, actions, and evidence can sustain persistent uncertainty and policy variation indefinitely. Understanding when and why this occurs is essential for the design of robust learning algorithms in economic environments.

\subsection*{Data Availability Statement}
This paper is primarily theoretical. The simulation code used to generate the numerical illustrations in Appendix~\ref{app:simulations} is available from the authors upon request.

\newpage
\setlength{\bibsep}{0.5ex}
\bibliographystyle{plainnat}
\bibliography{TSmis_ref_chicago}

\newpage

\appendix
\renewcommand{\thesection}{\Alph{section}}

\addcontentsline{toc}{section}{Appendices}
\section*{Appendices}

\section{Log-Odds Dynamics and Markov Properties}
\label{app:A}

In this appendix, we provide a complete analysis of the log-odds representation, establish the Markov property of the belief process under Thompson Sampling, and characterize the dynamical system structure that underlies our main results.

\subsection{Log-Odds Dynamics}
\label{app:A:logodds-dynamics}
We first define the log-odds vector $S_t$, which provides a global chart for the belief simplex via the softmax inverse map. Since this mapping is smooth and invertible, the dynamics of $\pi_t$ can be easily recovered from $S_t$. However, the dynamics of $S_t$ exhibit a simpler linear additive recurrence structure, which makes our analysis more tractable. It is convenient to view the posterior dynamics as a process on the simplex manifold, with the log-odds vector serving as global chart coordinates.

\begin{definition}[Log-Odds Vector]
Define the log-odds vector:
\[
S_t^{(k)} := \log\frac{\pi_t(\theta^{(k)})}{\pi_t(\theta^{(M)})},
\quad k=1,\ldots,M-1.
\]
\end{definition}

Notice that the vector $S_t\in \mathbb{R}^{M-1}$ uniquely determines $\pi_t$:
\begin{equation}
\pi_t(\theta^{(k)}) = \frac{\exp(S_t^{(k)})}{1 + \sum_{j=1}^{M-1}\exp(S_t^{(j)})}, \quad k=1,\ldots,M-1,
\quad
\pi_t(\theta^{(M)}) = \frac{1}{1 + \sum_{j=1}^{M-1}\exp(S_t^{(j)})},
\label{eq:pi_S}
\end{equation}
We will simply denote $\pi_t = \psi(S_t)$, so the $\psi : \mathbb{R}^{M-1} \mapsto \Delta_M$ is the softmax function.

Recall Bayes' rule:
\[
S_{t+1}^{(k)} = \log\frac{\pi_{t+1}(\theta^{(k)})}{\pi_{t+1}(\theta^{(M)})}
= \log\frac{\pi_t(\theta^{(k)}) f_{\theta^{(k)}}(r_t\mid A_t)}{\pi_t(\theta^{(M)}) f_{\theta^{(M)}}(r_t\mid A_t)}
= S_t^{(k)} + \underbrace{\log\frac{f_{\theta^{(k)}}(r_t\mid A_t)}{f_{\theta^{(M)}}(r_t\mid A_t)}}_{=:Z_t^{(k)}}.
\]

Thus, we obtain the stochastic recurrence $\mathbf{S}_{t+1} = \mathbf{S}_t + \mathbf{Z}_t,$
where $\mathbf{Z}_t\in \mathbb{R}^{M-1}$ is the random increment vector. Given $\mathbf{S}_t$, the law of $\mathbf{Z}_t$ is
\begin{equation}
P\bigl(\mathbf{Z}_t\in B | \mathbf{S}_t\bigr)
= \sum_{m=1}^M \pi_t(\theta^{(m)})
\sum_{a\in \mathcal{A}} \mathbb{I}\{a=\phi(\theta^{(m)})\}
\int
\mathbb{I}\Bigl\{
\log\frac{f_{\theta^{(1)}}}{f_{\theta^{(M)}}},\ldots
\in B
\Bigr\}
f_{\theta^*}(r|a)\,dr.
\label{eq:law-of-increment}
\end{equation}

where $B$ is a Borel set in $\mathbb{R}^{M-1}$. Now let us further decompose the increment $\mathbf{Z}_t$ into a drift, $\xi(\mathbf{S}_t) = \mathbb{E}[\mathbf{Z}_t | \mathbf{S}_t]$, plus a mean zero martingale increment, $\varepsilon_t$. Then, the log-odds process $(\mathbf{S}_t)_{t\geq 0}$ can be written as:
\begin{equation}
\mathbf{S}_{t+1} = \mathbf{S}_t + \xi(\mathbf{S}_t) + \varepsilon_t,
\label{eq:log-odds-process}
\end{equation}

The log-odds drift has a very intuitive interpretation. If we denote the vertex drift vectors $\{ d(a_j) \}_{j=1}^M$ as the expected log-likelihood ratio vector when always playing action $\phi(\theta^{(j)})$ under the true environment $f_{\theta^*}$, then the log-odds drift is a weighted sum of the vertex drift vectors, with the weights being the posterior probabilities. \footnote{Another way to see this is that the \eqref{eq:drift_vector} is the KL divergence difference between the models (See more details in \citet{frickBeliefConvergenceMisspecified2023})}

\begin{equation}
\xi(S) = \sum_{j=1}^M \pi_j \cdot d(a_j),
\quad \text{with } \pi = \psi(S).
\label{eq:drift_vector_pi}
\end{equation}

where

\begin{equation}
    d(a_j) := \mathbb{E}_{r \sim f_{\theta^*}(\cdot \mid \phi(\theta^{(j)}))} \left[ \log \left( \frac{f_{\theta^{(1)}}}{f_{\theta^{(M)}}}, \ldots, \frac{f_{\theta^{(M-1)}}}{f_{\theta^{(M)}}} \right) \right] \in \mathbb{R}^{M-1}.
    \label{eq:drift_vector_app}
\end{equation}

\subsection{Mean-Field ODE Approximation}
\label{app:A:dynamics}

The belief evolution can be represented as a discrete-time stochastic process in $\mathbb{R}^{M-1}$ with additive drift-noise decomposition. This section provides the formal characterization of the dynamics as a general-state-space Markov chain.

The mean-field limit of the log-odds process is described by a system of ordinary differential equations (ODEs), which provides valuable intuition for the qualitative behavior of the discrete-time dynamics. Although our primary analysis focuses on the discrete-time setting, the continuous-time approximation reveals the geometric structure underlying belief evolution and facilitates the identification of stable and unstable equilibria.\footnote{Related continuous-time approximations have been developed in the literature on Thompson Sampling. See, for example, \citet{fanDiffusionApproximationsThompson2025} for a diffusion-based analysis in the correctly specified setting.}

The mean field ODE for the log-odds process is given by:
\begin{equation}
    \dot{S}(t) = \xi(S(t)) = D\psi(S(t)).
    \label{eq:ode-log-odds}
\end{equation}  
where $D := [d(a_1), \ldots, d(a_{M})] \in \mathbb{R}^{(M-1) \times M}$ is the full drift matrix collecting all $M$ vertex drift vectors as columns. In the main text, we also use the reduced matrix $G := [d(a_1)-d(a_M), \ldots, d(a_{M-1})-d(a_M)] \in \mathbb{R}^{(M-1) \times (M-1)}$, which arises from the factorization $\xi(S) = G(\pi_{-M}(S) - \pi_{-M}^\star)$ when an interior fixed point exists.

To derive the induced dynamics of the posterior \( \pi_t = \psi(S_t) \in \Delta_M \), we apply the chain rule:
\begin{equation}
\frac{d\pi_t}{dt} = J_{\psi}(S_t) \cdot \frac{dS_t}{dt} 
= J_{\psi}(S_t) \cdot \xi(S_t).
\label{eq:ode_pi}
\end{equation}
This yields a deterministic dynamical system over the simplex \( \Delta_M \).\footnote{Such ODE system for the posterior belief can also be treated as a type of replicator game dynamical system. (See \citet{shaliziDynamicsBayesianUpdating2009} for more details)  However, our framework focusing on the log-odds space with a reference model gives a simpler and more transparent analysis of the asymptotic behavior of the posterior belief. }
The Jacobian \( J_{\psi}(\mathbf{S}_t) \in \mathbb{R}^{M \times (M-1)} \) has entries:
\begin{equation}
\frac{\partial \pi^{(i)}}{\partial S^{(k)}} = 
\begin{cases}
\pi^{(i)} (1 - \pi^{(i)}), & i = k, \\
- \pi^{(i)} \pi^{(k)}, & i \neq k, \\
- \pi^{(M)} \pi^{(k)}, & i = M,
\end{cases}
\label{eq:jacobian_psi}
\end{equation}
where \( \pi^{(M)} = 1 - \sum_{j=1}^{M-1} \pi^{(j)} \).

\section{Gaussian Bandit Examples}
\label{app:B}

\paragraph{Binary Gaussian Bandit Example}
\label{app:B:gaussian}

\begin{example}[Posterior Dynamics in Binary Gaussian Bandits]
\label{example:binary-gaussian}
Consider the two-arm Gaussian bandit with models defined in $\Theta = \{\nu, \gamma\}$ where $\nu = (\nu_1, \nu_2)$ and $\gamma = (\gamma_1, \gamma_2)$. Let $S_t = \log[\pi_t/(1-\pi_t)]$ denote the log-odds of model $\nu$ at time $t$, and define the optimal action mapping $\phi(\theta) = \argmax_{i \in \{1,2\}} \theta_i$. Then:

Given the filtration $\mathcal{F}_{t-1}$, the log-likelihood ratio follows a mixture distribution:
\begin{equation}
        Z_t \mid \mathcal{F}_{t-1} \sim
        \begin{cases}
            \mathcal{N}(\Delta_{\phi(\nu)}, (\nu_{\phi(\nu)}-\gamma_{\phi(\nu)})^2) & \text{with probability } \pi_{t-1}, \\
            \mathcal{N}(\Delta_{\phi(\gamma)}, (\nu_{\phi(\gamma)}-\gamma_{\phi(\gamma)})^2) & \text{with probability } 1-\pi_{t-1}.
        \end{cases}
        \label{eq:log_likelihood_ratio}
\end{equation}
where $\Delta_i = \frac{(\nu_i - \gamma_i)[2g(i) - \nu_i - \gamma_i]}{2}$ for each action $i \in \{1,2\}$.
\end{example}

\begin{proof}
    The log-odds update follows directly from Bayesian updating and the definition $S_t = \log[\pi_t/(1-\pi_t)]$. For the expected log-likelihood ratios, when $A_t = i$, we have:
    \begin{align*}
    \Delta_i &= \mathbb{E}[\log f_{\nu_i}(R_t)/f_{\gamma_i}(R_t) \mid A_t = i] \\
    &= \mathbb{E}\left[\frac{1}{2}[(R_t-\gamma_i)^2 - (R_t-\nu_i)^2] \mid A_t = i\right] \\
    &= \frac{1}{2}\mathbb{E}[(\nu_i - \gamma_i)(2R_t - \nu_i - \gamma_i) \mid A_t = i] \\
    &= \frac{(\nu_i - \gamma_i)[2g(i) - \nu_i - \gamma_i]}{2}
    \end{align*}
    where we used the fact that $\mathbb{E}[R_t \mid A_t = i] = g(i)$. The conditional distribution follows from the Gaussian assumption and Thompson Sampling's action selection rule, and the expected drift is immediate from the law of total expectation.
\end{proof}

\paragraph{Multi-Arm Gaussian Bandit Example}
\label{app:B:multiarm}
\begin{example}[Gaussian Bandit Case]
    Consider the special case where the reward model under each candidate parameter is Gaussian with known variance: $r_t \mid A_t = a,\, \theta = \theta^{(k)} \sim \mathcal{N}(\mu^{(k)}_a, \sigma^2),$ for all $k \in \{1, \ldots, M\}$. Then, conditional on $A_t=a$, the distribution of $\mathbf{Z}_t$ is Gaussian:
    \[
    \mathbf Z_t \mid A_t=a \;\sim\; \mathcal{N}\!\Big(\alpha_a+\frac{\mu_a^\ast}{\sigma^2}v_a,\; \frac{1}{\sigma^2}v_a v_a^\top\Big).
    \]
    where $v_a=(\mu_a^{(1)}-\mu_a^{(M)},\ldots,\mu_a^{(M-1)}-\mu_a^{(M)})^\top$ and $(\alpha_a)_k=\frac{1}{2\sigma^2} (\mu_a^{(M)})^2-(\mu_a^{(k)})^2$.

    Hence, the joint increment vector $\mathbf{Z}_t = (Z_t^{(1)}, \ldots, Z_t^{(M-1)})$ is a mixture of multivariate Gaussians, where each component corresponds to a choice of model $m \in \{1, \ldots, M\}$, action $a \in \mathcal{A}$, and the resulting conditional distribution under $r_t \sim \mathcal{N}(\mu^*_a, \sigma^2)$. The mixture weights are given by:
    \[
    w_{m,a}(\mathbf{S}_t) = \pi_t(\theta^{(m)}) \cdot \mathbb{I}\{a = \phi(\theta^{(m)})\}.
    \]
    \label{example:gaussian-bandit-case}
\end{example}

\begin{proof}
    Let $\mu_a^{(k)}$ denote the mean under model $k$, $\mu_a^{(M)}$ the mean of the reference model, and $\mu_a^\ast$ the true mean (under $\theta^\ast$). For equal variances $\sigma^2$, the log–likelihood ratio simplifies to an affine form in $r_t$:
    \[
    Z_t^{(k)} \;=\; \log\frac{f_{\theta^{(k)}}(r_t\mid a)}{f_{\theta^{(M)}}(r_t\mid a)}
    \;=\; \frac{\mu_a^{(k)}-\mu_a^{(M)}}{\sigma^2}\, r_t
    \;+\; \frac{(\mu_a^{(M)})^2-(\mu_a^{(k)})^2}{2\sigma^2}\,.
    \]
    Define the $(M-1)$-vector and offset
    \[
    v_a \in \mathbb{R}^{M-1},\qquad (v_a)_k := \mu_a^{(k)}-\mu_a^{(M)},\qquad
    \alpha_a \in \mathbb{R}^{M-1},\qquad (\alpha_a)_k := \frac{(\mu_a^{(M)})^2-(\mu_a^{(k)})^2}{2\sigma^2}.
    \]
    Then, conditional on $A_t=a$, we have
    \[
    \mathbf Z_t \;=\; \alpha_a \;+\; \frac{1}{\sigma^2}\, v_a\, r_t,
    \qquad r_t \mid A_t=a \sim \mathcal N(\mu_a^\ast,\sigma^2).
    \]
    Hence $\mathbf Z_t \mid A_t=a$ is multivariate Gaussian:
    \[
    \mathbf Z_t \mid A_t=a \;\sim\; \mathcal N\!\Big(\;
    m_a,\;\Sigma_a\;\Big),
    \]
    with mean and covariance
    \[
    m_a \;=\; \alpha_a \;+\; \frac{\mu_a^\ast}{\sigma^2}\, v_a,
    \qquad
    \Sigma_a \;=\; \frac{1}{\sigma^2}\, v_a v_a^\top.
    \]
    Entrywise, for $k,\ell\in\{1,\ldots,M-1\}$:
    \[
    \mathbb E[Z_t^{(k)}\mid A_t=a] \;=\; \frac{1}{2\sigma^2}\Big[(\mu_a^{(M)})^2-(\mu_a^{(k)})^2 - 2\mu_a^\ast(\mu_a^{(M)}-\mu_a^{(k)})\Big],
    \]
    \[
    \mathrm{Var}(Z_t^{(k)}\mid A_t=a) \;=\; \frac{(\mu_a^{(k)}-\mu_a^{(M)})^2}{\sigma^2},\qquad
    \mathrm{Cov}(Z_t^{(k)},Z_t^{(\ell)}\mid A_t=a) \;=\; \frac{(\mu_a^{(k)}-\mu_a^{(M)})(\mu_a^{(\ell)}-\mu_a^{(M)})}{\sigma^2}.
    \]

\end{proof}

\section{Regret Analysis}
\label{app:F}

This appendix provides formal regret bounds for both two-arm and multi-model settings, complementing the asymptotic behavior analysis in the main text.

\paragraph{Two-Arm Regret Analysis}
\label{app:two_arm_regret}

This appendix collects the formal results underlying the regret analysis in Section~\ref{sec:two_arm_regret}. 
We first characterize the limiting action distributions induced by posterior convergence, and then derive the corresponding limiting average regret in each case.

\begin{proposition}[Limiting Action Distributions]
\label{prop:limiting_action_distributions}
The limiting action distribution under Thompson Sampling depends on the posterior convergence behavior as follows:

\begin{enumerate}
    \item[(i)] \textbf{Agreement Case} ($\phi(\nu) = \phi(\gamma) = 1$): 
    \[\lim_{t \to \infty} \mathbb{P}(A_t = 1) = 1, \quad \lim_{t \to \infty} \mathbb{P}(A_t = 2) = 0\]
    regardless of the sign of $\Delta_1$.

    \item[(ii)] \textbf{Self-Confirming Case} ($\Delta_1 > 0 > \Delta_2$):
    \[\lim_{t \to \infty} \mathbb{P}(A_t = 1) = p^*, \quad \lim_{t \to \infty} \mathbb{P}(A_t = 2) = 1 - p^*\]
    where $p^* = \frac{-\Delta_2}{\Delta_1 - \Delta_2}$.

    \item[(iii)] \textbf{Uniform Dominance Case} ($\Delta_1, \Delta_2 > 0$):
    \[\lim_{t \to \infty} \mathbb{P}(A_t = 1) = 1, \quad \lim_{t \to \infty} \mathbb{P}(A_t = 2) = 0\]

    \item[(iv)] \textbf{Self-Defeating Case} ($\Delta_1 < 0 < \Delta_2$):
    \[\lim_{t \to \infty} \mathbb{P}(A_t = 1) = \alpha^*, \quad \lim_{t \to \infty} \mathbb{P}(A_t = 2) = 1 - \alpha^*\]
    where $\alpha^* = \int_0^1 \pi \, d\pi_{\infty}(\pi) \in (0,1)$ is the expected posterior probability under the stationary distribution.
\end{enumerate}
\end{proposition}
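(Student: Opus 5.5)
The plan rests on one observation that holds in every disagreement case and is trivial in the agreement case. Under Thompson Sampling with $\phi(\nu)=1$ and $\phi(\gamma)=2$, the action is $A_t=1$ exactly when the sampled model is $\nu$. Hence $\mathbb P(A_t=1\mid\mathcal F_{t-1})=\pi_t$, and by the tower property
\[
\mathbb P(A_t=1)=\mathbb E[\pi_t].
\]
Since $\pi_t\in[0,1]$ is bounded, any convergence in distribution $\pi_t\Rightarrow\pi_\infty$ gives $\lim_t\mathbb P(A_t=1)=\mathbb E[\pi_\infty]$ by bounded convergence. The statement therefore reduces to identifying the mean of the limiting law in each regime.

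\textbf{Cases (i), (iii), (iv).} In case (i) both models select arm $1$, so $A_t=1$ surely. The claim holds for every $t$, whatever the sign of $\Delta_1$ and whatever Theorem~\ref{thm:agreement_case} says about $\pi_t$. In case (iii), Theorem~\ref{thm:uniform_dominance} gives $\pi_t\to1$ a.s., so $\mathbb E[\pi_t]\to1$. In case (iv), Theorem~\ref{thm:self_defeating} gives $\mathcal L(\pi_t)\Rightarrow\mu$, so the limit is $\alpha^*=\int_0^1\pi\,d\mu(\pi)$. To show $\alpha^*\in(0,1)$ I will use the non-concentration corollary: $\mu([\varepsilon,1-\varepsilon])>0$ and $\mu$ is supported on $(0,1)$, so $\mu\neq\delta_0$ and $\mu\neq\delta_1$, which forces the mean strictly between $0$ and $1$.

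\textbf{Case (ii), first half.} By Theorem~\ref{thm:self_confirming}, $\pi_t\to\mathbf 1\{\pi_t\to1\}$ a.s. Dominated convergence then gives $\mathbb E[\pi_t]\to\mathbb P(\pi_t\to1)$, which Theorem~\ref{thm:self_confirming} identifies as the selection probability $p^*$.

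\textbf{Case (ii), second half.} It remains to identify $p^*$ with $-\Delta_2/(\Delta_1-\Delta_2)$. This number is the unique zero of the mean drift $\pi\Delta_1+(1-\pi)\Delta_2$ from \eqref{eq:two-arm-logodds-increment}, i.e.\ the unstable threshold referred to in the remark after Theorem~\ref{thm:self_confirming}. I would first try to characterize $u$ as the solution of the harmonic equation
\[
u(s)=\mathbb E[u(s+Z)\mid S=s]
\]
with $u(-\infty)=0$ and $u(+\infty)=1$. I would then look for a test function $h$ such that $h(S_t)$ is a martingale and $h(\pm\infty)$ is computable, and read off $p^*$ by optional stopping. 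A natural first candidate is an exponential $e^{\lambda S}$ on each half-line, chosen so that $\mathbb E[e^{\lambda Z}\mid A=i]=1$, as in gambler's ruin.

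\textbf{Main obstacle.} This identification is where I expect the plan to fail. I do not see a reason the hitting probability $u(S_0)$ should be independent of $S_0$ or equal the drift-balance point. A rough check supports this: $\mathbb E[S_t]$ grows at the rate $p^*\Delta_1+(1-p^*)\Delta_2$, which has no reason to vanish, so the balance point need not be the selection probability. If the martingale construction gives $u$ in closed form only asymptotically, the fallback is to state the result as $\lim_t\mathbb P(A_t=1)=p^*=u(S_0)$. The value $-\Delta_2/(\Delta_1-\Delta_2)$ would then be identified only as the balance point of the mean-field drift, and I would flag that it equals the selection probability only under additional symmetry or small-step assumptions.
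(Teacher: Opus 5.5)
For (i), (iii), (iv) and the first half of (ii), your argument is the paper's, made explicit. The paper's proof is a one-line appeal to Theorems~\ref{thm:agreement_case}--\ref{thm:self_defeating}, and your reduction $\mathbb{P}(A_t=1)=\mathbb{E}[\pi_t]$ plus bounded convergence is exactly the step it leaves implicit. For $\alpha^*\in(0,1)$ it already suffices that $\mu$ is a probability measure on the open interval $(0,1)$.

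On the closed form in (ii) your suspicion is right, and the paper does not close the gap either. Its proof only says the posterior converges to a Bernoulli law ``giving rise to randomized action selection with probability $p^*$'', which conflates two different objects.
\begin{itemize}
\item The ratio $-\Delta_2/(\Delta_1-\Delta_2)$ is the zero of the mean drift, i.e.\ the repelling threshold $s^\star=\log\frac{p^\star}{1-p^\star}$ used in the proof of Theorem~\ref{thm:self_confirming}.
\item The selection probability that theorem actually delivers is the hitting probability $u(S_0)$, which the paper itself describes as prior-dependent.
\end{itemize}

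The two differ in general. Rescale the paper's own self-confirming design to $\nu=(\epsilon,-\epsilon)$, $\gamma=(-\epsilon,\epsilon)$, with true means $(0.6,0.2)$.
\begin{itemize}
\item Then $\Delta_1=1.2\epsilon$ and $\Delta_2=-0.4\epsilon$, the increments have standard deviation $2\epsilon$, and the ratio stays $1/4$. Also $s^\star=-\log 3<0=S_0$.
\item Fix $\delta\in(0,\log 3)$ and a small $c>0$ (independent of $\epsilon$). Then $e^{-cS_t/\epsilon}$ is a supermartingale while $S_t>s^\star+\delta$.
\item Hence the probability of ever falling below $s^\star+\delta$ is at most $e^{-c(\log 3-\delta)/\epsilon}$, and $u(0)\to1$ as $\epsilon\to0$.
\end{itemize}
So small steps push $u$ toward $\mathbf{1}\{S_0>s^\star\}$, not toward the balance point. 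Your guess that ``small-step assumptions'' would rescue the identity therefore points the wrong way. Only a mirror symmetry ($\Delta_1=-\Delta_2$, reflected increment laws, $S_0=0$) makes the two agree, both equal to $1/2$.

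Your exponential-martingale route cannot give an exact formula either. Write $m_i(\lambda):=\mathbb{E}[e^{\lambda Z}\mid A=i]$.
\begin{itemize}
\item A global $e^{\lambda S}$ is a martingale only if $m_1(\lambda)=m_2(\lambda)=1$. For Gaussian increments the nonzero roots are $\lambda_i=-2\Delta_i/(\nu_i-\gamma_i)^2$, and these have opposite signs.
\item On each half-line the other arm is still played, with probability $1-\sigma(S)$ or $\sigma(S)$, so the half-line exponentials are only asymptotically martingales.
\item Heuristically, in the weak-drift (diffusion) limit, $\sigma(S_t)$ acts like $\mathbf{1}\{S_t>0\}$. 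Gluing the two half-line exponentials then gives $u(0)\to|\lambda_1|/(|\lambda_1|+|\lambda_2|)$. With equal increment variances this is $\Delta_1/(\Delta_1-\Delta_2)$, the complement of the stated value.
\end{itemize}

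The defensible conclusion is therefore your fallback: $\lim_t\mathbb{P}(A_t=1)=u(S_0)=1-\lim_t\mathbb{P}(A_t=2)$. The closed form in (ii) is an error in the statement, not a step you are missing.
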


\begin{proposition}[Limiting Average Regret]
\label{prop:limiting_average_regret}
Without loss of generality, assume $g(1) \geq g(2)$, so the oracle policy always selects action 1. The limiting average regret of Thompson Sampling is:

\begin{enumerate}
    \item[(i)] \textbf{Agreement Case} ($\phi(\nu) = \phi(\gamma) = 1$): 
    \[\lim_{T \to \infty} \frac{1}{T} \text{Regret}_T = 0\]
    Thompson Sampling learns to select the truly optimal action.

    \item[(ii)] \textbf{Agreement Case} ($\phi(\nu) = \phi(\gamma) = 2$): 
    \[\lim_{T \to \infty} \frac{1}{T} \text{Regret}_T = g(1) - g(2) > 0\]
    Thompson Sampling consistently selects the suboptimal action.

    \item[(iii)] \textbf{Self-Confirming Case} ($\Delta_1 > 0 > \Delta_2$):
    \[\lim_{T \to \infty} \frac{1}{T} \text{Regret}_T = 
    \begin{cases}
    0 & \text{with probability } p^* \\
    g(1) - g(2) & \text{with probability } 1-p^*
    \end{cases}\]
    where $p^* = \frac{-\Delta_2}{\Delta_1 - \Delta_2}$. The algorithm converges to one of two deterministic policies.

    \item[(iv)] \textbf{Uniform Dominance Case} ($\Delta_1, \Delta_2 > 0$):
    \[\lim_{T \to \infty} \frac{1}{T} \text{Regret}_T = 0\]
    Thompson Sampling learns to select the truly optimal action.

    \item[(v)] \textbf{Self-Defeating Case} ($\Delta_1 < 0 < \Delta_2$):
    \[\lim_{T \to \infty} \frac{1}{T} \text{Regret}_T = (1-\alpha^*)[g(1) - g(2)] > 0\]
    where $\alpha^* = \int_0^1 \pi \, d\pi_{\infty}(\pi)$ is the probability of selecting action 1 in the long run.
\end{enumerate}
\end{proposition}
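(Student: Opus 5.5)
The plan is to reduce every case to a statement about the long-run frequency of action~$2$. Since $a^*=1$ under the normalization $g(1)\ge g(2)$, the per-period regret is $r_g(1)-r_g(A_t)=(g(1)-g(2))\mathbf 1\{A_t=2\}$. Hence
\[
\frac1T\mathrm{Regret}_T=(g(1)-g(2))\,\frac1T\sum_{t=0}^{T-1}\mathbf 1\{A_t=2\},
\]
and it suffices to identify the almost sure limit of the empirical frequency $N_T/T$, where $N_T:=\sum_{t<T}\mathbf 1\{A_t=2\}$. Because $\mathrm{Regret}_T$ is defined pathwise, I will prove all limits almost surely, on the events given by the posterior convergence theorems.

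The key tool is a martingale strong law linking realized actions to posterior weights. Under Thompson Sampling,
\[
\P(A_t=2\mid\mathcal F_{t-1})=p_2(\pi_t),
\]
which equals $1-\pi_t$ under disagreement and is identically $0$ or $1$ under agreement. The differences $\mathbf 1\{A_t=2\}-p_2(\pi_t)$ are bounded martingale differences. By the SLLN for martingales with bounded increments (for example, via Azuma--Hoeffding and Borel--Cantelli, or $\sum_t t^{-2}\mathrm{Var}<\infty$),
\[
\frac1T\sum_{t<T}\bigl(\mathbf 1\{A_t=2\}-p_2(\pi_t)\bigr)\to0\quad\text{a.s.}
\]
So $N_T/T$ has the same almost sure limit as the Cesàro average of $p_2(\pi_t)$.

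The cases then follow in order.
\begin{itemize}
\item Agreement, cases (i) and (ii): $A_t$ is deterministic (always $1$, resp.\ always $2$), so $N_T/T\equiv0$, resp.\ $\equiv1$. This gives average regret $0$, resp.\ $g(1)-g(2)$, which is strictly positive when the arm means differ.
\item Uniform dominance, case (iv): Theorem~\ref{thm:uniform_dominance} gives $\pi_t\to1$ a.s. Thus $1-\pi_t\to0$, its Cesàro average vanishes, and the regret limit is $0$.
\item Self-confirming, case (iii): Theorem~\ref{thm:self_confirming} splits the sample space into $\{\pi_t\to1\}$ and $\{\pi_t\to0\}$, with probabilities $p^*$ and $1-p^*$. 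On the first event the Cesàro average of $1-\pi_t$ tends to $0$; on the second it tends to $1$. This yields the two-point law. For the explicit value $p^*=-\Delta_2/(\Delta_1-\Delta_2)$ I will simply invoke the selection probability from Theorem~\ref{thm:self_confirming}. I flag, however, that reconciling this closed form with $u(S_0)$ is a point I would need to check: it should come from the hitting probability of the log-odds walk, not from the argument above.
\item Self-defeating, case (v): Theorem~\ref{thm:self_defeating} gives positive Harris recurrence of $\{\pi_t\}$ with invariant law $\mu$. The ergodic theorem for positive Harris chains (Meyn--Tweedie, valid for every initial state, not merely $\mu$-a.e.) applied to the bounded function $f(\pi)=\pi$ gives
\[
\frac1T\sum_{t<T}\pi_t\to\E_\mu[\pi]=\alpha^*\quad\text{a.s.}
\]
Hence $N_T/T\to1-\alpha^*$. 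Finally, $\alpha^*<1$ because $\mu$ has full support on $(0,1)$, so $\mu([\varepsilon,1-\varepsilon])>0$ by the non-concentration corollary. Therefore the regret $(1-\alpha^*)(g(1)-g(2))$ is strictly positive provided $g(1)>g(2)$; strictness should be noted as requiring a strict inequality.
\end{itemize}

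The main obstacle I expect is the self-defeating case. It requires the pathwise ergodic theorem from an arbitrary deterministic $\pi_0$, for which Harris recurrence rather than mere stationarity is essential, combined correctly with the martingale SLLN coupling of actions to beliefs. The self-confirming formula for $p^*$ is a secondary concern, which I would defer entirely to Theorem~\ref{thm:self_confirming}.
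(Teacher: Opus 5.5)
Your argument is correct apart from the point you flagged, and it runs differently from the paper's. The paper's proof goes case by case, reading the regret off the limiting action probabilities $\lim_t \P(A_t=a)$ from Proposition~\ref{prop:limiting_action_distributions}. In (i), (ii), (iv) and (v) it argues at the level of expected per-period reward. In (iii) it simply asserts that on each posterior event the algorithm ``converges to always selecting'' one action. It never links realized action frequencies to posterior weights, and in (v) it uses only $\lim_t\P(A_t=1)=\alpha^*$.

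You instead prove almost-sure convergence of the realized average regret. The bounded-martingale SLLN replaces $\frac1T\sum_t\mathbf 1\{A_t=2\}$ by the Ces\`aro mean of $1-\pi_t$. That mean is then controlled pathwise by Theorems~\ref{thm:uniform_dominance} and~\ref{thm:self_confirming}, and in the self-defeating case by the Harris ergodic theorem, which holds from every initial state. This is the right mode of convergence, because $\mathrm{Regret}_T$ is random. The two-point limit in (iii) is a pathwise statement that an expectation-level argument cannot deliver; it would only give $(1-p^*)(g(1)-g(2))$. In (v), convergence of $\Law(\pi_t)$ alone does not yield almost-sure convergence of time averages. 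The paper's route buys only brevity.

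Two small remarks:
\begin{itemize}
\item $\alpha^*<1$ needs only that $\mu$ is a probability measure on $(0,1)$, not full support.
\item As you note, the strict inequalities in (ii) and (v) require $g(1)>g(2)$.
\end{itemize}

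On the closed form $p^*=-\Delta_2/(\Delta_1-\Delta_2)$, your suspicion is justified, and deferring to Theorem~\ref{thm:self_confirming} does not close the step. That theorem identifies $\P(\pi_t\to1)$ with the hitting probability $u(S_0)$. By contrast, $-\Delta_2/(\Delta_1-\Delta_2)=\sigma(s^\star)$ is the belief at which the mean drift $\Delta_2+(\Delta_1-\Delta_2)\sigma(s)$ vanishes. That is the location of the unstable threshold, not the probability of escaping to $+\infty$.

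The two differ in general.
\begin{itemize}
\item $u$ depends on the initial log-odds and on the whole increment law. In the Gaussian model the variances $(\nu_i-\gamma_i)^2$ can be changed while $\Delta_1,\Delta_2$ are held fixed. The formula depends only on the means.
\item Its comparative statics are even backwards. Letting $\Delta_1\to\infty$ with the noise fixed drives the formula to $0$. Yet from $S_0=0$, a single early play of action $1$, which happens with probability $1/2$, pushes the log-odds to about $\Delta_1$. There TS plays action $1$ with probability near one, so $u(0)$ stays bounded away from $0$.
\end{itemize}

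The paper's proof supplies no derivation either. It inherits the formula from Proposition~\ref{prop:limiting_action_distributions}, which rests on conflating $p^*$ with $\sigma(s^\star)$ in the proof of Theorem~\ref{thm:self_confirming}. What your argument correctly establishes in (iii) is the two-point law with weight $p^*=u(S_0)=\P(\pi_t\to1)$. The closed form should be dropped, not cited.
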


\paragraph{Regret Analysis in the Multi-Model Setting}

The regret analysis for the multi-model setting is similar to the two-arm case. We will first give the limiting action distributions and then the limiting average expected regret. Let $a^\star \in \arg\max_{a\in\mathcal A} g(a)$ denote the oracle action under the true environment $g$. 

For any action $a \in \mathcal{A}$, the expected per-period regret is:
\[
r(a) := \mathbb{E}[R_t^{\star} - R_t \mid A_t = a] = g(a^\star) - g(a) \geq 0.
\]

Given a posterior vector $\pi\in\Delta_M$, Thompson Sampling selects action $A(\pi)$ by first sampling a model index $J\sim \pi$ and then playing $A(\pi)=\phi(\theta^{(J)})$. 
Equivalently, the induced action mixture at belief $\pi$ is
\[
p_a(\pi)\ :=\ \sum_{j=1}^M \pi_j\,\mathbf{1}\{\phi(\theta^{(j)})=a\}\,,\qquad a\in\mathcal A,
\]
and the one–step expected regret at $\pi$ is $\sum_{a} p_a(\pi)\,r(a)$.

\begin{proposition}[Limiting action distributions]\label{prop:MM_limiting_action_dist}
Under Theorem (defined above), let the terminal mixture be
\[
\Law(\pi_t)\ \Rightarrow\ \sum_{v\in V_\star} W_v\,\delta_{e_v}\ +\ \sum_{I\in\mathcal F_\star} W_I\,\mu_I,
\]
where $e_v$ is the $v$-th vertex, $\mu_I$ is the unique invariant law on the ergodic face $\Delta_I$, and $W_\cdot$ are basin-entry probabilities. 
Then the limiting action probabilities exist and equal
\[
\alpha(a)\ :=\ \lim_{t\to\infty}\Pr(A_t=a)\ 
=\ \underbrace{\sum_{v\in V_\star} W_v\,\mathbf{1}\{\phi(\theta^{(v)})=a\}}_{\text{terminal vertices}}
\ +\ \underbrace{\sum_{I\in\mathcal F_\star} W_I\,\int_{\Delta_I} p_a(\pi)\, \mu_I(d\pi)}_{\text{ergodic faces}}.
\]
\end{proposition}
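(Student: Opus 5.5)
The plan is to reduce the statement to the weak convergence of $\Law(\pi_t)$ given in the hypothesis, using the fact that under Thompson Sampling the conditional action probability is a bounded continuous function of the current belief. First, by the tower property and the definition of Algorithm~\ref{Algorithm:TS},
\[
\Pr(A_t=a)=\E\big[\Pr(A_t=a\mid \pi_t)\big]=\E\big[p_a(\pi_t)\big],
\qquad p_a(\pi)=\sum_{j=1}^M \pi_j\,\mathbf 1\{\phi(\theta^{(j)})=a\}.
\]
Since $p_a$ is linear in $\pi$, it is continuous and bounded (by $1$) on the compact simplex $\Delta_M$. The hypothesis $\Law(\pi_t)\Rightarrow \Pi_\infty:=\sum_{v} W_v\delta_{e_v}+\sum_I W_I\mu_I$ therefore gives, by the definition of weak convergence, $\lim_t \E[p_a(\pi_t)]=\int_{\Delta_M} p_a\,d\Pi_\infty$. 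Existence of the limit is part of this conclusion.

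Second, I evaluate the integral term by term using linearity of the integral over the mixture. At a vertex $e_v$ we have $p_a(e_v)=\mathbf 1\{\phi(\theta^{(v)})=a\}$, which gives the terminal-vertex sum. For each ergodic face, $\mu_I$ is a probability measure on $\Delta_I\subset\Delta_M$, so $\int_{\Delta_M}p_a\,d\mu_I=\int_{\Delta_I}p_a\,d\mu_I$, which gives the face sum. As a consistency check, summing over $a$ gives $\sum_a p_a\equiv1$ and hence $\sum_a\alpha(a)=\sum_v W_v+\sum_I W_I=1$, so $\alpha$ is indeed a probability vector.

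The main point needing care is the interpretation of the mixture. The limit should be a genuine probability measure on the closed simplex $\Delta_M$, with the face laws $\mu_I$ viewed as measures on $\Delta_M$, even though they were originally constructed as invariant laws of the face dynamics, in log-odds coordinates for $I$, living on $\mathrm{int}\,\Delta_I$. I would note that the pushforward under the inclusion $\Delta_I\hookrightarrow\Delta_M$ is legitimate. Proposition~\ref{prop:dynamics_succession} is what justifies identifying the limit on a face with $\mu_I$, but that identification is part of the hypothesis, not something to prove here. If the hypothesis is only available conditionally on basin-entry events $E_v$, $E_I$ (with $\Pr(E_\cdot)=W_\cdot$), I would instead apply the same bounded-continuous argument on each event and sum the results. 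Dominated convergence, using $|p_a|\le1$, handles the interchange of limit and sum. No compactness issue arises because $\Delta_M$ is compact, so tightness is automatic.
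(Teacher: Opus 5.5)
Your proposal is correct and follows essentially the same route as the paper's proof: write $\Pr(A_t=a)=\E[p_a(\pi_t)]$, note that $p_a$ is linear and hence bounded and continuous on the compact simplex $\Delta_M$, pass to the limit by weak convergence (the paper cites the Portmanteau theorem), and evaluate the mixture term by term using $p_a(e_v)=\mathbf{1}\{\phi(\theta^{(v)})=a\}$ and the fact that $\mu_I$ is supported on $\Delta_I$. Your additional remarks are not in the paper but are harmless refinements: the normalization check $\sum_a \alpha(a)=1$, the pushforward of $\mu_I$ into $\Delta_M$, and the variant conditional on basin-entry events; you also correctly treat the identification of the face limits with $\mu_I$ as part of the hypothesis rather than something to prove.
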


\begin{proposition}[Limiting average expected regret by category]\label{prop:MM_limiting_regret}
Let $\overline R:=\lim_{T\to\infty} \frac{1}{T}\sum_{t=1}^T \mathbb{E}[r(A_t)]$ denote the limiting average expected regret whenever the limit exists. Then:
\begin{enumerate}
\item \textbf{Interior Ergodic} (global ergodicity on $\mathrm{int}(\Delta_M)$): 
\[
\alpha_{\mathrm{int}}(a)=\int_{\Delta_M} p_a(\pi)\,\mu_{\mathrm{int}}(d\pi),\qquad 
\overline R\ =\ \sum_{a\in\mathcal A}\alpha_{\mathrm{int}}(a)\,r(a).
\]
\item \textbf{Vertex Selection} (multi self-confirming): 
\[
\alpha(a)\ =\ \sum_{v\in V_\star} W_v\,\mathbf{1}\{\phi(\theta^{(v)})=a\},\qquad 
\overline R\ =\ \sum_{v\in V_\star} W_v\, r\!\big(\phi(\theta^{(v)})\big).
\]
\item \textbf{Uniform Dominance} (global vertex attractor $e_{m^\star}$): 
\[
\alpha(a)\ =\ \mathbf{1}\{\phi(\theta^{(m^\star)})=a\},\qquad 
\overline R\ =\ r\!\big(\phi(\theta^{(m^\star)})\big).
\]
\item \textbf{Face-Ergodic} (restricted self-defeating on $\Delta_I$ with off-face elimination): 
\[
\alpha_I(a)\ =\ \int_{\Delta_I} p_a(\pi)\,\mu_I(d\pi),\qquad 
\overline R\ =\ \sum_{a\in\mathcal A}\alpha_I(a)\,r(a).
\]
\item \textbf{Nested/Mixed Terminal Outcomes} (recursive mixture over faces and vertices):
\[
\alpha(a)\ =\ \sum_{v\in V_\star} W_v\,\mathbf{1}\{\phi(\theta^{(v)})=a\}
\ +\ \sum_{I\in\mathcal F_\star} W_I\,\int_{\Delta_I} p_a(\pi)\,\mu_I(d\pi),
\]
\[ 
\overline R\ =\ \sum_{v\in V_\star} W_v\, r\!\big(\phi(\theta^{(v)})\big)
\ +\ \sum_{I\in\mathcal F_\star} W_I\,\sum_{a\in\mathcal A}\Big(\int_{\Delta_I} p_a(\pi)\,\mu_I(d\pi)\Big)\,r(a).
\]
\end{enumerate}
Moreover, in all cases the limit exists and equals the expectation of the expected per-period regret $r(A)$ with $A\sim \alpha(\cdot)$, i.e., $\overline R = \sum_{a\in\mathcal A} \alpha(a) \cdot r(a)$.
\end{proposition}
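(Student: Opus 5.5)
The plan has two layers: first pass from the law of the posterior to the law of actions, then pass from per-period action laws to the Cesàro average of expected regret. By Thompson Sampling, conditional on $\mathcal F_{t-1}$ the action $A_t$ has law $p_\cdot(\pi_t)$. Hence $\Pr(A_t=a)=\E[p_a(\pi_t)]$ and $\E[r(A_t)]=\E\big[\sum_a p_a(\pi_t)\,r(a)\big]$. For each $a$, the map $\pi\mapsto p_a(\pi)=\sum_j \pi_j \mathbf 1\{\phi(\theta^{(j)})=a\}$ is linear, hence bounded and continuous on the compact simplex $\Delta_M$. Whenever $\Law(\pi_t)$ converges weakly to a limit $\Pi$, we therefore get $\E[p_a(\pi_t)]\to\int p_a\,d\Pi$ directly from the definition of weak convergence. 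This is exactly the content of Proposition~\ref{prop:MM_limiting_action_dist}, which I would invoke with $\Pi=\sum_{v} W_v\delta_{e_v}+\sum_I W_I\mu_I$.

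Next, the one-step expected regret is the bounded continuous function $\pi\mapsto \sum_a p_a(\pi)r(a)$. So $\E[r(A_t)]\to \sum_a \alpha(a) r(a)$, and the Cesàro mean of a convergent sequence converges to the same limit. This gives both the existence of $\overline R$ and the identity $\overline R=\sum_a\alpha(a)r(a)$ in every case. The categories then follow by specializing the terminal mixture, evaluating $p_a$ where needed:
\begin{enumerate}
\item Interior ergodic: $\Pi=\mu_{\mathrm{int}}$, with weak convergence from Theorems~\ref{thm:ergodicity-angle-condition} or \ref{thm:sufficient-condition-ergodicity-transience}. Positive Harris recurrence together with aperiodicity (Lemma~\ref{lemma:irreducibility-and-aperiodicity-of-the-log-odds-process}) gives total-variation convergence for $S_t$, which pushes forward to $\pi_t=\psi(S_t)$.
\item Vertex selection: $\Pi=\sum_v W_v\delta_{e_v}$. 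Since $p_a(e_v)=\mathbf 1\{\phi(\theta^{(v)})=a\}$, we get $\overline R=\sum_v W_v\, r(\phi(\theta^{(v)}))$.
\item Uniform dominance: this is the special case $W_{m^\star}=1$, using Proposition~\ref{prop:transience-no-equilibrium} recursively.
\item Face-ergodic: $\Pi=\mu_I$, supported on $\Delta_I$.
\item Nested/mixed: the general mixture, handled by linearity of the integral.
\end{enumerate}

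The main obstacle is justifying weak convergence of $\Law(\pi_t)$ to the stated mixture in the transient and nested cases. There, almost-sure convergence of $\pi_t$ to a random face $\Delta_I$ does not by itself give convergence of the law of the position within that face. I would handle this by conditioning on the basin-entry events $\{I \text{ is the terminal face}\}$, which have probabilities $W_I$. On each such event I would use Proposition~\ref{prop:dynamics_succession}: after a large random time, the coordinates $\pi_t|_I$ follow the face dynamics up to vanishing errors. The ergodicity of those face dynamics, from the face-level version of the angle or spectral theorem, then gives convergence of the conditional law to $\mu_I$ via a coupling or asymptotic-equivalence argument. Bounded convergence, applied to the bounded function $p_a$, lets me sum the conditional limits weighted by $W_I$.

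If that coupling step proves delicate, a fallback is available. Because the integrands are bounded and Cesàro averages are more forgiving, it suffices to show the time averages $\frac1T\sum_t p_a(\pi_t)$ converge on each basin event, using the Harris ergodic theorem on the face. Taking expectations then yields $\overline R$ even without convergence of $\Pr(A_t=a)$ itself.
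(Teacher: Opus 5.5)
Your proposal matches the paper's proof. Like the paper, you write $\Pr(A_t=a)=\E[p_a(\pi_t)]$, pass to the limit through Proposition~\ref{prop:MM_limiting_action_dist} using that the linear map $p_a$ is bounded and continuous on $\Delta_M$, take Ces\`aro means to get $\overline R=\sum_a\alpha(a)r(a)$, and then specialize the terminal mixture category by category. The paper simply assumes the weak convergence $\Law(\pi_t)\Rightarrow\sum_v W_v\delta_{e_v}+\sum_I W_I\mu_I$ as the standing hypothesis of Proposition~\ref{prop:MM_limiting_action_dist}, so your basin-conditioning and coupling discussion is extra work the paper does not attempt; as you anticipate, it would need a real perturbation argument to be made rigorous.
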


From the above proposition, we can see that the Interior Ergodic and Face-Ergodic cases produce stationary randomized policies induced by the invariant law(s), however the Vertex Selection and Uniform Dominance yield degenerate policies. Since our oracle policy is always selecting the optimal action, when the policy is random, the expected regret will never converge to zero. The only possible zero expected regret is when the policy is degenerate under the uniform dominance case, where the belief converge to the model whose optimal action is the optimal action under the true data generating process. Unless we add extra structure to the model misspecification, there is no guarantee (even in probabilistic sense) to achieve a decreasing average expected regret in the long run.

\begin{table}[H]
\centering
\renewcommand{\arraystretch}{1.2}
\footnotesize
\begin{tabular}{|p{3cm}|p{4.5cm}|p{7.5cm}|}
\hline
\textbf{Category} 
& \textbf{Limiting action probabilities} 
& \textbf{Limiting average expected regret} \\
\hline
\textbf{Interior Ergodic} 
& {\scriptsize $\alpha_{\mathrm{int}}(a)=\int p_a(\pi)\, \mu_{\mathrm{int}}(d\pi)$} 
& {\scriptsize $\displaystyle \sum_a \alpha_{\mathrm{int}}(a)\, [g(a^\star)-g(a)]$} \\
\hline
\textbf{Vertex Selection} 
& {\scriptsize $\alpha(a)=\sum_{v\in V_\star} W_v\,\mathbf{1}\{\phi(\theta^{(v)})=a\}$} 
& {\scriptsize $\displaystyle \sum_{v\in V_\star} W_v\, [g(a^\star)-g(\phi(\theta^{(v)}))]$} \\
\hline
\textbf{Uniform \newline Dominance} 
& {\scriptsize $\alpha(a)=\mathbf{1}\{\phi(\theta^{(m^\star)})=a\}$} 
& {\scriptsize $g(a^\star)-g(\phi(\theta^{(m^\star)}))$} \\
\hline
\textbf{Face-Ergodic} 
& {\scriptsize $\alpha_I(a)=\int_{\Delta_I} p_a(\pi)\, \mu_I(d\pi)$} 
& {\scriptsize $\displaystyle \sum_a \alpha_I(a)\,[g(a^\star)-g(a)]$} \\
\hline
\textbf{Nested/Mixed Outcomes} 
& {\scriptsize $\alpha(a)=\sum_{v} W_v \mathbf{1}\{\phi(\theta^{(v)})=a\}+\sum_I W_I \int_{\Delta_I} p_a(\pi)\,\mu_I(d\pi)$} 
& {\scriptsize $\displaystyle \sum_{v} W_v\, r(\phi(\theta^{(v)}))+\sum_I W_I \sum_a \Big(\int_{\Delta_I} p_a(\pi)\,\mu_I(d\pi)\Big) r(a)$} \\
\hline
\end{tabular}
\caption{Limiting action distributions and average expected regret across multi-model categories. 
Here $p_a(\pi)=\sum_j \pi_j\,\mathbf{1}\{\phi(\theta^{(j)})=a\}$, $a^\star=\arg\max_a g(a)$, $r(a)=g(a^\star)-g(a)$ is the expected per-period regret, $V_\star$ are terminal vertices, and $\mathcal F_\star$ are ergodic terminal faces.}
\label{tab:mm_regret_table}
\end{table}

\section{Proofs of Main Classification Theorems}
\label{app:D}

This appendix contains the proofs of the four main classification theorems characterizing Thompson Sampling behavior under misspecification.

\subsection{Proof of Theorem \ref{thm:agreement_case}}
\label{app:D:agreement_proof}

\begin{assumption}
We assume the following: 
\begin{enumerate}
    \item initial log-likelihood does not prefer any arm, i.e., $S_0 = 0$;
    \item Under the true reward distribution given action $i$ $g_i$, the log-likelihood
    \begin{align*}
        \E_{g_i} \left[ \left( \log \frac{f_{\nu_i}(r)} {f_{\gamma_i}(r)} \right)^2 \right] < \infty;
    \end{align*}
\end{enumerate}\label{as:two-arm}
\end{assumption}

\begin{proof}
Assume both models agree on the optimal action:
\[
\phi(\nu) = \phi(\gamma) = 1.
\]
Thus, Thompson Sampling always selects action 1: \( A_t = 1 \) for all \( t \). 
Let \( R_t \) denote the reward observed at time \( t \) upon performing action 1, and define the log-likelihood ratio process as
\[
S_t := \sum_{s=1}^t \log \frac{f_{\nu_1}(R_s)}{f_{\gamma_1}(R_s)}.
\]
Since rewards are assumed to be independent and only arm 1 is being played, the collection of random variables $(Z_s)_{s \in \mathbb N}$ defined as
\[
Z_s := \log \frac{f_{\nu_1}(R_s)}{f_{\gamma_1}(R_s)}
\]
are i.i.d.~and integrable by assumption.
Denote their expectations by
\[
\Delta_1 := \mathbb{E}_{g_1} \left[ \log \frac{f_{\nu_1}(R)}{f_{\gamma_1}(R)} \right].
\]

Recall the filtration $\mathbb F = (\mathcal F_t)_{t \in \mathbb N}$ generated by previous actions and rewards in \eqref{eq:filtration}.
Then \( \{S_t\} \) is $\mathbb F$-adapted and satisfies
\[
\mathbb{E}[S_{t+1} \mid \mathcal{F}_t] = S_t + \Delta_1.
\]
We analyze three cases based on the sign of \( \Delta_1 \):

\paragraph{\underline{Case (i): $\Delta_1 > 0$}}

The process \( S_t \) is a \emph{submartingale} under \( \mathbb{P} \) since
\[
\mathbb{E}[S_{t+1} \mid \mathcal{F}_t] = S_t + \Delta_1 \geq S_t.
\]
Since \( S_t \) is a sum of i.i.d.~variables with positive mean, the strong law of large numbers implies
\[
\lim_{t \to \infty} \frac 1 t S_t = \Delta \quad \text{$\mathbb P$-a.s.},
\]
and consequently, $S_t$ diverges a.s.
The posterior belief evolves as:
\[
\pi_t = \frac{1}{1 + \frac{1 - \pi_0}{\pi_0} e^{-S_t}} \to 1 \quad \text{a.s. as $t \to \infty$.}
\]

\paragraph{\underline{Case (ii): $\Delta_1 = 0$}}

In this case, the log-odds process $S$ is a symmetric random walk.
So, for any $z \in (0,1)$, we have
\begin{align*}
    \Pr(\pi_t \leq z) = \Pr \left( S_t \leq - \log \left( \frac 1 z - 1 \right) \right) = \Pr \left( \frac{S_t}{\sqrt t} \leq \frac {- \log \left( \frac 1 z - 1 \right)} {\sqrt t} \right).
\end{align*}
Now, by the independence of the reward and the assumed second-moment bound, we know that a central limit theorem holds.
Furthermore, by Slutsky's theorem, we have
\begin{align*}
    \frac{S_t}{\sqrt t} + \frac {- \log \left( \frac 1 z - 1 \right)} {\sqrt t} \stackrel{\text{(d)}}{\longrightarrow} \mathcal N(0, \sigma^2).
\end{align*}
Since $0$ is a point of continuity for the normal distribution, we take the limit on both sides and conclude that
\begin{align*}
    \lim_{t \to \infty} \Pr(\pi_t \leq z) = \Pr(\mathcal N(0, \sigma^2) \leq 0) = \frac 1 2.
\end{align*}
By the fact that $z$ is chosen arbitrarily and $0 \leq \pi_t \leq 1$, this shows that $\pi_t$ converges weakly to a Bernoulli with parameter $1/2$.

\paragraph{\underline{Case (iii): $\Delta_1 < 0$}}

This case follows from Case (i) by symmetry, and the proof is complete. 

\end{proof}

\subsection{Proof of Theorem \ref{thm:self_confirming}}
\label{app:D:self_confirming_proof}

\begin{proof}
Writing $g(s):=\E[Z_t\mid S_{t-1}=s]=\Delta_{2}+(\Delta_{1}-\Delta_{2})\sigma(s)$, we have $g(s)\,{\rm sign}(s-s^\star) > 0$ for all $s \neq s^*$ and $s^\star:=\log\!\tfrac{p^{\star}}{1-p^{\star}}$.
Note that the conditional mean increment of $S_t$ points away from $s^\star$.
We want to invoke \citet[Theorem 8.4.2]{meynMarkovChainsStochastic1993} and conclude that the Markov chain $S$ is transient, i.e., there exists a countable covering of $\mathbb R$ such that the expected occupation time in each set is uniformly bounded.
In particular, we can conclude that, for any compact set $K \subset \mathbb R$, $S_t$ will leave $K$ almost surely in finite time and $|S_t| \to \infty$ as $t \to \infty$.

Now, we check the conditions for \citet[Theorem 8.4.2]{meynMarkovChainsStochastic1993} by constructing the appropriate bounded, Borel-measurable function $V: \mathbb R \to \mathbb R$.
For $\alpha > 0$ to be determined, define
\begin{align*}
    V(s) = 1 - e^{-\alpha |s - s^\star|}.
\end{align*}
This function is bounded between 0 and 1, and increases monotonically as $|s - s^\star|$ increases.

We now verify that for sufficiently small $\alpha$ and large $R$, we have strictly positive drift outside the strip $\{|s - s^\star| > R\}$.
Consider the case $s > s^\star + R$ (the case $s < s^\star - R$ is symmetric).
Let $Z_t := S_t - S_{t-1}$ denote the one-step increment.
By Taylor expansion of $V(s + Z_t)$ around $s$ with Lagrange remainder:
\[
V(s + Z_t) - V(s) = V'(s) Z_t + \frac{1}{2} V''(s) Z_t^2 + \frac{1}{6} V'''(\tilde{s}) Z_t^3,
\]
for some $\tilde{s}$ between $s$ and $s + Z_t$.
For $s > s^\star$, we have $V'(s) = \alpha e^{-\alpha(s - s^\star)}$, $V''(s) = -\alpha^2 e^{-\alpha(s - s^\star)}$, and $|V'''| \leq \alpha^3$ everywhere (since $V'''(s) = \alpha^3 e^{-\alpha(s-s^\star)}$ for $s > s^\star$). Under Assumption~\ref{as:two-arm}, the log-likelihood ratio increment $Z_t$ is a Gaussian mixture with uniformly bounded moments of all orders; in particular, $\E[|Z_t|^3 \mid S_{t-1} = s] \leq C_3$ for a constant $C_3$ depending only on the model parameters. Thus the remainder satisfies $\E[|V'''(\tilde{s}) Z_t^3|/6 \mid S_{t-1} = s] \leq \alpha^3 C_3/6$.
Taking conditional expectations:
\begin{align*}
\E[V(S_t) - V(S_{t-1}) \mid S_{t-1} = s]
&= V'(s) \, g(s) + \frac{1}{2} V''(s) \, \E[Z_t^2 \mid S_{t-1} = s] + O(\alpha^3) \\
&= \alpha e^{-\alpha(s - s^\star)} \left( g(s) - \frac{\alpha}{2} \E[Z_t^2 \mid S_{t-1} = s] \right) + O(\alpha^3).
\end{align*}
Since $g(s) > 0$ for $s > s^\star$ (the drift points away from $s^\star$) and $\E[Z_t^2 \mid S_{t-1} = s]$ is bounded by some constant $M < \infty$ (by Assumption~\ref{as:two-arm}), we can choose $\alpha$ small enough such that
\[
g(s) - \frac{\alpha M}{2} > 0.
\]
Moreover, for $s > s^\star + R$ with $R$ sufficiently large, $g(s) \ge c > 0$ for some constant $c$ (since $g(s) \to \Delta_1 > 0$ as $s \to \infty$ in the self-confirming case, but here $g(s) > 0$ for all $s > s^\star$).
Thus, there exists $\varepsilon > 0$ such that
\[
\E\!\bigl[V(S_{t})-V(S_{t-1})\mid S_{t-1}=s\bigr]\;\ge\;\varepsilon
\quad\text{for }|s-s^\star|>R.
\]

The chain $(S_t)$ is $\psi$-irreducible since the transition kernel has Gaussian mixture densities with full support. Combined with the positive drift of the bounded function $V$ outside a compact set, the bounded-drift criterion
\citep[Thm.~8.4.2]{meynMarkovChainsStochastic1993} implies that $(S_t)$ is transient, which means it eventually leaves the strip $\{|s-s^\star|\le R\}$ and never returns.
Consequently,
\[
S_t\;\longrightarrow\;+\infty\quad\text{or}\quad-\infty
\qquad\text{a.s.}
\]

Because $\sigma(s)\to1$ as $s\to+\infty$ and $\sigma(s)\to0$ as $s\to-\infty$,
\[
\pi_t=\sigma(S_t)\;\xrightarrow{\text{a.s.}}\;
\mathbf 1_{\{S_t\to+\infty\}}.
\]

Then\/ $\pi_t\stackrel{d}{\longrightarrow}\mathrm{Bernoulli}\!\bigl(p\bigr)$, now we want to show $p = u(S_0)$ where
$u:\Bbb R\to[0,1]$ is the unique bounded solution of the harmonic equation
\begin{equation}\label{eq:harmonic}
u(s)=\E\bigl[u\!\bigl(s+Z_1\bigr)\mid S_0=s\bigr],\qquad
\lim_{s\to-\infty}u(s)=0,\;\;
\lim_{s\to+\infty}u(s)=1.
\end{equation}
In particular,
\[
\Pr\{\pi_t\to1\mid S_0=s\}=u(s),\qquad
\Pr\{\pi_t\to0\mid S_0=s\}=1-u(s).
\]

Set $u(s):=\Pr_s\{S_t\to+\infty\}$.
For each $L>0$ define the finite
target $A_L=[L,\infty)$ and $u_L(s):=\Pr_s\{\tau_{A_L}<\infty\}$.
The first-step equation for hitting probabilities
\citep{levin2017markov} gives
\[u_L(s)=\E\bigl[u_L\!\bigl(s+Z_1\bigr)\mid S_0=s\bigr],\]
and $u_L\uparrow u$ as $L\to\infty$.
Monotone convergence passes the averaging property to the limit,
yielding~\eqref{eq:harmonic}.
Uniqueness follows because any other bounded solution of~\eqref{eq:harmonic}
would contradict the minimality of the hitting-probability function.
Thus $u(S_0)$ is the desired Bernoulli success probability, completing the proof.
\end{proof}

\subsection{Proof of Theorem \ref{thm:uniform_dominance}}
\label{app:D:uniform_dominance_proof}

Let $A_t := \sum_{s=1}^{t} \mathbb{E}[Z_s \mid \mathcal{F}_{s-1}]$ represents the cumulative drift, and $M_t := \sum_{s=1}^t (Z_s - \mathbb{E}[Z_s \mid \mathcal{F}_{s-1}])$ is a zero-mean martingale.

Under the assumption~\ref{as:two-arm} and since the parameter space is finite, the log-likelihood increments have uniformly bounded second moments, i.e., $\sup_s \mathbb{E}[(\Delta M_s)^2 \mid \mathcal{F}_{s-1}] \leq \sigma^2 < \infty$. By the Strong Law of Large Numbers for Martingales \citep{chow1967strong}, the condition $\sum_{s=1}^\infty \frac{\text{Var}(\Delta M_s)}{s^2} \leq \sum_{s=1}^\infty \frac{\sigma^2}{s^2} < \infty$ is satisfied. Consequently:
\[
\lim_{t \to \infty} \frac{M_t}{t} = 0 \quad \text{a.s.}
\]

The expected log-increment is defined by the mixture $\mathbb{E}[Z_s \mid \mathcal{F}_{s-1}] = \sigma(s)\Delta_1 + (1-\sigma(s))\Delta_2$. Given the uniform dominance condition $\min(\Delta_1, \Delta_2) > c > 0$, the average drift is bounded away from zero:
\[
\liminf_{t \to \infty} \frac{A_t}{t} = \liminf_{t \to \infty} \frac{1}{t} \sum_{s=1}^t \mathbb{E}[Z_s \mid \mathcal{F}_{s-1}] \geq c > 0
\]

Combining these limits, we obtain the asymptotic growth rate of the log-evidence ratio:
\[
\liminf_{t \to \infty} \frac{S_{t+1}}{t} = \liminf_{t \to \infty} \left( \frac{S_0}{t} + \frac{A_t}{t} + \frac{M_t}{t} \right) \geq 0 + c + 0 = c > 0 \quad \text{a.s.}
\]
This implies $\lim_{t \to \infty} S_t = +\infty$ almost surely. By the Continuous Mapping Theorem, since the posterior probability $\pi_t = (1 + e^{-S_t})^{-1}$ is a continuous function of $S_t$:
\[
\lim_{t \to \infty} \pi_t = \frac{1}{1 + \lim_{S_t \to \infty} e^{-S_t}} = 1 \quad \text{a.s.}
\]
Thus, the posterior probability of model $\nu$ converges to $1$ almost surely.

\subsection{Proof of Theorem \ref{thm:self_defeating}}
\label{app:D:self_defeating_proof}

\begin{proof}
    We prove this theorem in two steps. First, we establish the Harris ergodicity of the log-odds process via a Lyapunov function approach. Second, we derive the existence and characterization of the invariant distribution for the posterior probabilities.
    
    According to equation \eqref{eq:log-odds-two-arm}, we have the stochastic process:
    \[
    S_{t} = S_{t-1} + Z_t, \quad Z_t \sim
    \begin{cases}
    X_t & \text{with probability } \pi_{t-1} = \sigma(S_{t-1}), \\
    Y_t & \text{with probability } 1 - \pi_{t-1},
    \end{cases} \quad \sigma(s) = \frac{1}{1 + e^{-s}},
    \]
    where $\mathbb{E}[X_t] = \Delta_1 < 0$ and $\mathbb{E}[Y_t] = \Delta_2 > 0$. Under assumption~\ref{as:two-arm}, both $X_t$ and $Y_t$ have bounded second moments $\sigma_X^2$ and $\sigma_Y^2$, respectively.
    
    Define the Lyapunov function:
    \[
    V(s) = (s - s^\star)^2,
    \]
    where $s^\star$ is the unique solution to $g(s^\star)=0$ with:
    \[
    g(s) = \mathbb{E}[S_t - S_{t-1} \mid S_{t-1}=s] = \sigma(s) \Delta_1 + (1 - \sigma(s)) \Delta_2.
    \]
    Observe that $g(s)$ is continuous, strictly decreasing, and satisfies $g(-\infty)=\Delta_2>0$, $g(\infty)=\Delta_1<0$.
    
    Compute the drift:
    \[
    \begin{aligned}
    \Delta V(s) &= \mathbb{E}[V(S_{t+1}) - V(S_t) \mid S_t=s] \\
    &= \mathbb{E}[(S_t + Z_t - s^\star)^2 - (s - s^\star)^2 \mid S_t=s] \\
    &= 2(s - s^\star)g(s) + \mathbb{E}(Z^2_t \mid S_t=s).
    \end{aligned}
    \]
    
    The term $2(s-s^\star)g(s)$ is negative outside $s=s^\star$ because $g(s)(s-s^\star)<0$. The second term is bounded by the assumption~\ref{as:two-arm} and the structure of the mixture distribution.
    
    Thus, there exist constants $c>0$ and $B<\infty$ such that:
    \[
    \mathbb{E}[V(S_{t+1}) \mid S_t=s] \le V(s) - c V(s)^{\frac{1}{2}} + B, \quad \forall s \in \mathbb{R}.
    \]
    To bring the drift inequality into the $f$-ergodicity condition in \citep[Ch.~14]{meynMarkovChainsStochastic1993}, define the shifted Lyapunov function
    \[
    \widetilde{V}(s) := 1 + V(s) = 1 + (s - s^\star)^2,
    \]
    which satisfies \( \widetilde{V}(s) \in [1, \infty) \) for all \( s \in \mathbb{R} \). Then the drift becomes:
    \[
    \begin{aligned}
    \mathbb{E}[\widetilde{V}(S_{t+1}) - \widetilde{V}(S_t) \mid S_t = s]
    &= \mathbb{E}[V(S_{t+1}) \mid S_t = s] - V(s) \\
    &\leq - c V(s)^{\frac{1}{2}} + B.
    \end{aligned}
    \]
    Let \( 0 < \beta < c \), then there exists a constant \( R_1 > 0 \) such that for all \( s \) satisfying \( |s - s^\star| > R_1 \), we have 
    \[
    \mathbb{E}[\widetilde{V}(S_{t+1}) - \widetilde{V}(S_t) \mid S_t = s] \leq -\beta \widetilde{V}(s)^{\frac{1}{2}} + B.
    \]
    
    Moreover, because \( |s - s^\star| \to \infty \) as \( |s| \to \infty \), the negative drift term dominates outside a compact interval. That is, there exists a constant \( R_2 > 0 \) such that for the compact set \( C := [s^\star - R_2, s^\star + R_2] \), we have
    \[
    \mathbb{E}[\widetilde{V}(S_{t+1}) - \widetilde{V}(S_t) \mid S_t = s] \leq -\beta \widetilde{V}(s)^{\frac{1}{2}} + \cdot \mathbf{1}_C(s) B, \quad \forall s \in \mathbb{R}.
    \]
    This in fact satisfies the polynomial drift condition, which was first given by \citet{jarner2002polynomial}, completing the proof.
    
The Markov chain \(\{ S_t \}\) is irreducible and aperiodic, as it is driven by the full support of the underlying signals \(X_t\) and \(Y_t\). Invoking the global drift condition established above, we conclude that the log-odds process \(\{ S_t \}\) admits a unique invariant probability measure \(\pi_S\). Moreover, the chain is Harris recurrent and ergodic, with the law \(\mathcal{L}(S_t)\) converging in total variation to \(\pi_S\) as \(t \to \infty\). Since the posterior \(\pi_t\) is a continuous transformation of \(S_t\) via the map \(\sigma\), it inherits the ergodicity of the log-odds process. As a result, the posterior distribution converges in distribution to the pushforward measure \(\pi_{\text{post}} = \pi_S \circ \sigma^{-1}\).

\end{proof}

\subsection{Proof of Theorem \ref{thm:ergodicity-angle-condition}}
\label{app:D:ergodicity-angle-condition-proof}

We now establish a sufficient condition for ergodicity of the stochastic process using the Foster–Lyapunov criterion. The following assumption is required for ergodicity:

This assumption is mild. When the reward density has bounded second moments, this assumption is usually satisfied. Now we are ready to prove the first sufficient condition for ergodicity in Theorem \ref{thm:ergodicity-angle-condition}.

\begin{proof}
    Define
    \[
    V(S) \;=\; 1 + \|S - S^\star\|^2.
    \]
    Then
    \[
    V(S_{t+1}) - V(S) = \|S_{t+1}-S^\star\|^2 - \|S-S^\star\|^2.
    \]
    Expanding,
    \[
    = 2\langle S - S^\star,\, S_{t+1}-S\rangle + \|S_{t+1}-S\|^2.
    \]
    
    Taking conditional expectation,
    \[
    \mathbb{E}[V(S_{t+1})-V(S)\mid S_t=S]
    = 2\langle S - S^\star,\, \mathbb{E}[S_{t+1}-S \mid S_t=S]\rangle 
    + \mathbb{E}[\|S_{t+1}-S\|^2 \mid S_t=S].
    \]
    
    By definition of the drift, 
    \[
    \mathbb{E}[S_{t+1}-S \mid S_t=S] = D \psi(S).
    \]
    Thus,
    \[
    \mathbb{E}[V(S_{t+1})-V(S)\mid S_t=S]
    = 2 \langle S - S^\star, D \psi(S)\rangle + \sigma^2(S),
    \]
    where $\sigma^2(S):=\mathbb{E}[\|S_{t+1}-S\|^2\mid S_t=S]$ is uniformly bounded under Assumption \ref{assumption:uniform-second-moment-bound}.  
    
    Condition \eqref{eq:angle-condition} ensures that the first term is strictly negative outside a compact set: there exists $\beta>0$ such that
    \[
    2 \langle S - S^\star, D \psi(S)\rangle \le -\beta \|S-S^\star\|, \qquad  \text{for large } \|S-S^\star\|.
    \]
    Hence
    \[
    \mathbb{E}[V(S_{t+1})-V(S)\mid S_t=S]
    \;\le\; -\,\beta V(S)^{\frac{1}{2}} + B,
    \]
    for some constant $B>0$. This is precisely the Foster–Lyapunov drift condition, which implies positive Harris recurrence and thus ergodicity of $\{S_t\}$. 
    \end{proof}

\subsection{Proof of Theorem \ref{thm:sufficient-condition-ergodicity-transience}}
\label{app:D:sufficient-condition-ergodicity-transience-proof}

First of all, observe that the drift vector can be factorized in a way that removes the reference model term:

\begin{equation}
    \xi(S)\;=\; G\,\big(\pi_{-M}(S)-\pi_{-M}^*\big).
    \label{eq:xi-factor}
\end{equation}
where $G := [d(a_1), \ldots, d(a_{M-1})] \in \mathbb{R}^{(M-1) \times (M-1)}$.
    
Then we can define a canonical potential function $V(S)$ and show that the mean field ODE is a generalized gradient system. Let $A(S)=\log\!\big(1+\sum_{k=1}^{M-1}e^{S^{(k)}}\big)$ and $V(S)=A(S)-\langle S,\pi_{-M}^*\rangle$. Then $\nabla V(S)=\pi_{-M}(S)-\pi_{-M}^*$ and $\nabla^2 V(S)=\mathrm{diag}(\pi_{-M}(S))-\pi_{-M}(S)\pi_{-M}(S)^\top\succeq 0$.

The mean field ODE is a generalized gradient system as follows:
\begin{equation}
    \dot S = \xi(S) = G \nabla V(S)
\end{equation}

The potential function $V(S)$ is convex in general. And the strict convexity is guaranteed by the uniqueness of the interior fixed point.

\begin{proposition}[Uniqueness of Interior Fixed Point]
    \label{prop:uniqueness_simple}
    The interior fixed point is unique if and only if $G$ has full rank.
    \end{proposition}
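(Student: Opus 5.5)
The plan is to reduce the statement to a linear-algebra fact about the affine set of beliefs at which the drift vanishes, and to use the fact that the softmax chart is a diffeomorphism.

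First, I will put the factorization \eqref{eq:xi-factor} on a firm footing with the main-text definition $G=[d(a_1)-d(a_M),\ldots,d(a_{M-1})-d(a_M)]$. Using $\sum_j\pi_j=1$ in \eqref{eq:drift-convex-combination}, I write
\[
\xi(S)=d(a_M)+\sum_{j<M}\pi_j(S)\,\bigl(d(a_j)-d(a_M)\bigr)=d(a_M)+G\,\pi_{-M}(S).
\]
By assumption there is an interior fixed-point belief $\pi^\star$, and evaluating there gives $d(a_M)=-G\pi^\star_{-M}$. Hence $\xi(S)=G(\pi_{-M}(S)-\pi^\star_{-M})$, and
\[
\{S:\xi(S)=0\}=\psi^{-1}\bigl(\{\pi\in\mathrm{int}\,\Delta_M:\ \pi_{-M}-\pi^\star_{-M}\in\ker G\}\bigr).
\]

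Second, I will use that the reduced softmax map $S\mapsto\pi_{-M}(S)$ is a bijection from $\mathbb{R}^{M-1}$ onto the open set
\[
U=\Bigl\{p\in\mathbb{R}^{M-1}:\ p_k>0,\ \textstyle\sum_k p_k<1\Bigr\},
\]
with explicit inverse $S^{(k)}=\log\bigl(p_k/(1-\sum_j p_j)\bigr)$. Fixed points in $S$-space are therefore in one-to-one correspondence with the set $(\pi^\star_{-M}+\ker G)\cap U$.

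Third, I will prove both directions from this correspondence.
\begin{enumerate}
\item If $G$ has full rank, then $\ker G=\{0\}$. The intersection is then the single point $\pi^\star_{-M}$, which lies in $U$ because $\pi^\star$ is interior. So there is exactly one fixed point $S^\star=\psi^{-1}(\pi^\star)$.
\item If $G$ is rank-deficient, pick $0\neq v\in\ker G$. Since $U$ is open and contains $\pi^\star_{-M}$, the points $\pi^\star_{-M}+\varepsilon v$ lie in $U$ for all sufficiently small $|\varepsilon|$. This gives a continuum of distinct interior beliefs, hence distinct $S$, at which $\xi$ vanishes, so uniqueness fails.
\end{enumerate}

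The only delicate point is bookkeeping rather than analysis. The appendix writes $G$ as $[d(a_1),\ldots,d(a_{M-1})]$, and the proof must use the difference form from the main text. With the undifferenced matrix the factorization and the rank statement would generally be false. I will also note that the interior-existence hypothesis is essential for the forward direction: without it, full rank gives at most one fixed point rather than exactly one.
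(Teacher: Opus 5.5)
Your proof is correct and is essentially the paper's argument: the paper likewise rewrites the fixed-point condition as the linear system $G\pi_{-M}=-d(a_M)$, perturbs an interior solution along a nonzero $v\in\ker G$ to contradict uniqueness, and inverts $G$ for the converse. Your additions are accurate refinements of points the paper leaves implicit: you use the softmax chart to transfer uniqueness from beliefs to $S$, and you note both that an interior fixed point must be assumed to exist and that the linear system is only correct for the differenced $G$ (the appendix's undifferenced definition would not work).
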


\begin{remark}
    The Jacobian of the mean field ODE is given by:
    \begin{equation}
        J_{\xi}(S) = G \nabla^2 V(S)
    \end{equation}

    Then the Jacobian is local stable (Hurwitz) if the symmetric part of $G$, $\Sym(G) = (G + G^\top)/2$, is negative definite.
\end{remark}

Now we can state and prove the main Theorem~\ref{thm:sufficient-condition-ergodicity-transience}

\begin{proof}
    \textbf{1. Condition for Ergodicity} We use the same potential $V$ as in Definition \ref{def:softmax-potential-main}:
    
    \[
    W(S):=1+V(S)-V\left(S^{\star}\right), \quad \nabla W(S)=\nabla V(S) .
    \]
    
    A one-step Taylor expansion around $S_t$ gives, for $Z_t=S_{t+1}-S_t$,
    \[
    V(S_t+Z_t)-V(S_t)=\nabla V(S_t)^{\top} Z_t+\frac{1}{2} Z_t^{\top} H(S_t+\theta Z_t) Z_t,
    \]
    
    for some $\theta \in(0,1)$, with $H=\nabla^2 V = \operatorname{diag}\left(\pi_{-M}\right)-\pi_{-M} \pi_{-M}^{\top}$. Taking conditional expectation and using $\mathbb{E}[Z_t \mid S]=G \nabla V(S)$,
    \[
    \mathbb{E}\left[W\left(S_{t+1}\right)-W\left(S_t\right) \mid S_t=S\right] \leq \nabla V(S)^{\top} G \nabla V(S) + \frac{1}{2}\|H\| \bar{\sigma}^2 \le  -\mu \|\nabla V(S)\|^2 +C,
    \]
    
    for constants $\mu>0$ (from $\operatorname{Sym}(G) \prec 0$ ) and $C=\frac{1}{4} \bar{\sigma}^2$. $H(S)$ is positive definite and uniformly bounded in spectral norm with sharp bound $\|H\|\le \frac{1}{2}$.
    
    Because $\pi^{\star} \in \Delta_M^{\circ}$, there exists a radius $R$ such that when $V(S) \geq R$ the vector $u(S)=\nabla V(S)= \pi_{-M}(S)-\pi_{-M}^{\star}$ is bounded away from 0. Hence there is $c_0>0$ and a compact set $C_R:=\{V \leq R\}$ with
    
    \[
    \|\nabla V(S)\| \geq c_0 \quad \text { on } C_R^{\mathrm{c}} .
    \]
    
    Therefore, for $S \notin C_R$,
    
    \[
    \mathbb{E}\left[W\left(S_{t+1}\right)-W\left(S_t\right) \mid S_t=S\right] \leq-\mu c_0^2+C=:-\varepsilon<0,
    \]
    
    this is achieved by Assumption \ref{as:small-noise} and after increasing $R$ if needed so that $\varepsilon>0$. Thus we have the standard Foster-Lyapunov drift condition. With Lemma \ref{lemma:irreducibility-and-aperiodicity-of-the-log-odds-process} it yields positive Harris recurrence and an invariant probability measure.
    
\textbf{2. Condition for Transience}
To prove transience, we must show that the chain has a non-zero probability of never returning to a compact set. We cannot use the previous Lyapunov function $W$ as it is unbounded. Here we consider the following function:
\[
U(S) := 1- \exp\big(-\eta W(S)\big),
\]
for a small parameter $\eta > 0$ to be chosen later. Note that $U(S)$ is bounded ($0 < U \le 1$) and $U(S) \to 1$ as $\|S\| \to \infty$.

We analyze the conditional expectation of $\Delta U$. Since $U = 1 - e^{-\eta W}$, we have $\Delta U = e^{-\eta W(S_t)}(1 - e^{-\eta \Delta W})$. Using the inequality $e^{-x} \le 1 - x + \frac{x^2}{2}$, which holds for all $x \in \mathbb{R}$ (since $e^{-x}$ is convex and the right-hand side is its second-order Taylor polynomial at $0$), we obtain:
\begin{equation}
    \mathbb{E}[U_{t+1} - U_t \mid S_t=S] \le U(S) \left( -\eta \mathbb{E}[\Delta W \mid S] + \frac{\eta^2}{2} \mathbb{E}[(\Delta W)^2 \mid S] \right).
\end{equation}

In order to get the Lyapunov drift condition, we have to lower bound the linear term $\mathbb{E}[\Delta W \mid S]$ and upper bound the quadratic term $\mathbb{E}[(\Delta W)^2 \mid S]$. Specifically, we want to show the following upper and lower bounds hold for some constants $C_{\text{noise}}$ and $K$:
\begin{align}
\mathbb{E}[\Delta W \mid S] &\ge \mu \|\nabla V(S)\|^2 - C_{\text{noise}}, \\
\mathbb{E}[(\Delta W)^2 \mid S] &\le K.
\end{align}

\textit{1. Lower Bound on $\mathbb{E}[\Delta W \mid S]$.} We first establish a lower bound on the drift $\mathbb{E}[\Delta W \mid S]$. Recall that $W(S)$ is the potential function and $Z_t = S_{t+1} - S_t$. We perform a second-order Taylor expansion of $W$ around $S_t$: 
\[
\Delta W = W(S_{t+1}) - W(S_t) = \nabla V(S_t)^\top Z_t + \frac{1}{2} Z_t^\top H(\tilde{S}_t) Z_t,
\]
where $\tilde{S}_t$ is some point on the line segment between $S_t$ and $S_{t+1}$, and $H = \nabla^2 V$ is the Hessian. Taking the conditional expectation with respect to $S_t = S$:
\[
\mathbb{E}_S[\Delta W] = \nabla V(S)^\top \mathbb{E}_S[Z_t] + \frac{1}{2} \mathbb{E}_S\left[ Z_t^\top H(\tilde{S}_t) Z_t \right].
\]
Using the property that $\mathbb{E}_S[Z_t] = G \nabla V(S)$ and the spectral condition $\Sym(G) \succ 0$:
\[
\begin{aligned}
\nabla V(S)^\top \mathbb{E}_S[Z_t] &= \nabla V(S)^\top G \nabla V(S) \\
&= \nabla V(S)^\top \left( \frac{G + G^\top}{2} \right) \nabla V(S) \\
&\ge \lambda_{\min}(\Sym(G)) \|\nabla V(S)\|^2 \\
&= \mu \|\nabla V(S)\|^2.
\end{aligned}
\]
We strictly lower bound the remainder term by subtracting its maximum possible magnitude. Using the Cauchy-Schwarz inequality for matrices ($x^\top A x \le \|A\| \|x\|^2$) and the global bound $\|H(S)\| \le L$:
\[
\begin{aligned}
\frac{1}{2} \mathbb{E}_S\left[ Z_t^\top H(\tilde{S}_t) Z_t \right] &\ge - \left| \frac{1}{2} \mathbb{E}_S\left[ Z_t^\top H(\tilde{S}_t) Z_t \right] \right| \\
&\ge - \frac{1}{2} \mathbb{E}_S\left[ \|H(\tilde{S}_t)\| \|Z_t\|^2 \right] \\
&\ge - \frac{1}{2} L \, \mathbb{E}_S\left[ \|Z_t\|^2 \right].
\end{aligned}\]
By Assumption \ref{as:small-noise} (and the bounded second moment assumption), $\mathbb{E}_S[\|Z_t\|^2] \le \bar{\sigma}^2$. Thus:
\[
\frac{1}{2} \mathbb{E}_S\left[ Z_t^\top H(\tilde{S}_t) Z_t \right] \ge - \frac{1}{2} L \bar{\sigma}^2.
\]
Adding the bounds from Step A and Step B gives the final inequality:
\[
\begin{aligned}
\mathbb{E}[\Delta W \mid S] &\ge \mu \|\nabla V(S)\|^2 - \frac{1}{2} L \bar{\sigma}^2 \\
&= \mu \|\nabla V(S)\|^2 - C_{\text{noise}}.
\end{aligned}
\]

\textit{2. Upper Bound on $\mathbb{E}[(\Delta W)^2 \mid S]$.}
Then we need to control the second moment $\mathbb{E}[(\Delta W)^2 \mid S]$ to ensure that the quadratic term does not dominate the linear term in the drift. We utilize the global Lipschitz property of the softmax potential.
    
Recall that $\nabla V(S) = \pi_{-M}(S) - \pi^\star_{-M}$. Since probability vectors lie in the simplex, the gradient is uniformly bounded:
\[
\|\nabla V(S)\| \le \|\pi_{-M}(S)\| + \|\pi^\star_{-M}\| \le 2\sqrt{M}.
\]
Consequently, $V$ is globally Lipschitz continuous with constant $L_V \approx 2\sqrt{M}$. By the Lipschitz property:
\[
|\Delta W| = |V(S_{t+1}) - V(S_t)| \le L_V \|S_{t+1} - S_t\| = L_V \|Z_t\|.
\]
Squaring both sides and taking the conditional expectation:
\[
\mathbb{E}\left[(\Delta W)^2 \mid S\right] \le L_V^2 \, \mathbb{E}\left[\|Z_t\|^2 \mid S\right] \le 2 L_V^2 \bar{\sigma}^2.
\]
This confirms that there exists a constant $K = 2 L_V^2 \bar{\sigma}^2$ such that:
\[
\mathbb{E}[(\Delta W)^2 \mid S] \le K.
\]
Substituting these into the expansion for $U$:
\[
\frac{\mathbb{E}[\Delta U \mid S]}{U(S)} \le -\eta \left(\mu \|\nabla V\|^2 - C_{\text{noise}}\right) + \frac{\eta^2}{2} K.
\]
Then:
\[
\mathbb{E}[\Delta U \mid S] \le U(S) \left( -\eta \left(\mu \|\nabla V\|^2 - C_{\text{noise}}\right) + O(\eta^2) \right).
\]
By Lemma \ref{lem:grad-away} and assumption~\ref{as:small-noise}, outside a sufficiently large compact set $C_R$, $\|\nabla V(S)\| \ge c_\infty$. Thus, for small enough $\eta$, the negative term dominates the constant noise term, yielding:
\[
\mathbb{E}[U(S_{t+1}) - U(S_t) \mid S_t = S] < 0 \quad \text{for all } S \notin C_R.
\]

The function $U$ is bounded and measurable, and $\mathbb{E}[\Delta U \mid S] \geq \varepsilon > 0$ outside the compact set $C_R$ (equivalently, $\mathbb{E}[\Delta(1-U) \mid S] \leq -\varepsilon < 0$). These are precisely the conditions of \citet[Thm.~8.4.2]{meynMarkovChainsStochastic1993}, which, combined with the $\psi$-irreducibility established in Lemma~\ref{lemma:irreducibility-and-aperiodicity-of-the-log-odds-process}, implies that the chain is transient.

Since the chain is transient on $\mathbb{R}^d$, it visits any compact set only finitely often. Consequently, for any radius $M > 0$, there exists a finite random time $T_M$ such that $\|S_t\| > M$ for all $t > T_M$. This implies: 
\[
\lim_{t \to \infty} \|S_t\| = \infty \quad \text{almost surely.}
\]
As $\|S\|\to\infty$, softmax posteriors approach the boundary of the simplex (at least one coordinate of $\pi$ vanishes), so $\mathrm{dist}(\pi_t,\partial\Delta_M)\to 0$ a.s.

\end{proof}

\section{Supporting Lemmas and Auxiliary Results}
\label{app:E}

This appendix collects supporting lemmas and auxiliary results used throughout the paper.

This assumption ensures that the agent can, over time, steer the log-odds vector in any direction in $\mathbb{R}^{M-1}$ by encountering a suitable sequence of observations. We now state the formal irreducibility result.

\subsection{Proof of Lemma~\ref{lemma:irreducibility-and-aperiodicity-of-the-log-odds-process}}
\label{app:A:markov}

\paragraph{Markov Property} Under Thompson Sampling, the posterior process $\{\pi_t\}$ is a Markov process on the belief simplex $\Delta_M$. The action selection depends only on the current posterior, which ensures the Markov property.

By construction we have $S_{t+1}=S_t+\mathbf{Z}_t$, where $\mathbf{Z}_t$ is drawn from $\nu_{S_t}$ depending only on $S_t$. Thus for any Borel set $B$,
\[
    \mathbb{P}(S_{t+1}\in B \mid \mathcal{F}_t) \;=\; \mathbb{P}(S_{t+1}\in B \mid S_t),
\]
so $(S_t)$ is a time-homogeneous Markov chain with transition kernel
\begin{equation}
    P(s,B) \;=\; \mathbb{P}(S_{t+1}\in B \mid S_t = s) \;=\; \int \mathbbm{1}\{s+z \in B\}\,\nu_s(dz).\label{eq:transition-kernel}
\end{equation}

\paragraph{Irreducibility}

To establish  irreducibility of the Markov chain $(S_t)_{t\geq 0}$, we examine the mapping from the data space to the sum of increments. Let $d = M-1$. By the Rank Condition in Assumption~\ref{assum:full-rank-Reversibility}, there exist $d$ smooth curves $\gamma_1, \dots, \gamma_d$ in the support $\mathcal{Z}$ and points $t_1^*, \dots, t_d^*$ such that the tangent vectors $\gamma_i'(t_i^*)$ are linearly independent.

Consider the summation map $\Phi: \mathbb{R}^d \to \mathbb{R}^d$ defined by:
\[
    \Phi(t_1, \dots, t_d) = \sum_{i=1}^d \gamma_i(t_i).
\]
The Jacobian matrix of this map at the point $\mathbf{t}^* = (t_1^*, \dots, t_d^*)$ is given by the column vectors of the tangents:
\[
    J_\Phi(\mathbf{t}^*) = \begin{bmatrix} | & & | \\ \gamma_1'(t_1^*) & \cdots & \gamma_d'(t_d^*) \\ | & & | \end{bmatrix}.
\]

By the Rank Condition, this Jacobian has full rank $d$. Since $\Phi(\mathbf{t}^*) = \mathbf{0}$ and the differential is surjective, the Submersion Theorem implies that the image of $\Phi$ contains an open neighborhood of the value $\mathbf{0}$. That is, there exists $\epsilon > 0$ such that the open ball $B_\epsilon(\mathbf{0})$ is contained in the support of the $L$-step distribution. Thus, $B_\epsilon(\mathbf{0}) \subset S_{\mathcal{Z}}$. 

We now show that $S_{\mathcal{Z}} = \mathbb{R}^d$. Let $x \in \mathbb{R}^d$ be any target vector. Since $B_\epsilon(\mathbf{0})$ is a neighborhood of the origin, for sufficiently large integer $N$, the scaled vector $x/N$ satisfies $\| x/N \| < \epsilon$.Thus, $x/N \in B_\epsilon(\mathbf{0}) \subset S_{\mathcal{Z}}$. Since $S_{\mathcal{Z}}$ is closed under addition, the sum of $N$ copies of this vector is also in $S_{\mathcal{Z}}$:\[ x = \underbrace{\frac{x}{N} + \dots + \frac{x}{N}}_{N \text{ times}} \in S_{\mathcal{Z}}. \] This proves that every point in $\mathbb{R}^d$ can be reached by a finite sum of increments. Thus, $S_{\mathcal{Z}} = \mathbb{R}^d$

We now show that the process can stay close to this trajectory with positive probability. Define the summation map $\Psi: \mathbb{R}^{M-1} \times (\mathbb{R}^{M-1})^k \to \mathbb{R}^{M-1}$ by $\Psi(s, w_1, \dots, w_k) = s + \sum_{j=1}^k w_j$. This map is continuous. We have $\Psi(u_0, z_1, \dots, z_k) = v_0 \in V$. Since $V$ is open, the preimage $\Psi^{-1}(V)$ is an open set in the product space containing the point $(u_0, z_1, \dots, z_k)$. Therefore, there exist open neighborhoods $O_{u} \ni u_0$ and $O_{z_j} \ni z_j$ for each $j=1,\dots,k$ such that:
\[
    O_{u} \oplus O_{z_1} \oplus \dots \oplus O_{z_k} \subset V.
\]
Without loss of generality, we can choose $O_u \subset B_\epsilon(u_0) \subset U$.

We compute the probability of the event that the chain follows this trajectories.
\begin{align*}
    P^{n_0+k}(s, B) &\ge \mathbb{P}\left( S_{n_0+k} \in V \right) \\
    &\ge \mathbb{P}\left( S_{n_0} \in O_u, \ Z_{n_0+1} \in O_{z_1}, \dots, Z_{n_0+k} \in O_{z_k} \right).
\end{align*}
By the Markov property, the independence of increments conditional on state and the fact that support is state-independent:
\[
    = \mathbb{P}(S_{n_0} \in O_u) \times \prod_{j=1}^k \mathbb{P}(Z \in O_{z_j}).
\]
$\mathbb{P}(S_{n_0} \in O_u) > 0$ because $O_u \subset U$ and $S_{n_0}$ has a density on $U$. For the increment probabilities, each $z_j$ lies on one of the support curves $\gamma_i$, say $z_j = \gamma_i(t_j)$. Each such curve is a smooth map from $\mathbb{R}$ to $\mathbb{R}^{M-1}$, and the reward distribution under each action has a density with full support on $\mathbb{R}$ (e.g., Gaussian), so the pushforward of the reward density through $\gamma_i$ assigns positive measure to any open neighborhood of $z_j$ on the curve. In particular, $\mathbb{P}(Z \in O_{z_j}) > 0$ for each $j$. Thus, $P^{n_0+k}(s, B) > 0$.

\paragraph{Aperiodicity} We prove that the $\psi$-irreducible Markov chain $(S_t)$ is aperiodic. It suffices to exhibit a measurable set $C$ with $\psi(C)>0$ and an integer $n$
such that for all $s\in C$,
\[
P^{n}(s,C)>0
\quad\text{and}\quad
P^{n+1}(s,C)>0,
\]
since then the period divides $\gcd(n,n+1)=1$ and must equal $1$.

\textit{1. $L$-step neighborhood around $0$.} By Assumption~\ref{assum:full-rank-Reversibility}, there exist increments
$z_1,\dots,z_L$ satisfying the Rank and Reversibility conditions.
As shown in the irreducibility proof, this implies the existence of $\varepsilon>0$
such that for any $s$ in a neighborhood of $0$,
\[
B_\varepsilon(0)
\subset
\mathrm{supp}\big(\nu_s^{(*L)}\big),
\]
where $\nu_s^{(*L)}$ denotes the $L$-step increment law.

\textit{2. Return in $kL$ steps.} Let $C := B_r(0)$ with $r<\varepsilon/4$.
Choose $k$ large enough so that $k\varepsilon > 2r$.
Since convolution enlarges the reachable set,
\[
B_{k\varepsilon}(0)
\subset
\mathrm{supp}\big(\nu_s^{(*(kL))}\big).
\]
Hence for any $s\in C$,
\[
C \subset s + B_{k\varepsilon}(0),
\]
and therefore
\[
P^{kL}(s,C) > 0
\qquad\forall s\in C.
\]

\textit{3. Return in $kL+1$ steps.}

Pick any one-step support point $v\in\mathcal Z$.
Increase $k$ if necessary so that $\|v\|<k\varepsilon/2$.
Then $-v\in B_{k\varepsilon}(0)$, so for any $s\in C$ there is positive probability that
\[
S_{kL} \in s - v + U
\]
for some small neighborhood $U$ of $0$.

Choose $C$ sufficiently small so that
\[
(s - v + U) + v \subset C
\qquad\text{for all } s\in C.
\]
Since $v$ lies in the one-step support and increments are continuous along the
support curves, there exists a neighborhood $U_v$ of $v$ such that
$\nu_s(U_v)>0$ for all $s\in C$.
Therefore,
\[
P^{kL+1}(s,C)>0
\qquad\forall s\in C.
\]

We have shown that for all $s\in C$,
\[
P^{kL}(s,C)>0
\quad\text{and}\quad
P^{kL+1}(s,C)>0.
\]
Since $\gcd(kL,kL+1)=1$, the period of the $\psi$-irreducible chain
must equal $1$. Thus $(S_t)$ is aperiodic.

\subsection{Proof of Lemma~\ref{lemma:dichotomy-of-the-log-odds-process}}

\begin{proof}
    Given our markov chain is Lebesgue-irreducible (Lemma \ref{lemma:irreducibility-and-aperiodicity-of-the-log-odds-process}), the result is a direct consequence of the theorem 8.0.1 in \citet{meynMarkovChainsStochastic1993}.
\end{proof}

\subsection{Proof of Lemma \ref{lem:grad-away}}

\begin{lemma}[Gradient bounded away from $0$ at infinity]\label{lem:grad-away}
    Let $\pi^\star\in\Delta_M^\circ$ and $V$ be as in Def.~\ref{def:softmax-potential-main}. There exists $c_\infty>0$ and $R<\infty$ such that
    \[
    \|\nabla V(S)\|=\|\pi_{-M}(S)-\pi^\star_{-M}\|\ \ge\ c_\infty\qquad\text{for all } S\notin C_R:=\{V\le R\}.
    \]
    In particular, one may take $c_\infty\le \mathrm{dist}\big(\pi^\star,\partial\Delta_M\big)$; a crude explicit bound is $c_\infty\le \min_{1\le j\le M}\pi^\star_j$.
    \end{lemma}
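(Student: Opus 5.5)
The plan is to rewrite $V$ in belief coordinates, where it becomes a cross-entropy. Then "$V$ large" forces some posterior coordinate to be small, and a small coordinate forces $\pi_{-M}(S)$ to be far from $\pi^\star_{-M}$.

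\emph{Step 1 (rewrite $V$).} Using $S^{(k)}=\log\pi_k(S)-\log\pi_M(S)$ and $\log\big(1+\sum_k e^{S^{(k)}}\big)=-\log\pi_M(S)$, I will show
\[
V(S)=-\log\pi_M(S)-\sum_{k<M}\pi^\star_k\big(\log\pi_k(S)-\log\pi_M(S)\big)=-\sum_{j=1}^M\pi^\star_j\log\pi_j(S).
\]
Hence $V(S)=H(\pi^\star)+\mathrm{KL}(\pi^\star\,\|\,\pi(S))$. In particular $V$ is minimized exactly at $S^\star=\psi^{-1}(\pi^\star)$.

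\emph{Step 2 (sublevel sets are compact).} From the cross-entropy form, each term $-\pi^\star_j\log\pi_j$ is nonnegative. So $V(S)\le R$ gives $\pi_j(S)\ge e^{-R/\pi^\star_j}$ for every $j$. This bounds every $|S^{(k)}|=|\log\pi_k-\log\pi_M|$, so $C_R=\{V\le R\}$ is compact. Conversely, if $S\notin C_R$, then $-\pi^\star_j\log\pi_j(S)>R/M$ for some $j$, i.e.\ $\pi_j(S)<e^{-R/(M\pi^\star_j)}$.

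\emph{Step 3 (from a small coordinate to a gradient lower bound).} Choose $R$ so large that $e^{-R/(M\pi^\star_j)}\le\pi^\star_j/2$ for all $j$; for instance $R\ge M\log 2$ suffices, since $\pi^\star_j\le1$. Then for $S\notin C_R$ some index $j$ has $|\pi_j(S)-\pi^\star_j|\ge\pi^\star_j/2$.
\begin{itemize}
\item If $j<M$, this coordinate already appears in $\nabla V$, so $\|\nabla V(S)\|\ge\pi^\star_j/2$.
\item If $j=M$, use $\pi_M-\pi^\star_M=-\sum_{k<M}(\pi_k-\pi^\star_k)$. This gives $\|\pi_{-M}-\pi^\star_{-M}\|_2\ge|\pi_M-\pi^\star_M|/\sqrt{M-1}$.
\end{itemize}
Either way, $c_\infty:=\min_j\pi^\star_j/(2\sqrt{M-1})$ works. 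This value is at most $\min_j\pi^\star_j$, which is consistent with the crude bound in the statement. A sharper constant, approaching the distance from $\pi^\star$ to $\partial\Delta_M$ in the appropriate norm, follows by letting $R\to\infty$ so that the small coordinate tends to $0$.

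\emph{Main obstacle.} The only delicate point is the reference coordinate $M$, which is absent from $\nabla V$. The conservation identity in Step 3 handles it. An alternative route, which I keep as a fallback, is by contradiction. If $\|\nabla V(S_n)\|\to0$ along a sequence with $V(S_n)\to\infty$, then $\pi(S_n)\to\pi^\star$, since $\pi_{-M}$ determines $\pi$. But then the continuity of the cross-entropy at the interior point $\pi^\star$ gives $V(S_n)\to H(\pi^\star)<\infty$, a contradiction.
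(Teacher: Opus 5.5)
Your argument is correct, and it takes a more explicit route than the paper does.

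\textbf{The paper's route.} The paper gives a qualitative compactness argument. As $\|S\|\to\infty$, every subsequence of $\pi(S)$ accumulates on a proper face of $\Delta_M$, and such faces are at positive distance from the interior point $\pi^\star$. Hence $\{S:\|\nabla V(S)\|<c\}$ is bounded for small $c$. Continuity then places this bounded set inside some sublevel set $\{V\le R\}$. This is essentially your fallback by contradiction.

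\textbf{What your route adds.} Your cross-entropy identity $V(S)=-\sum_j\pi^\star_j\log\pi_j(S)$ makes explicit three things the paper's sketch leaves implicit.
\begin{enumerate}
\item It gives explicit values of $R$ and $c_\infty$.
\item It proves compactness of $C_R$ (coercivity of $V$). The paper later uses this without proof when it treats $C_R$ as a compact set in the Foster--Lyapunov step of Theorem~\ref{thm:sufficient-condition-ergodicity-transience}.
\item It handles the reference coordinate correctly, which $\nabla V$ omits.
\end{enumerate}
The third point matters quantitatively. As $\pi_M(S)\to 0$, $\|\pi_{-M}(S)-\pi^\star_{-M}\|$ can approach $\pi^\star_M/\sqrt{M-1}$; for uniform $\pi^\star$ this is $1/(M\sqrt{M-1})$. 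So for $M\ge 3$ the value $\min_j\pi^\star_j$ can exceed what is attainable. The statement's ``crude bound'' must therefore be read as an upper bound on admissible $c_\infty$, which is how you read it. The paper's appeal to the Euclidean distance from $\pi^\star$ to the faces glosses over this projection.

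\textbf{One small slip.} The choice $R\ge M\log 2$ does not guarantee $e^{-R/(M\pi^\star_j)}\le \pi^\star_j/2$. The actual requirement is $R\ge M\pi^\star_j\log(2/\pi^\star_j)$, and $p\log(2/p)>\log 2$ for $p\in(1/2,1)$. Since $\max_{0<p\le 1}p\log(2/p)=2/e<1$, taking $R\ge M$ repairs it.
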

    
    \begin{proof}
    As $\|S\|\to\infty$, every sequence has a subsequence along which $\pi(S)$ converges to a point on some proper face of $\Delta_M$ (softmax saturation). Since $\pi^\star\in\Delta_M^\circ$, its Euclidean distance to any proper face is strictly positive. Hence the preimage of any sufficiently small ball around $\pi^\star$ is bounded, which yields the claim by continuity of $\nabla V$.
    \end{proof}

\subsection{Proof of Proposition \ref{prop:transience-no-equilibrium}}

\begin{proof}
    By strict separation of a point from a compact convex set, the hypothesis $0\notin\mathrm{conv}\{d(a_j)\}$ yields a vector $w\in\R^{M-1}$ and a constant $c>0$ with
    $w^\top d(a_j)\ge c$ for all $j$.
    Therefore, for every $S$,
    \[
    w^\top \xi(S)
    =\sum_{j=1}^M \pi_j(S)\,w^\top d(a_j)
    \ \ge\ \sum_{j=1}^M \pi_j(S)\,c
    \ =\ c.
    \]
    
    Define the one–step noise as $\varepsilon_{t+1}:=Z_{t+1}-\xi(S_t)$. By Assumption \ref{as:small-noise}, we have
    $\E[(w^\top \varepsilon_{t+1})^2\mid S_t]\le \|w\|^2\,\bar\sigma^2<\infty$.
    Let $M_t:=w^\top S_t$.
    From the decomposition
    \[
    M_{t+1}-M_t
    =\ w^\top Z_{t+1}
    =\ w^\top \xi(S_t)\ +\ w^\top \varepsilon_{t+1},
    \]
    we obtain
    \[
    M_t
    = M_0+\sum_{k=0}^{t-1} w^\top \xi(S_k)\;+\;\sum_{k=0}^{t-1} w^\top \varepsilon_{k+1}
    \ \ \ge\ M_0 + c\,t + \sum_{k=0}^{t-1} w^\top \varepsilon_{k+1}.
    \]
    The martingale difference array $\{w^\top\varepsilon_{t}\}$ is square-integrable with uniformly bounded conditional variances, hence by the strong law for martingale differences,
    \[
    \frac{1}{t}\sum_{k=0}^{t-1} w^\top \varepsilon_{k+1}\to 0 \quad a.s.
    \] 
    Dividing by $t$ gives
    \[
    \frac{1}{t}M_t\ \ge\ c + o(1)\ \to\ \mu \quad a.s.
    \]
    with $\mu\ge c>0$, so $M_t\to +\infty$.
    By Cauchy–Schwarz, $|M_t|\le \|w\|\,\|S_t\|$, hence $\|S_t\|\to\infty$.
    In particular, for any compact $K\subset\mathbb{R}^{M-1}$,
    $\sup_{x\in K} w^\top x<\infty$, so $S_t\in K$ can occur only finitely many times a.s.; thus the chain is transient.
    
    For the posterior, set $\pi_t=\psi(S_t)$.
    Since $\|S_t\|\to\infty$, consider any subsequence $t_n$ along which $u:=\lim_{n\to\infty} S_{t_n}/\|S_{t_n}\|$ exists.
    Write the softmax in coordinates $k=1,\dots,M-1$ with baseline $M$:
    \[
    \pi^{(k)}(S)=\frac{e^{S^{(k)}}}{1+\sum_{i=1}^{M-1} e^{S^{(i)}}},\quad
    \pi^{(M)}(S)=\frac{1}{1+\sum_{i=1}^{M-1} e^{S^{(i)}}}.
    \]
    If $\max_k u^{(k)}>0$, then $S^{(k)}_{t_n}$ for maximizers $k$ grow like $+\|S_{t_n}\|$, forcing $\pi^{(M)}(S_{t_n})\to 0$; if $\max_k u^{(k)}<0$, then all $S^{(k)}_{t_n}\to -\infty$ and $\pi^{(M)}(S_{t_n})\to 1$.
    In both cases, $\mathrm{dist}(\pi_{t_n},\partial\Delta_M)\to 0$.
    As every diverging sequence admits such a subsequence, we conclude $\mathrm{dist}(\pi_t,\partial\Delta_M)\to 0$ a.s., and any limit point lies on a face $I$.
    \end{proof}
    
\subsection{Proof of Proposition \ref{prop:dynamics_succession}}

\begin{proof}
By smoothness of the softmax function, \(\pi_{t+1}=\psi(S_t+Z_{t+1})=\psi(S_t)+J(\pi_t)Z_{t+1}+O(|Z_{t+1}|^2)\). Near face \(\Delta_I\), all Jacobian entries \(J(\pi_t)\) connecting to \(I^c\) are \(o(1)\), and the drift satisfies \(\xi(S_t)=\xi_I(\pi_{I,t})+o(1)\). Therefore, the full one-step update restricted to \(I\) matches the face scheme up to vanishing error, while the off-face increment vanishes.
\end{proof}

\subsection{Proof of Proposition \ref{prop:kernel_simplex}}

\begin{proof}
    Let $Z(A) \subset \mathbb{R}^m$ be the 1-dimensional kernel of $A$ (a line through the origin). Because the entry distribution is continuous, $\mathbb{P}(\exists i : z_i = 0) = 0$ for a direction vector $z \in Z(A) \setminus \{0\}$, so almost surely all coordinates of $z$ are nonzero.
    
    Observe that $\ker(A) \cap \Delta^m \neq \varnothing$ if and only if there exists a nonnegative vector in the line $Z(A)$ whose coordinates sum to $1$. Since scaling along $Z(A)$ is free, this occurs if and only if $z$ has all coordinates with the same sign. Then we can scale $z$ so that $\sum_i x_i = 1$. Conversely, if $z$ has mixed signs, no positive scaling makes all coordinates $\geq 0$. Thus the event of interest is the sign pattern of $z$ lies in $\{(+,\ldots,+), (-,\ldots,-)\}$.
    
    By symmetry: for any sign vector $s \in \{\pm 1\}^m$, let $D_s = \operatorname{diag}(s_1, \ldots, s_m)$. Because the entries of $A$ are i.i.d.\ and symmetric, $A D_s \stackrel{d}{=} A$. If $z$ spans $\ker(A)$, then $D_s z$ spans $\ker(A D_s)$. Hence the distribution of the sign pattern of the kernel direction is invariant under independent flips of the $m$ coordinate signs. 
    
    Since the boundary (some coordinate zero) has probability $0$, each of the $2^m$ orthants occurs with equal probability $2^{-m}$. Exactly two orthants correspond to ``all coordinates have the same sign.'' Therefore\footnote{One can also use the Wendel's Theorem to prove this proposition.}
    \[
    \mathbb{P}\big(\ker(A) \cap \Delta^m \neq \varnothing\big) = \mathbb{P}(\text{all signs equal}) = \frac{2}{2^m} = 2^{1-m}. \qedhere
    \]
\end{proof}

\subsection{Proof of Proposition \ref{prop:limiting_action_distributions}}
\begin{proof}
    The result follows directly from the convergence theorems established earlier. In cases (i) and (iii), the posterior converges to a degenerate distribution, leading to deterministic action selection. In case (ii), the posterior converges to a Bernoulli distribution, giving rise to randomized action selection with probability $p^*$. In case (iv), the ergodic behavior of the posterior leads to time-averaged action probabilities determined by the stationary distribution.
    \end{proof}

\subsection{Proof of Proposition \ref{prop:limiting_average_regret}}
\begin{proof}
The oracle policy achieves expected reward $g(1)$ per period. Thompson Sampling's expected reward per period in the limit is determined by its limiting action distribution:
\begin{itemize}
    \item Cases (i) and (iv): Action 1 selected with probability 1 $\Rightarrow$ expected reward $g(1)$ $\Rightarrow$ regret 0.
    \item Case (ii): Action 2 selected with probability 1 $\Rightarrow$ expected reward $g(2)$ $\Rightarrow$ regret $g(1) - g(2)$.
    \item Case (iii): Action selection converges to a Bernoulli distribution. With probability $p^*$, the algorithm converges to always selecting action 1 (regret 0), and with probability $1-p^*$, it converges to always selecting action 2 (regret $g(1) - g(2)$).
    \item Case (v): Expected reward is $\alpha^* g(1) + (1-\alpha^*) g(2)$ $\Rightarrow$ regret $(1-\alpha^*)[g(1) - g(2)]$.
\end{itemize}
\end{proof}

\subsection{Proof of Proposition \ref{prop:MM_limiting_action_dist}}
\begin{proof}
    Let $\pi_t \in \Delta_M$ denote the posterior belief at time $t$. Under Thompson Sampling, the decision maker samples an index $J_t \sim \pi_t$ and chooses action $A_t = \phi(\theta^{(J_t)})$. The conditional probability of choosing action $a$ given the belief $\pi_t$ is defined as:
    \begin{equation}
        p_a(\pi_t) \;:=\; \mathbb{P}(A_t = a \mid \pi_t) \;=\; \sum_{j=1}^M \pi_t(\theta^{(j)}) \, \mathbf{1}\{\phi(\theta^{(j)}) = a\}.
    \end{equation}
    The unconditional probability of selecting action $a$ is $\mathbb{P}(A_t = a) = \mathbb{E}[p_a(\pi_t)]$, where the expectation is taken over the stochastic evolution of the belief $\pi_t$.
    
    The classification results in Section \ref{sec:multi-model-setting} establish that the sequence of posterior beliefs $\{\pi_t\}_{t \ge 0}$ converges in distribution (weakly) to a limiting random variable $\pi_\infty$ supported on $\Delta_M$. The law of $\pi_\infty$ admits the decomposition:
    \begin{equation}
        \mathcal{L}(\pi_\infty) = \sum_{v \in V_\star} W_v \delta_{e_v} + \sum_{I \in \mathcal{F}_\star} W_I \mu_I,
    \end{equation}
    where $V_\star$ is the set of terminal vertices, $\mathcal{F}_\star$ is the set of ergodic faces, $\mu_I$ is the unique invariant distribution on face $\Delta_I$, and $W$ represent the basin of attraction probabilities.
    
    The function $\pi \mapsto p_a(\pi)$ is a linear functional of the belief vector; therefore, it is continuous and bounded on the compact simplex $\Delta_M$. By the definition of weak convergence (the Portmanteau theorem), the convergence of measures implies the convergence of expectations for bounded continuous functions:
    \begin{align}
        \alpha(a) &:= \lim_{t \to \infty} \mathbb{P}(A_t = a) = \lim_{t \to \infty} \mathbb{E}[p_a(\pi_t)] \nonumber \\
        &= \mathbb{E}[p_a(\pi_\infty)] = \int_{\Delta_M} p_a(\pi) \, \mathcal{L}(\pi_\infty)(d\pi).
    \end{align}
    Substituting the decomposition of $\mathcal{L}(\pi_\infty)$:
    \begin{equation}
        \alpha(a) = \sum_{v \in V_\star} W_v \int_{\Delta_M} p_a(\pi) \delta_{e_v}(d\pi) + \sum_{I \in \mathcal{F}_\star} W_I \int_{\Delta_M} p_a(\pi) \mu_I(d\pi).
    \end{equation}
    For the vertex terms, the measure is a point mass at the unit vector $e_v$. The action probability simplifies to:
    \begin{equation}
        p_a(e_v) = \sum_{j=1}^M \delta_{jv} \mathbf{1}\{\phi(\theta^{(j)}) = a\} = \mathbf{1}\{\phi(\theta^{(v)}) = a\}.
    \end{equation}
    For the face terms, the integral is taken with respect to the invariant measure $\mu_I$ supported on $\Delta_I$. Combining these yields the stated result:
    \begin{equation}
        \alpha(a) = \sum_{v \in V_\star} W_v \, \mathbf{1}\{\phi(\theta^{(v)})=a\} + \sum_{I \in \mathcal{F}_\star} W_I \int_{\Delta_I} p_a(\pi) \, \mu_I(d\pi).
    \end{equation}
\end{proof}

\subsection{Proof of Proposition \ref{prop:MM_limiting_regret}}
\begin{proof}
    Let $r(a) := g(a^\star) - g(a)$ be the expected per-period regret of action $a$. The average expected regret over horizon $T$ is:
    \begin{equation}
        \overline{R}_T := \frac{1}{T} \sum_{t=0}^{T-1} \mathbb{E}[g(a^\star) - g(A_t)] = \frac{1}{T} \sum_{t=0}^{T-1} \sum_{a \in \mathcal{A}} \mathbb{P}(A_t = a) \, r(a).
    \end{equation}
    By the linearity of expectation, we factor out the summation over actions. From Proposition \ref{prop:MM_limiting_action_dist}, we established that the sequence of action probabilities converges: $\lim_{t \to \infty} \mathbb{P}(A_t = a) = \alpha(a)$.
    
    Since the sequence $\{\mathbb{P}(A_t = a)\}_{t \ge 0}$ converges to $\alpha(a)$, its Cesàro mean (the time average) also converges to $\alpha(a)$. Therefore:
    \begin{align}
        \overline{R} &:= \lim_{T \to \infty} \overline{R}_T = \sum_{a \in \mathcal{A}} r(a) \left( \lim_{T \to \infty} \frac{1}{T} \sum_{t=0}^{T-1} \mathbb{P}(A_t = a) \right) \nonumber \\
        &= \sum_{a \in \mathcal{A}} \alpha(a) \, r(a).
    \end{align}
    Substituting the explicit form of $\alpha(a)$ derived in Proposition \ref{prop:MM_limiting_action_dist} into this equation yields the decomposed regret formula:
    \begin{equation}
         \overline{R} = \sum_{v \in V_\star} W_v \, r(\phi(\theta^{(v)})) + \sum_{I \in \mathcal{F}_\star} W_I \sum_{a \in \mathcal{A}} \left( \int_{\Delta_I} p_a(\pi) \mu_I(d\pi) \right) r(a).
    \end{equation}
    This general form encompasses all specific regimes (Interior Ergodic, Uniform Dominance, Vertex Selection, and Face-Ergodic) by restricting the sets $V_\star$ and $\mathcal{F}_\star$ accordingly.
\end{proof}

\subsection{Proof of Proposition \ref{prop:uniqueness_simple}}
    
    \begin{proof}
    The fixed point condition can be rewritten as the linear system $G \pi_{-M}^* = -d_M$ where $\pi_{-M}^* = (\pi_1^*, \ldots, \pi_{M-1}^*)^T$.
    
    ($\Rightarrow$) Suppose the interior fixed point is unique. If $G$ does not have full rank, then $\ker(G) \neq \{0\}$. For any non-zero $v \in \ker(G)$, we have $G v = 0$, so $G(\pi_{-M}^* + \epsilon v) = G\pi_{-M}^* = -d_M$ for any $\epsilon$. For sufficiently small $|\epsilon|$, both $\pi_{-M}^* + \epsilon v$ and $\pi_{-M}^* - \epsilon v$ yield valid interior solutions, contradicting uniqueness.
    
    ($\Leftarrow$) Suppose $G$ has full rank. Then $G$ is invertible, so the system $G \pi_{-M}^* = -d_M$ has a unique solution $\pi_{-M}^* = -G^{-1} d_M$. This uniquely determines $\pi_M^* = 1 - \mathbf{1}^T \pi_{-M}^*$, yielding a unique interior fixed point.
\end{proof}

\section{Numerical Simulations for Two-Arm Cases}
\label{app:simulations}

We present numerical simulations for the two-arm bandit problem discussed in Section \ref{sec:two_arm_dynamics}. We consider a Gaussian bandit setting with two models $\nu$ and $\gamma$. The true reward distributions are $R_t \mid A_t = i \sim \mathcal{N}(g(i), 1)$. The models believe $R_t \mid A_t = i \sim \mathcal{N}(\theta_i, 1)$ for $\theta \in \{\nu, \gamma\}$. In all cases, we set the time horizon $T=500$ and run $N=300$ Monte Carlo replications.

\subsection{Agreement Case}
In this case, both models agree on the optimal action ($\phi(\nu) = \phi(\gamma) = 1$). We set:
\begin{itemize}
    \item Model parameters: $\nu = (1.0, -1.0)$, $\gamma = (0.6, -0.6)$
    \item True rewards: $g = (0.95, -0.5)$
    \item Expected evidence: $\Delta_1 = 0.06, \Delta_2 = -0.06$
\end{itemize}
As predicted by Theorem \ref{thm:agreement_case}, the posterior converges to the model with the better fit (model $\nu$ as $\Delta_1 > 0$). Figure \ref{fig:case1_simulation} shows the posterior dynamics and average regret.

\begin{figure}[H]
    \centering
    \begin{subfigure}[b]{0.48\textwidth}
        \centering
        \includegraphics[width=\textwidth]{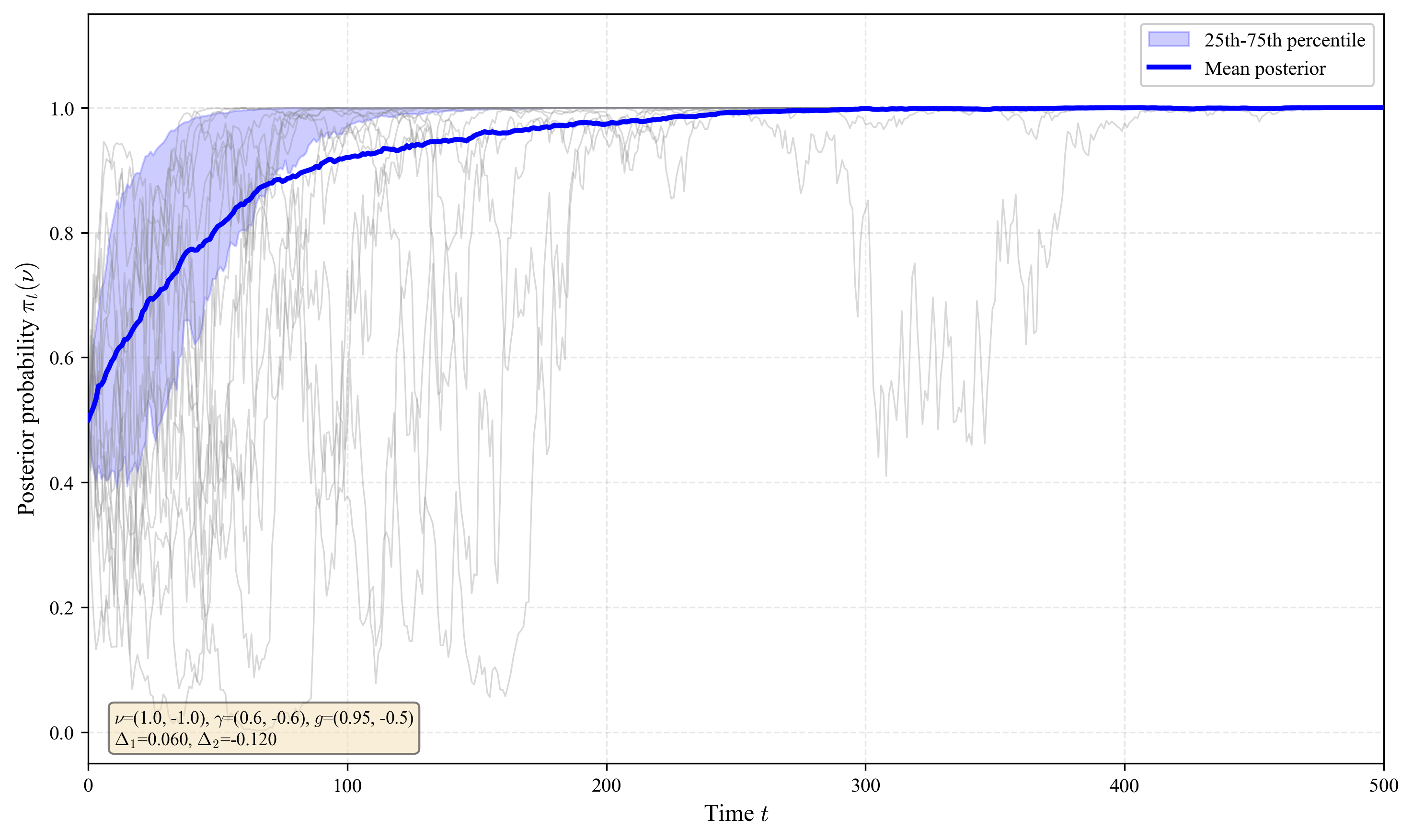}
        \caption{Posterior dynamics}
        \label{fig:case1_posterior}
    \end{subfigure}
    \hfill
    \begin{subfigure}[b]{0.48\textwidth}
        \centering
        \includegraphics[width=\textwidth]{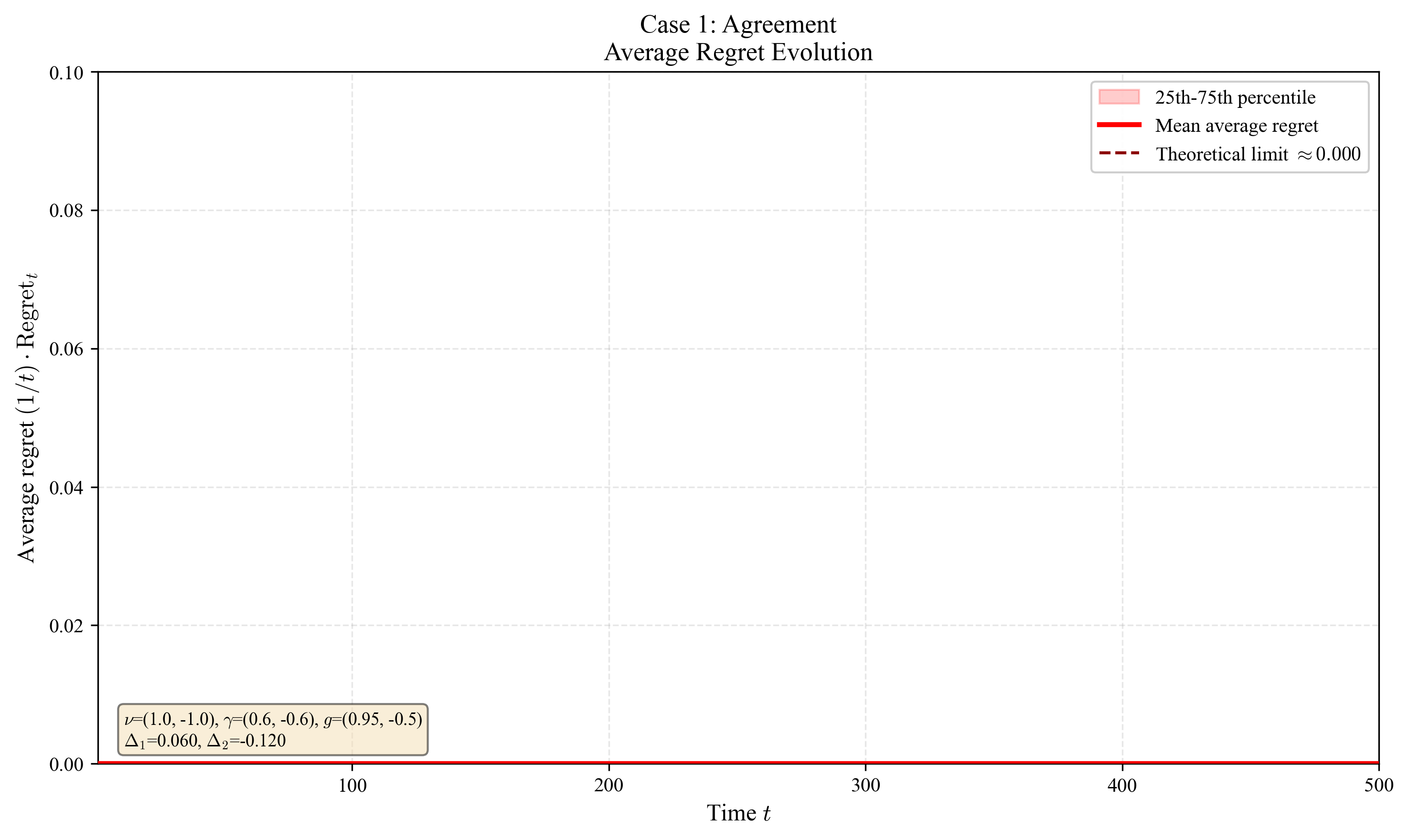}
        \caption{Average regret dynamics}
        \label{fig:case1_regret}
    \end{subfigure}
    \caption{Agreement case simulation results.}
    \label{fig:case1_simulation}
\end{figure}

\subsection{Self-Confirming Case}
In the self-confirming case, models disagree on the optimal action ($\phi(\nu) = 1, \phi(\gamma) = 2$). We set:
\begin{itemize}
    \item Model parameters: $\nu = (1.0, -1.0)$, $\gamma = (-1.0, 1.0)$
    \item True rewards: $g = (0.6, 0.2)$
    \item Expected evidence: $\Delta_1 = 1.2, \Delta_2 = -0.4$
\end{itemize}
Theorem \ref{thm:self_confirming} predicts convergence to a Bernoulli distribution. Since $\Delta_1 > 0 > \Delta_2$, each model receives positive feedback when its preferred action is chosen. Figure \ref{fig:case2_simulation} illustrates the posterior converging to either 0 or 1.

\begin{figure}[H]
    \centering
    \begin{subfigure}[b]{0.48\textwidth}
        \centering
        \includegraphics[width=\textwidth]{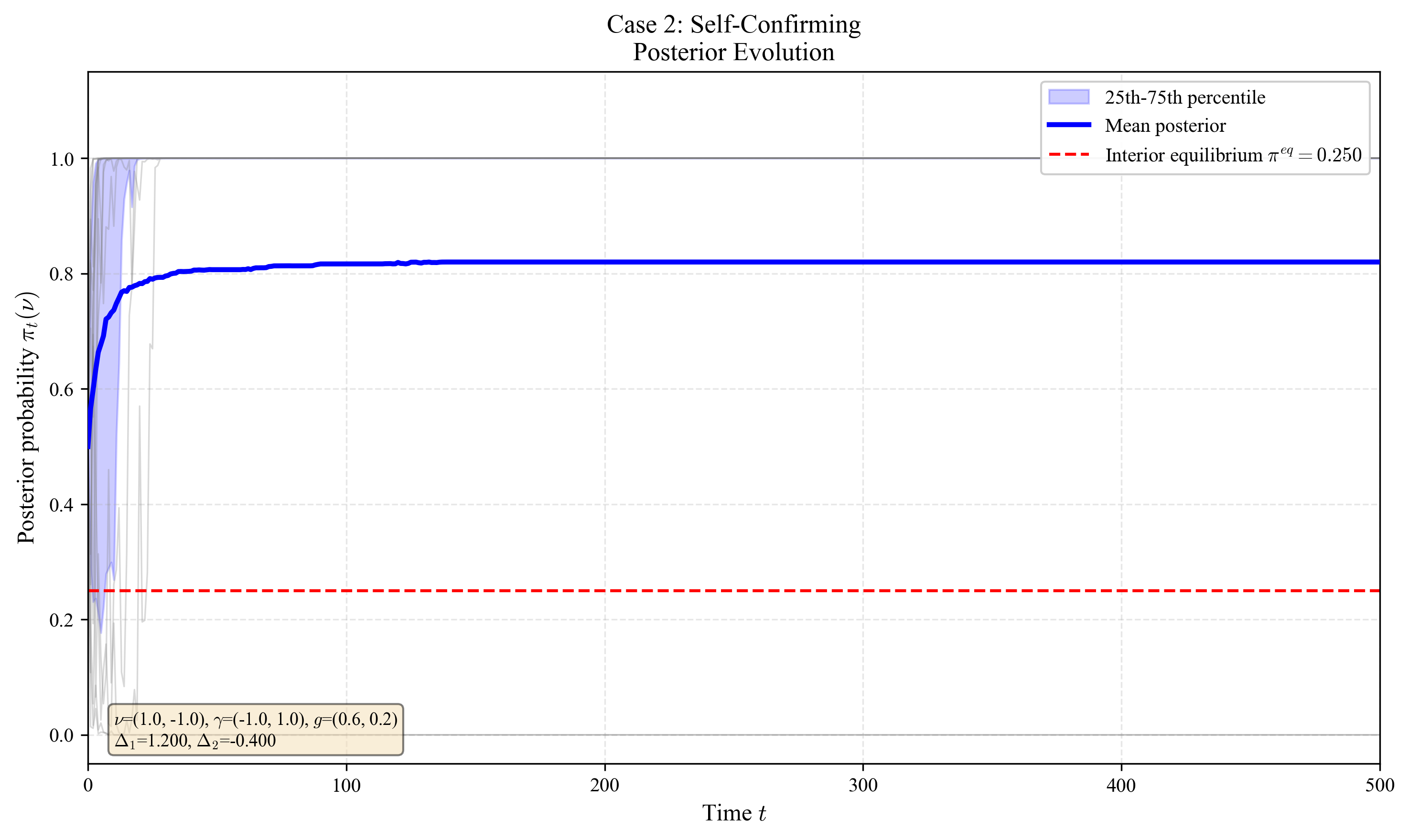}
        \caption{Posterior dynamics}
        \label{fig:case2_posterior}
    \end{subfigure}
    \hfill
    \begin{subfigure}[b]{0.48\textwidth}
        \centering
        \includegraphics[width=\textwidth]{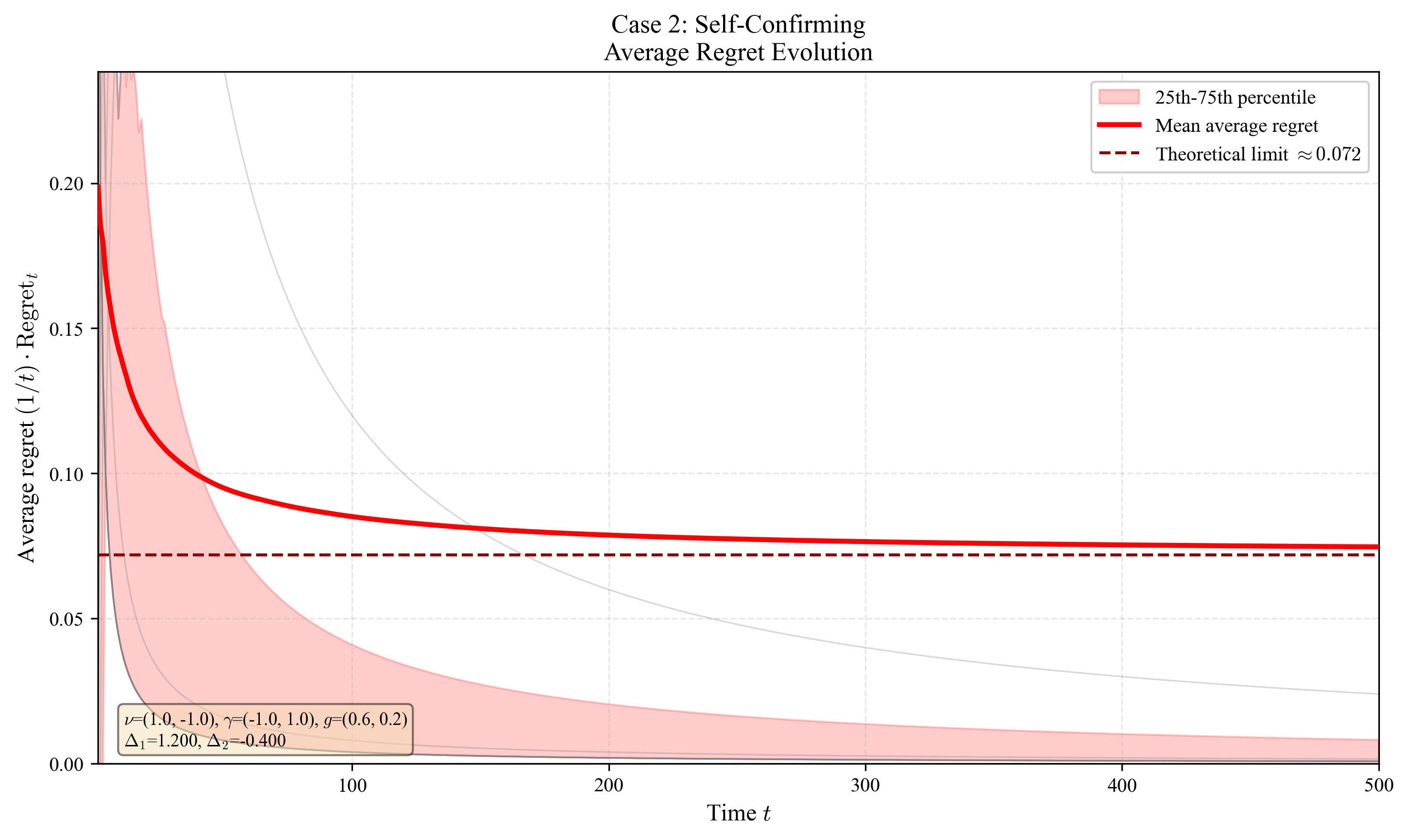}
        \caption{Average regret dynamics}
        \label{fig:case2_regret}
    \end{subfigure}
    \caption{Self-confirming case simulation results.}
    \label{fig:case2_simulation}
\end{figure}

\subsection{Uniform Dominance Case}
In the uniform dominance case, models disagree ($\phi(\nu) = 1, \phi(\gamma) = 2$), but model $\nu$ is favored under both actions. We set:
\begin{itemize}
    \item Model parameters: $\nu = (1.0, -1.0)$, $\gamma = (-1.0, 1.0)$
    \item True rewards: $g = (0.7, -0.7)$
    \item Expected evidence: $\Delta_1 = 1.4, \Delta_2 = 1.4$
\end{itemize}
Theorem \ref{thm:uniform_dominance} predicts almost sure convergence to $\pi = 1$. Figure \ref{fig:case3_simulation} confirms that model $\nu$ uniformly dominates.

\begin{figure}[H]
    \centering
    \begin{subfigure}[b]{0.48\textwidth}
        \centering
        \includegraphics[width=\textwidth]{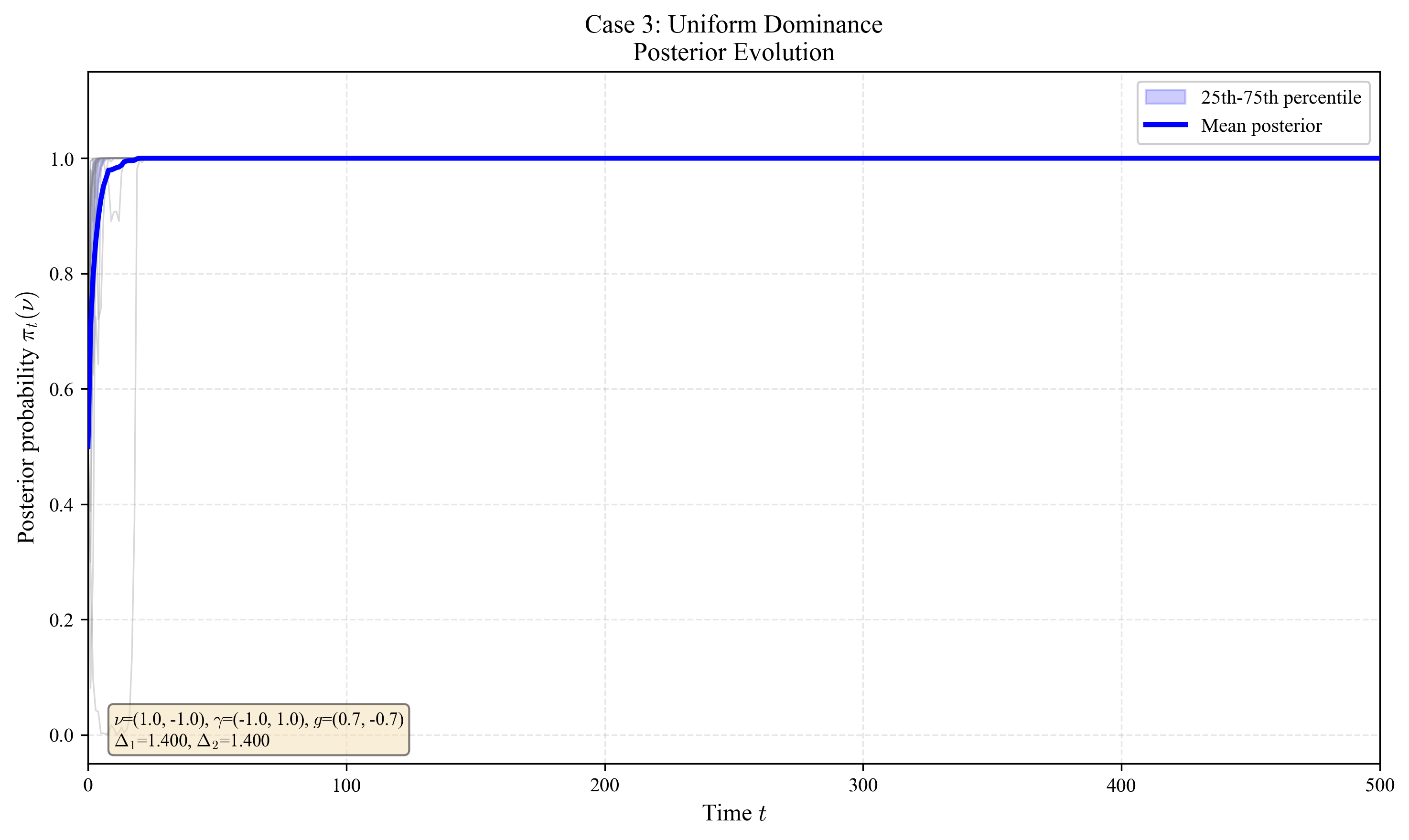}
        \caption{Posterior dynamics}
        \label{fig:case3_posterior}
    \end{subfigure}
    \hfill
    \begin{subfigure}[b]{0.48\textwidth}
        \centering
        \includegraphics[width=\textwidth]{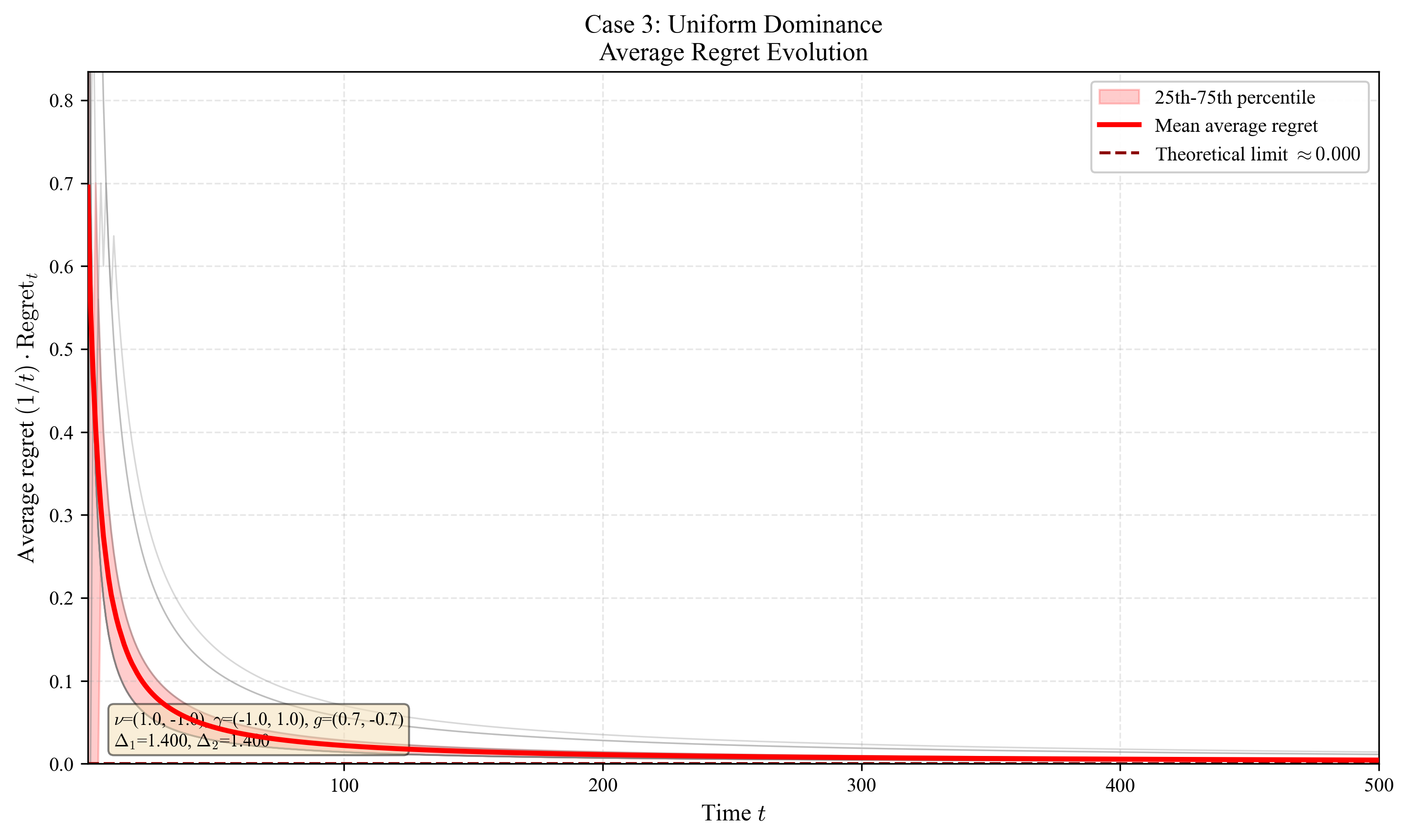}
        \caption{Average regret dynamics}
        \label{fig:case3_regret}
    \end{subfigure}
    \caption{Uniform dominance case simulation results.}
    \label{fig:case3_simulation}
\end{figure}

\subsection{Self-Defeating Case}
In the self-defeating case, models disagree ($\phi(\nu) = 1, \phi(\gamma) = 2$), and each model receives negative feedback when selected. We set:
\begin{itemize}
    \item Model parameters: $\nu = (1.0, -1.0)$, $\gamma = (-1.0, 1.0)$
    \item True rewards: $g = (-0.2, -0.6)$
    \item Expected evidence: $\Delta_1 = -0.4, \Delta_2 = 1.2$
\end{itemize}
Theorem \ref{thm:self_defeating} predicts convergence to a non-degenerate stationary distribution. Figure \ref{fig:case4_simulation} shows the persistent posterior oscillations and positive average regret.

\begin{figure}[H]
    \centering
    \begin{subfigure}[b]{0.48\textwidth}
        \centering
        \includegraphics[width=\textwidth]{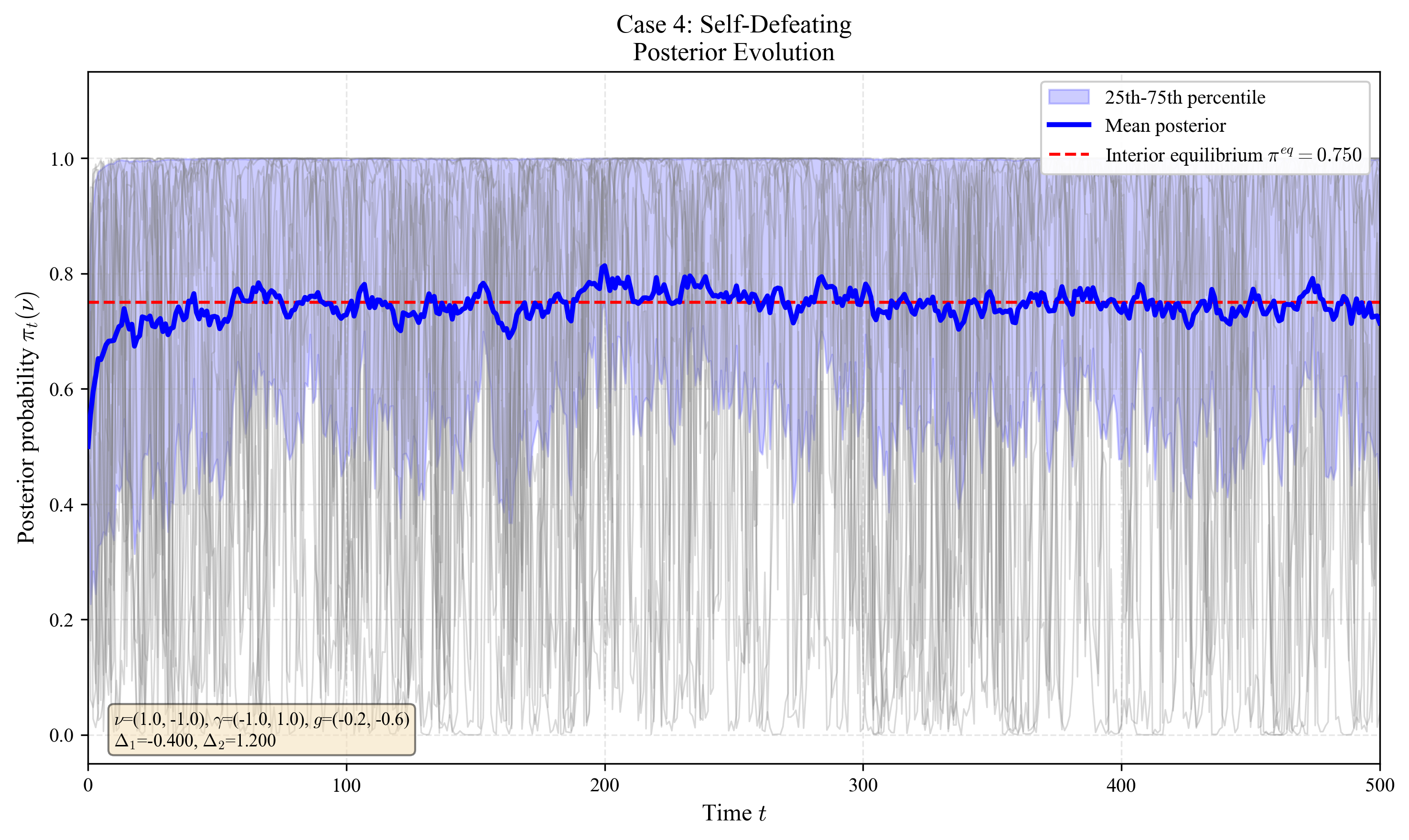}
        \caption{Posterior dynamics}
        \label{fig:case4_posterior}
    \end{subfigure}
    \hfill
    \begin{subfigure}[b]{0.48\textwidth}
        \centering
        \includegraphics[width=\textwidth]{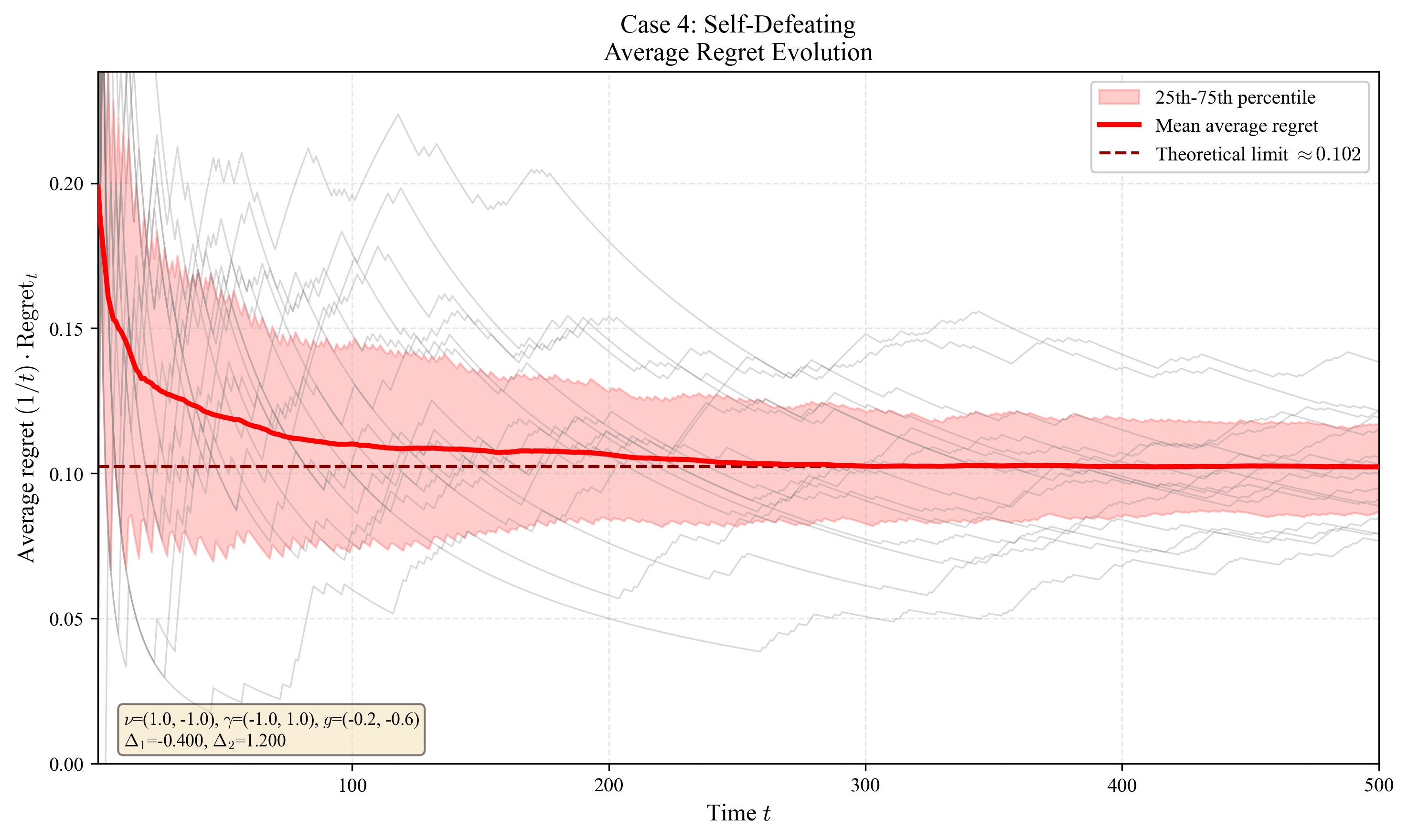}
        \caption{Average regret dynamics}
        \label{fig:case4_regret}
    \end{subfigure}
    \caption{Self-defeating case simulation results.}
    \label{fig:case4_simulation}
\end{figure}

\end{document}